\documentclass[12pt,a4wide]{book}\linespread{1.3}
\usepackage{graphicx}
\usepackage{hyperref}
\usepackage{amsmath}
\usepackage{makecell}
\usepackage{parskip}
\usepackage{mathtools}
\usepackage{float}
\usepackage{algorithm}
\usepackage{algpseudocode}
\usepackage{enumerate}
\usepackage{adjustbox}
\usepackage{longtable}
\usepackage{enumitem}
\usepackage{booktabs}
\usepackage{amssymb}
\setcounter{tocdepth}{3}
\usepackage{pdfpages}
\usepackage{adjustbox}
\usepackage{rotating}
\usepackage{amssymb}
\usepackage{mathrsfs}
\usepackage{multirow}
\usepackage[font=small,labelfont=bf]{caption}
\usepackage[font=footnotesize,caption=false]{subfig}
\usepackage{ lscape}
\usepackage{etoolbox}
\usepackage{array, booktabs, caption}
\usepackage[font=small,labelfont=bf]{caption}
\usepackage[font=footnotesize,caption=false]{subfig}
\usepackage{makecell}

\usepackage[T1]{fontenc}
\usepackage{algpseudocode}
\usepackage[symbol]{footmisc}

\usepackage{makecell}
\usepackage{mathtools}
\usepackage[english]{babel}
\usepackage[T1]{fontenc}
\usepackage{enumerate}
\usepackage{longtable}
\usepackage{enumitem}
\usepackage{booktabs}
\usepackage{amssymb}
\setcounter{tocdepth}{3}
\usepackage{makecell}

\usepackage[T1]{fontenc}
\setcounter{secnumdepth}{5}
\usepackage{color}
\usepackage{xcolor}

\usepackage{graphicx}
\usepackage{mathtools}
\usepackage{amsfonts}
\usepackage{amssymb}
\usepackage{graphics}
\usepackage{color}
\usepackage{pgfplots}
\usepackage{hyperref}
\usepackage{listings}

\usepackage{booktabs}
\usepackage{pgfplots}
\usepackage{amssymb}
\usepackage{mathrsfs}
\usepackage{adjustbox}
\usepackage{amsmath}
\usepackage{hyperref}
\usepackage{longtable}
\usepackage{multirow}
\usetikzlibrary{arrows,automata}

\usepackage{float}
\usepackage[caption = false]{subfig}

\usepackage{listings}

\usepackage{url}
\usepackage{multirow}
\usepackage{multicol}
\usepackage{amssymb}
\usepackage{mathrsfs}
\usepackage{enumerate}
\usepackage{enumitem}
\newsavebox{\measurebox}

\usepackage{cite}
\usepackage{epsf} 
\usepackage{latexsym} 
\usepackage{graphicx}
\usepackage{amssymb}
\usepackage{amsmath,amssymb}
\usepackage{amsfonts}
\usepackage{fancyhdr}
\fancyhf{}
\fancyhead[LE,RO]{\bfseries\thepage}
\fancyhead[LO]{\nouppercase {\rightmark}}           
\fancyhead[RE]{\nouppercase {\leftmark}}				

\usepackage{enumitem}
\usepackage{chngcntr}
\usepackage{floatpag}
\usepackage{latexsym}
\usepackage{tabularx, booktabs} 
\usepackage[title,titletoc,toc]{appendix}
\usepackage{subfig}
\usepackage{amsmath}
\usepackage{mathrsfs}
\usepackage{afterpage}
\newcommand\blankpage{%
    \null
    \thispagestyle{empty}%
    \addtocounter{page}{-1}%
    \newpage}
    
\makeatletter
\newcommand\fs@norules{\def\@fs@cfont{\bfseries}\let\@fs@capt\floatc@ruled
  \def\@fs@pre{}%
  \def\@fs@post{}%
  \def\@fs@mid{\kern3pt}%
  \let\@fs@iftopcapt\iftrue}      
 
\usepackage{longtable}  
  
\usepackage{pgfplots}  
\usepackage{multirow}
  
\usepackage{colortbl}


\pagestyle{fancy}


\definecolor{webgreen}{rgb}{0, 0.5, 0} 
\definecolor{webblue}{rgb}{0, 0, 0.5} 
\definecolor{webred}{rgb}{0.0, 0, 0} 
\definecolor{webred}{rgb}{0.5, 0, 0} 
\definecolor{webblack}{rgb}{0, 0, 0} 

\usepackage{booktabs}
\usepackage{makecell}

\usepackage{longtable}



\topmargin      10.0mm
\oddsidemargin  10.0mm
\evensidemargin 10.0mm
\textheight     213.5mm
\textwidth      143.0mm
\columnsep        4.1mm
\parindent        1.0em
\headsep          6.3mm
\headheight        12pt
\lineskip           1pt
\normallineskip     1pt

\fancyhfoffset[L]{0.5cm}

\makeatother

\usepackage[toc,nonumberlist]{glossaries}
\makeglossaries
\glsdisablehyper

\usepackage{datetime2}

\makeatletter
\newcommand{\monthyeardate}{%
  \DTMenglishmonthname{\@dtm@month}, \@dtm@year
}
\makeatother

\newcolumntype{Y}{>{\centering\arraybackslash}X}
\newtheorem{theoremm}{Theorem}[chapter]
\newtheorem{eqed}{Example}[chapter]
\newtheorem {lemmaa}{Lemma}[chapter]

\newtheorem{defnn}{Definition}[chapter]
\newtheorem {corollaryy}{Corollary}[chapter]
\newtheorem {axiomm}{Axiom}[chapter]
\newtheorem {inferencee}{Inference}[chapter]

\newtheorem {hypothesiss}{Hypothesis}[chapter]
\newtheorem {conjecturee}{Conjecture}[chapter]

\newtheorem {prp}{Property}[chapter]

\newenvironment{proof}{\noindent {\bf Proof :\ } }{\hfill$\Box$ }
\newenvironment{infprf}{\noindent {\bf Informal Proof :\ } }{$\Box$ }

\newenvironment{lemma}{\begin{lemmaa} \sl}{\end{lemmaa}}

\newenvironment{Informal Proof}{\begin{prp} \sl}{\end{prp}}

\newenvironment{proof of correctness}{\noindent {\bf Proof of Correctness :\ } }{\hfill$\Box$ }

\newenvironment{sketch of proof}{\noindent {\bf Sketch of proof :\ } }{\hfill$\Box$ }




\newtheorem {definition}{Definition}

\newtheorem {procd}{Procedure}

\newcommand{\eat}[1]{}

\definecolor{mygreen}{rgb}{0,0.6,0}
\definecolor{mygray}{rgb}{0.5,0.5,0.5}
\definecolor{mymauve}{rgb}{0.58,0,0.82}
\lstset{ %
  backgroundcolor=\color{white},   
  basicstyle=\footnotesize,        
  breakatwhitespace=false,         
  breaklines=true,                 
  captionpos=b,                    
  commentstyle=\color{mygreen},    
  deletekeywords={...},            
  escapeinside={\%*}{*)},          
  extendedchars=true,              
  frame=single,                       
  keepspaces=true,                 
  keywordstyle=\color{blue},       
  language=Octave,                 
  morekeywords={*,...},           
  numbers=left,                    
  numbersep=5pt,                   
  numberstyle=\tiny\color{mygray}, 
  rulecolor=\color{black},         
  showspaces=false,                
  showstringspaces=false,          
  showtabs=false,                  
  stepnumber=1,                    
  stringstyle=\color{mymauve},     
  tabsize=2,                       
  title=\lstname                   
}

\setlength{\parskip}{0.05in}

\begin{document}
\sloppy

\pagestyle{empty}
\pagestyle{empty} 
\begin{center} 
\Large{\textbf{\textsf{{Asynchronism in Cellular Automata}}}}\\ 
	\end{center}
	\begin{center} 
	\vspace{4.0cm}
	 \textbf{\normalsize{VIRENDRA KUMAR GAUTAM}} \\
	\vspace{4.8cm}
        \begin{center}
 		 \includegraphics[height=4cm, width=4cm]{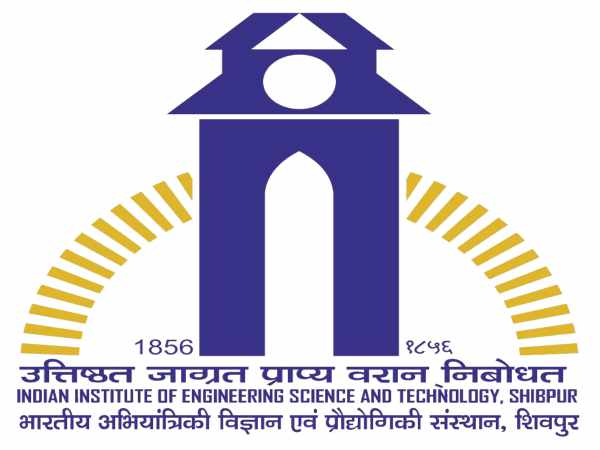}
        \end{center} 
        \vspace{1.2cm}     
      \scriptsize{\textbf{DEPARTMENT OF INFORMATION TECHNOLOGY}}\\ 
      \scriptsize\textbf{{INDIAN INSTITUTE OF ENGINEERING SCIENCE AND TECHNOLOGY, SHIBPUR}} \\
     \scriptsize{\textbf{HOWRAH, WEST BENGAL, INDIA-711103}}\\
               \vspace{0.5cm} 
   \large{\textbf{June, 2024}} \\  
  	\end{center} 

\newpage


\pagestyle{empty} 
\begin{center} 
\Huge{\textbf{\textsf{{Asynchronism in Cellular Automata }}}}\\
        \vspace{1.5cm} 
        \Large{\textbf{Virendra Kumar Gautam}}\\   
        	 \vspace{0.3cm} 
     \large{{Registration No. 2022ITM008}} \\
	     \vspace{1.2cm}
        \small{\em{A report submitted in partial fulfillment for the degree of}}\\     
        \vspace{0.12in} 
        \small{\textbf{Masters of Technology}}\\ 
        \small{\textbf{in}}\\ 
        \small{\textbf{Information Technology}}\\ 
        
     \vspace{1.0cm} 
	\large{\em{Under the supervision of}}\\ 
	\vspace{0.10cm} 
	\large{\textbf{Dr. Sukanta Das}}\\ 
	\small{Associate Professor}\\
	\small{Department of Information Technology}\\
	{Indian Institute of Engineering Science and Technology, Shibpur}\\       
        
        \vspace{0.45in} 
        \begin{center}
		\includegraphics[height=4cm, width=4cm]{logo.jpg}
        \end{center}
		\vspace{0.3cm}

	\normalsize{\textbf{Department of Information Technology}}\\ 
\normalsize{\textbf{Indian Institute of Engineering Science and Technology, Shibpur}}\\
	\normalsize{\textbf{Howrah, West Bengal, India -- 711103}}\\ 
    \normalsize{\textbf{June, 2024}}\\ 
	\end{center} 

\newpage

\pagestyle{empty} 
\begin{center} 
\begin{figure}[h]
\vspace{-1.5cm} 
\centering
	\includegraphics[height=2.5cm, width=2.5cm]{logo.jpg}
\end{figure}
\small{\textbf{Department of Information Technology}}\\ 
	\small{\textbf{Indian Institute of Engineering Science and Technology, Shibpur}}\\
	\small{\textbf{Howrah, West Bengal, India -- 711103}}\\  
\vspace{0.40in} 

{\Large \bf CERTIFICATE OF APPROVAL}\\ 
\end{center} 
\vspace{0.22in} 
\normalsize{\par It is certified that the project report entitled \ 
\emph{\bf Asynchronism in Cellular Automata} is a record of bonafide work carried out by \emph{\bf Virendra Kumar Gautam} under my supervision and guidance in the Department of Information Technology of Indian Institute of Engineering Science and Technology, Shibpur.

In my opinion, the work is satisfactory and has reached the standard necessary for the submission in the final semester of
\textsl{Master of Technology} \ in  
	\ \textsl{Information Technology} \  of  
      the \ Indian Institute of Engineering Science and Technology, Shibpur, and to the best of my knowledge, the results in any form has not been submitted to any other University or Institute for any purpose.
\vspace{0.2in} 
\begin{flushright}
	\includegraphics[height=3\baselineskip]{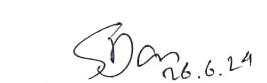}
	 \vspace{-0.05in}
	\par\textbf{(Dr. Sukanta Das)}  \\
		\small{Associate~ Professor} \\
	\small{Dept. of Information Technology} \\
		\small{Indian Institute of Engineering Science and Technology,} \\
		\small{Shibpur, Howrah, West Bengal, India --711103} \\
\end{flushright}
\vspace{0.1in}
\hspace{-0.2in}{Counter signed by:}
\begin{flushright}
	\includegraphics[height=4\baselineskip]{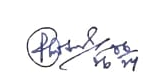} 
	\vspace{-0.05in}
	\par\textbf{(Dr. Prasun Ghosal)}  \\
	\small{Associate~ Professor \& Head} \\
	\small{Dept. of Information Technology} \\
	\small{Indian Institute of Engineering Science and Technology,} \\
	\small{Shibpur, Howrah, West Bengal, India --711103} \\
\end{flushright}

	


\vspace{1mm}
} 
\newpage 

\newpage

\vspace*{\fill}  

\begin{quote}

\centering

\begin{large}


\textbf{\Large{\textit{Dedicated}}}\\

\vspace{0.25cm}

\textbf{\Large{\textit{to}}}\\

\vspace{0.25cm}
\textbf{\Large{\textit{My Family, Supervisor and Mentors, and to all struggling People}}}\\
\vspace{0.25cm}

\vspace{0.25cm}


\vspace{0.25cm}


\end{large}

\end{quote}

\vspace*{\fill} 

\clearpage

\pagestyle{plain}
\pagenumbering{roman}
\afterpage{\blankpage}

\cleardoublepage
\phantomsection
\addcontentsline{toc}{chapter}{Acknowledgement}
\begin{center}
\vspace{-5.5cm}
 \textbf{\Large ACKNOWLEDGEMENT}
\end{center}

First and foremost, I would like to express my heartfelt gratitude to my advisor, Dr. Sukanta Das, Associate Professor, Department of Information Technology, Indian Institute of Engineering Science and Technology (IIEST), Shibpur, for his unwavering and continuous support and assistance throughout the preparation of this dissertation. I deeply appreciate his patience, motivation, enthusiasm, and extensive knowledge. His guidance and supervision has been invaluable in every stage of writing this thesis, and I could not have imagined having a better mentor and advisor. During my M.Tech journey, I have learned lot of things from him, especially how to be disciplined in research and in life.


I would also like to extend my profound respect and appreciation to Dr. Souvik Roy, Assistant Professor, Ahmedabad University, Gujarat, for his insightful suggestions and advice, which have greatly enhanced my analytical and methodological rigor. Additionally, I am immensely grateful to Dr. Sukanya Mukharjee, Assistant Professor, IEM Kolkata, whose intellectual engagement has been tremendously beneficial for me.


The work reported in this dissertation was the result of collaborative efforts. In this project on "Asynchronism in Cellular Automata," I collaborated with Dr. Souvik Roy, who developed the theorem on Skewed Fully Asynchronous Cellular Automata, and I worked with him to explore its dynamics and potential applications.


I am thankful for the financial support provided by the Indian Institute of Engineering Science and Technology, Shibpur, during my M.Tech tenure.


I am also grateful to the Head of the Department, Prof. Prasun Ghosal, Associate Professor, and all the esteemed professors of the Department of Information Technology at IIEST, Shibpur, for their kindness and support at various stages of my research. In addition to my advisor, I would like to express my deepest appreciation to each member of the M.Tech committee for their insights and technical suggestions. My thanks also go to the department's technical and non-technical staff, including Malay-sir, Suman-da, and Dinu-da, for their dedication and support.

I am sincerely thankful to friends for their continuous support and encouragement. In particular, I want to thank my friends Mrityunjay Chauhan, Ranga Prasad Vikruti, and Vinod Jha, whose support was invaluable, especially during challenging times. His encouragement and inspiration were constant through both good and difficult periods. Special thanks to my all of the labmates and batchmates for their friendship, love, and emotional support. I also appreciate Pritam, Subrata Paul, who are research scholars of the department of Information Technology, for their support over the past one and half years.

Most importantly, I express my heartfelt respect and gratitude to my parents, Late Mr. Ramesh Chandra Gautam and Mrs. Madhuri Gautam, for their unwavering support, sacrifice, and inspiration throughout my academic journey. Additionally, I acknowledge the invaluable guidance of my primary and secondary education teachers and mentors, who introduced me to the world of education and laid the foundation for my academic pursuits.


\vspace{35mm}

\begin{flushleft}
\includegraphics[width = 30\baselineskip, height=6\baselineskip]{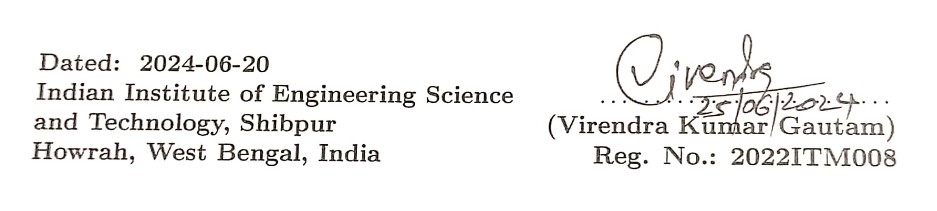}
\end{flushleft}



\newpage
\cleardoublepage
\phantomsection
\addcontentsline{toc}{chapter}{Abstract}
\afterpage{\blankpage}
\chapter*{Abstract}
\label{abstr}
\textbf{T}his study presents the concept of Skewed Fully Asynchronous Update in cellular automata and examines the resulting behaviour. It also investigates the dynamics of elementary cellular automata under this update scheme, comparing it with other update methods such as synchronous and fully asynchronous updates. Furthermore, the research introduces the application of fully asynchronous cellular automata for solving clustering problems and explores the dynamics of elementary cellular automata within the framework of $\alpha$-asynchronous cellular automata.

In this work, we introduce a new type of asynchronous cellular automata called Skewed Fully Asynchronous Cellular Automata  (SACA) as a source of noise in cellular auotmata. The $Skewed$ $Fully$ asynchronous cellular automata allows to  update the states of only two consecutive and adjacent cells, say $c_i$ and $c_{i+1}$, simultaneously at each and every time step. Under this proposed scheme, we analyzed dynamics and behaviour of elementary cellular automata (ECA). The dynamical behaviour are compared with the $fully$ asynchronous cellular automata, and synchronous cellular automata. This comparative study points out varieties of rich phenomenon that some of the elementary cellular automata shift from convergent nature ( or non reversible non convergent dynamics) to reversible nature (or divergence). We also identify the cases where the divisibility of the lattice size by 2 or 4 introduces massive repercussion in the system following presence or absence of the atomicity property. Lastly, we theorize the reason behind convergence towards all 0 and all 1 point attractors under the proposed skewed environment which partially validates our experimental observations.

\textbf{T}he work also presents the study and outcomes of the application of fully asynchronous cellular automata for solving clustering problem. By clustering, we mean a group of objects having similar characteristics (or properties). This work uses reversible asynchronous cellular automata. Here, a scheme is used to merge a group of clusters at every level based on the closeness of the clusters and it repeats until we get the desired number of clusters. 
The use of fully asynchronous reversible cellular automata for clustering enables us to work with only a very few rules and results in faster convergence. 

\cleardoublepage
\tableofcontents

\cleardoublepage
\phantomsection
\addcontentsline{toc}{chapter}{\listfigurename}
\listoffigures

\cleardoublepage
\phantomsection
\addcontentsline{toc}{chapter}{\listtablename}
\listoftables


%
\clearpage

\pagestyle{fancy}
\pagenumbering{arabic} 
\chapter{Introduction}
\label{chap1}
\hrule
\textit{If people do not believe that mathematics is simple, it is only because they do not realize how complicated life is.}
\begin{flushright}
\textbf{John von Neumann}
\end{flushright}
\vspace{5pt}
\hrule
\vspace{6pt}
\section{Background}
The advancement of science has always been inspired by the marvels of nature. Nature operates in a remarkably unpredictable manner, with each organism playing a unique role in influencing overall behaviour. For example, Bee Colony: while the queen bee plays a central role, the activities and roles of worker bees and drones are largely decentralized and self-regulated; Human Brain: the brain acts as a central controller, it relies on a network of neurons that function in a decentralized manner to process information and control bodily functions. Decentralization can be seen in each and every system, be it natural or man-made. 
Therefore, numerous existing centralized and decentralized systems inspired human to invent and develop scientific, engineered and artificial systems. These inventions and development, happening from a long back, are not only for making life easy but also to understand the structure, and complexity of the nature.

Following the above words, we see that Turing Machine, and Distributed Computing Systems are two of the pioneer work and inventions in the history of the decentralized systems, till the present date. The Turing Machine~\cite{turing1937computable, Turing}, introduced by Alan Turing, whose computation relied on a centralized control tape head, laid the groundwork for the mathematical model that has been used since the inception of the modern computer era. Another computationally powerful model were introduced by Stanislaw Ulam and Jonh von Neumann in 1940s. This laid down the invention and the development of cellular automata (CAs)~\cite{Neuma66} which stands as a remarkable milestone in computational theory, drawing upon a rich tapestry of mathematical models and frameworks, each contributing uniquely to its conception. Among these foundational elements, the von Neumann architecture, Markov chains, lambda calculus, and Petri nets play pivotal roles, offering diverse perspectives and mechanisms that underpin the invention and evolution of cellular automata~\cite{moore1962machine, Toffo87}.

Since inception of CAs, a large population of think tanks have been attracted by this computational model for study and modeling real world systems related to diverse field of studies. It is because of CAs's ability of modeling physical systems, computational systems such as Turing Machine, and biological systems such as self - reproduction~\cite{Sethi2016}. A parallel computing systems have been modeled by utilizing these CAs~\cite{Sethi901, Anindita12, Sarkar12}. The von Neumann's CAs has so high complexity that these CAs have not been considered for implementation in the computers. With the passage of time, many evolutionary modifications and simplifications were brought into the CAs just to bring the complexity as low as it could be possible. In this scenario and demand, a great development occurred after 20 years from the inception of the CAs. This simplification of structure of CAs was done by S. Wolfram~\cite{Sethi2016, wolfram2002new}. He proposed a new type of cellular automata (CAs), which they called elementary cellular automata (ECAs). This proposal for simplification leads to a one dimensional 
regular grid of point (cells) in euclidean space, each cell having two states, and three neighbors~\cite{wolfram2002new}. This structure of CAs attracted researchers to undertake it for modeling and study of some natural phenomenon~\cite{Sethi13}.
The cells in the CAs interact into a distributed manner. All cells are updated int temporal-spatial by interacting with each of their neighboring cells synchronously. The aggregate of the state of all cells at a time step represents the CAs's global state. This philosophy of synchronization assumes the existence of a global which ensures the simultaneous update of all cells~\cite{ref_r1, Sethi2016, Sethi901, Sarkar12}.

However, nature is full of randomness, uniformity and non-uniformity. It inspired many researchers to invent such a computational system that could be able to simulate and model the uniform, non-uniform and random processes occurring in our surroundings. It leads to the introduction of the asynchronism into the classical CAs, which restrict simultaneous update of states of each and every cell at the same time step ~\cite{Nakamura}. Asynchronism means independency into the functionalities of a system or absence of global clock. This philosophy of asynchronism in CAs leads to development and invention of asynchronous cellular automata (ACAs). In ACAs, one or more than one cells can be updated at a same time step corresponding to a set of update pattern associated to a given CA rule $R$~\cite{ref_r1,roydistributed, Das95}. With emergence of ACAs, many more ideas were raised related to the update scheme. However, two of the other asynchronous updating schemes have been considered as the main ACAs schemes~\cite{fates13, Fates17, Sethi901}:
\begin{itemize}
	\item[•] \textit{Fully} Asynchronous Cellular Automata (ACAs): Only one cell is updated corresponding to an update pattern associated to a CA $R$ at a time step chosen at random and uniformly.
	\item[•] $\alpha$ - Asynchronous Cellular Automata: One or more cells are updated corresponding to an update pattern associated to a CA $R$ at a time step chosen at random and uniformly. Only those cells are updated those have probability as least $\alpha$
\end{itemize}

\section{Motivation}

The idea of cellular auotmata (CAs) was, firstly, given by Jon von Neumann. He was the first man who introduced it to the world in early 1950's. He invented the CAs just to study the computation power and impact of implementability of self-reproducing systems and machines. As per~\cite{turing1937computable, Turing}, a computing systems is said to be computationally universal if it is capable of simulating other computing models. The same idea stroked into the Jon von Neumann to think about a model which could be a nature inspired, capable to simulate real world phenomenon. But, he was well focused on the constructive development to which they tried to achieve with the utilization of the cellular automata (CAs). He tried to show that how the two concepts (i) computational universality and (ii) computational constructive universality are related to each other in his demonstration. He demonstrated that it is computationally possible to implement a computationally universal machine such as Turing Machine with the help of the CA.

In recent years, numerous scientists, engineers and research scholars from various field of studies have focused on the cellular automata. Christopher Langton utilized CA to study the ``Artificial Life'' model. It is due to the natural characteristic of CA: \emph{self-replicating, self-organization, self-healing} e.g. Conway's ``Game of Life''~\cite{Gardner70}. We have discussed about Wolframe's classification,and Neumann's universal constructor in Chapter~\ref{chap2}.

The first motivation to work on this thesis is to study the behaviour of reversible ECAs under the fully ACAs when utilized for solving clustering problem. The dynamical behaviour of these ECAs have already been studied by many researchers around the globe~\cite{Sethi2016, Sethi901, roy:hal-04456320}.

Another motivation to explore the effect of allowing update of consecutive cells corresponding to a given ECA $R$. Many researchers have focused on study of single cell update, while update of consecutive cell have never been undertaken to study perturbation into the system. It motivated to explore the dynamics of the ECAs in comparison with the dynamics of the ECAs under synchronous and fully asynchronous cellular auotmata. 

Another motivation leads to explore the dynamical behaviour of the ECAs when one or more cells can be updated randomly and uniformly at a time step.


\section{Objective of the thesis}
This thesis mainly aims on qualitative and quantitative analysis of the computational capabilities of decentralized computational models, specifically focusing on decentralized computation with CAs. In these cellular automata (CAs), each cell comprises of a finite automatan. Each cell communicates with its neighboring cells to determine its next state~\cite{Sethi2016,Neuma66,BhattacharjeeNR16}. These cells are arranged in a regular euclidean grid, and the intriguing aspect of these CAs is their ability to generate complicate and complex global behaviour through simple local interactions of the cells~\cite{PhysRevLett.88.237901,Cor,Smith71},BhattacharjeeNR16. Numerous research work have been published centered around CA, some of them are considered pioneering work such as~\cite{Zuse1982,wolfram2002new, Smith71}. 
In this dissertation, we study, investigate, and theorize that how these automata can be utilized to resolve the following issues and real world problem:
\begin{itemize}
	\item Application of \textbf{fully} ACAs to solve clustering problem.
	\item Dynamical behaviour of ECA $R$ under \textbf{skew} asynchronous update scheme.
	\item Dynamical behaviour of ECA $R$ under $\alpha$ asynchronous update scheme.
\end{itemize}

The surrounding is not free from ambiguities. All the systems (living and non-living) have some type of ambiguities, and can never be free from such perturbation at any stage of their temporal evolution. These evolutions may cause behavioural changes into these systems. In this thesis, we have proposed skew asynchronous CAs. While the researchers have proposed and published quality of work related to the fully asynchronous and $\alpha$ asynchronous in ~\cite{ref_r4,Sethi2016,fates13, Bersini94}. In recent years, researchers also have focused on utilization of CAs to solve data mining, and machine learning related problems such as classification and clustering problems ~\cite{ref_r1,ref_r7,Sethi2016}.

\section{Contribution of the thesis}
To achieve our objectives, we have carried out this study under the aforementioned environment and surroundings. Here is a summary of the key findings from our study and research efforts:
\begin{itemize}
	\item We have explored ECAs those are reversible under fully asynchronous CAs. Since, we utilized this update scheme of CAs to resolve clustering problem; we explored communication classes associated to these reversible ECAs. 	
	\item These communication classes are group of finite configurations, where a configuration can never be a member element of any two classes.
	\item We found that only 12 reversible ECAs under fully ACAs are useful for the clustering purpose. These ECAs have more than three communication classes. 
	\item The choice of a list of these reversible ECAs depends on many factors to be utilized for clustering of a dataset. These factors are size of dataset, type of attributes, and encoding technique to encode the given dataset.
	\item We have found that the cellular automata based clustering algorithm are as googd as the other benchmark clustering algorithms such as K-means, PAM, and Hierarchical clustering. 
	\item  We proposed a new type of asynchronous CAs which we call skew asynchronous cellular automata, where two consecutive and adjacent cells get updated simultaneously corresponding to a given ECA $R$.
	\item The skew asynchronous cellular automata is an asynchronous update scheme which breaks the atomicity propoerty.
	\item We have also presented the comparative research findings of dynamical behaviour of ECAs in Chapter~\ref{chap4}.
	\item Lastly, we have tried to explore the dynamical behaviour of minimal representative ECAs under the $\alpha$ - asynchronous update scheme. We have reported the findings in Chapter~\ref{chap5}. We have also theorized the logic working behind the reversible and convergent ECAs (convergent to either all - \textbf{0} or all - \textbf{1}).
\end{itemize}

\section{Organization of the thesis}
\begin{itemize}
	\item \textbf{Chapter~\ref{chap2}.} This chapter is devoted to survey on the asynchronous cellular automata. The chapter describes the classical form of the cellular automata and its meaning, types of non - uniformity in cellular automata, and important types of non -classical CA such as automata networks, non - uniform CAs, and hybrid CA. The chapter mainly surveys the types of asynchronism on cellular automata (CAs).
	\item \textbf{Chapter~\ref{chap3}.} This chapter describes the application of fully asynchronous cellular automata for solving clustering problem. It also address different types of cluster validation indices such as dunn index, calinski harabasz index, and silhouette score. Finally,  the chapter address the results of proposed clustering algorithm in comparison with different types of benchmark clustering algorithms such as K-Means, Hierarchical clustering.
	\item \textbf{Chapter~\ref{chap4}.} This chapter describes the skew asynchronous cellular automata in detail. The chapter focuses on the convergent dynamical behaviour of 88 mimimal representative CAs rules under skew asynchronous update scheme.
Finally, it address the comparative result of skew asynchronous cellular automata, fully asynchronous cellular automata, and classical cellular automata.
	\item \textbf{Chapter~\ref{chap5}.} This chapter address the dynamical behaviour of 88 minimal representative CAs rules under $\alpha$ - asynchronous update scheme. The addresses both reversible and convergent dynamics of these rules under the $\alpha$ - asynchronous update scheme.
	\item \textbf{Chapter~\ref{chap7}.} This chapter address the conclusion of the work presented in this thesis. It sum up with the overall outcome of the work and future directions of the proposed work.
\end{itemize}

\chapter{Survey on Cellular Automata}
\label{chap2}
Cellular Automaton (CAs) is a discrete, abstract computational system that consists of a regular network of finite state automata, formally known as cells. A cell changes its state depending on the state of its neighbors using a local update rule and all the cells change their states simultaneously using the same update rule. Because of their simple structures, CAs have been considered a hot field of study in terms of research and application in diverse fields such as biology, physics, chemistry, mathematics,VLSI design, and computer science~\cite{langton90,DCTMitchell93,Supreeti_2018_chaos,KAMILYA2019116,Sethi2016}.

In contrast, ACAs allow cells to update independently, potentially leading to more realistic and varied dynamic behaviour. Asynchronous Cellular Automata (ACAs) represent a fascinating and versatile area within computational theory and mathematical modeling.  This chapter provides a comprehensive literature survey on ACAs, including fully asynchronous cellular automata (FACAs) and alpha asynchronous cellular automata ($\alpha$ - ACAs), and explores their definitions, properties, and applications.
\section{Cellular Automata}
\label{sec:ca_}
Informally, a cellular automatan is made up of cells arranged in a regular grid (euclidean plane), where cells are finite automaton with finite state - set $S$. States of Each cell of the CA get updated with the passage of time and associated locations, which results into the transition of CA from one global state to another. This alteration of state of a cell takes place based on the state(s) of its neighboring cells, which takes place through a next state function called local transition function. The current state's of the neighboring cells together with the particular cell to be updated are utilized as input to the local transition function. The output of this function throughout the time serves as the next state of the cell. And, the aggregation of these states of cells with respect to the time is termed as the configuration of the CA.

A formal definition of the CA is given as below:
\begin{definition}~\cite{Sethi2016,fates13}
	A cellular automaton is a quadruple ($\mathcal{L}$, $\mathcal{S}$, $\mathcal{M}$, $\mathcal{R}$) where,
	\begin{itemize}
		\item $\mathcal{L} \subseteq \mathbb{Z}^\mathcal{D}$ is the $\mathcal{D}-$dimensional cellular space. A set of cells are placed in the locations of $\mathcal{L}$ to form a regular network.
		\item  $\mathcal{S}$ is the finite set of states; e.g. $\mathcal{S} = \{0, 1,\cdots, d-1\}$.
		\item $\mathcal{M}=(\vec{v_1},\vec{v_2},\cdots,\vec{v_m})$ is the neighborhood vector of $m$ distinct elements of $\mathcal{L}$ which associates one cell to it's neighbors.
		\item The local rule of the CA is $\mathcal{R}:\mathcal{S}^m\rightarrow\mathcal{S}$. A cell's subsequent state is determined by the expression $f(s_1,s_2,s_3,\cdots,s_m)$ where $s_1,s_2,s_3,\cdots,s_m$ denotes the states of it's $m$ neighbors.
		\end{itemize}   
\end{definition}
\label{def1_ca}
The cellular automata (CAs) can be categorized on many basis such as kind of grid, uniformity in update of cells, uniformity in update rules. The classical CA has three fundamental characteristics those are \textbf{l}ocality, \textbf{s}ynchronicity, and \textbf{u}niformity, where:
\begin{itemize}
	\item \textbf{Locality} stands for the computation of CAs using an ECA $R$, termed as a local computation. The response of this local computation (interaction of neighboring cells) brings variation into current state of a given cell.
	\item \textbf{Synchronicity} refers to the concurrent and simultaneous update of all cells under the consideration of local computation.
	\item \textbf{Uniformity} refers to the application of the same local rule by all the cells of CA.	 
\end{itemize}

Out of these three characteristics, the locality has been seen more important among the three. It is because of its employment by all cells to carry out the actual local computation. Apart from these many characteristics, another dimension, called radius of CA, can also be studied to understand cellular automata and impact of local transition rule. The radius of a CA refers to the quantity of succeeding cells in a direction that a cell depends on. For instance, a cell in elementary cellular automata, depends on its immediate left and immediate right cell. Therefore, the radius of this type of CA is $r = 1$. Hence, this CA is a ($1$ + $1$ + $1$) = $3$ - neighborhood CA. 

In this thesis, we have utilized one dimensional grid. This type of cellular automata was extensively explored by Stephen Wolfram
~\cite{ref_r8,wolfram2002new}. He termed it as an elementary cellular automata. 

\begin{figure}[hbt!]\centering 
	\includegraphics[width=0.50\textwidth]{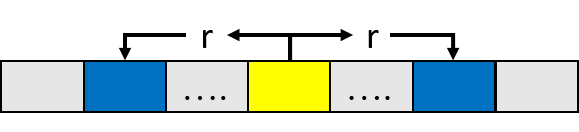} 
	\caption{neighborhood dependence in a one-dimensional cellular automaton with radious $r$.}
	\label{r}
\end{figure}

In this Fig.~\ref{r}, a cell in yellow uses $r$ left and $r$ right cells, in its neighbor, in order to evolve to the next state. However, John von Neumann and Moore used 2D CA in their neighborhood dependency. Each cell in both types of CA are squares. John von Neumann's CA cell has one of $29$ possible states~\cite{langton1984self,Smith71,ollinger2002quest,ollinger2003intrinsic,cook2004universality}. In following section, we have discussed elementary cellular automata in detail for illustration.

\begin{figure}[hbt!]
	\subfloat[]{
		\begin{minipage}[c][1\width]{
				0.30\textwidth}
			\label{von}
			\centering
			\includegraphics[width=1\textwidth]{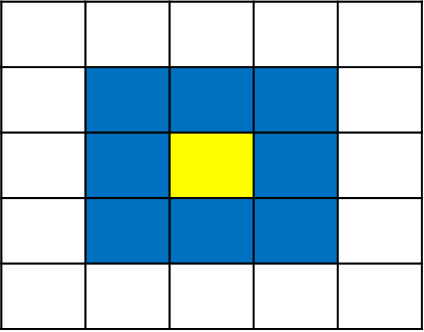}
	\end{minipage}}
	\hfill 	
	\subfloat[]{
		\begin{minipage}[c][1\width]{
				0.30\textwidth}
			\label{moor}
			\centering
			\includegraphics[width=1\textwidth]{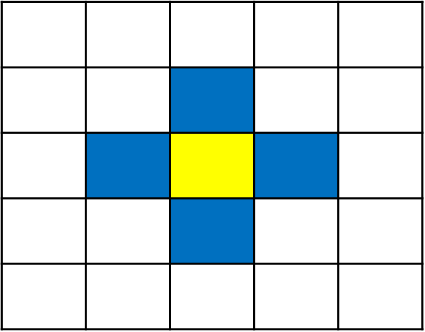}
	\end{minipage}}
	\caption{(a) Moore neighborhood dependencies for two-dimensional cellular automata;  (b) John Von Numann dependencies for two-dimensional cellular automata.}
\label{fig:neighbors}
\end{figure}

\subsection{Universal constructor: John von Neumann}
\label{subsec:vonNeumann}
It was the time of second world war, many top scientists were addressed and invited to join a project called Manhattan project at Los Alamos~\cite{banda2,ref_r19}. Amongs other Jon von Neumann and Stanisław Marcin Ulam were also the part of the society of the scientists joined the secret project. This project was about ``to build bomb'' that could be ``utilized against the Nazis''. It was not a big challenge for the scientists. There, it was the time some of the scientists were curious about the research on the computing complexity of a system. Much of theory and experimental works were done. During the same time Jon von Neumann also get motivated and the research potential of the computing complexity attracted him to take it as a project. So, the great scientists ``Stanislaw Marcin Ulam and Jon Von Neumann'' became interested in the abilities of Cellular Automata (CA) and self-reproduction of a machine.
\begin{figure}[htbp]

\begin{center}    
    \includegraphics[width=0.8\textwidth]{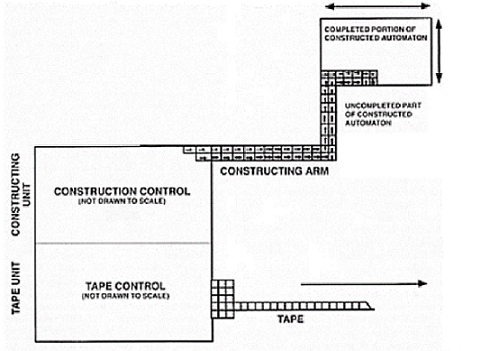}
    \captionof{figure}{John Von Neumann's Self-reproducting automata~\cite{banda2,Neuma66}}
    \label{fig:somelabe}
\end{center} 
\end{figure}
Stanislaw M. Ulam was more interested to develop an intelligent system which could behave like self-reproducing system and able to simulate the artifial growth and evolution. He got success and design the first system of this kind in the Log Alamos laboratory. This and other work by the Ulam inspired Jon von Neumann. Jon von Neumann (1950s) started working with this computing model to design the first computing model he termed cellular 
 automata~\cite{Neuma66,CARONLORMIER2008522}. His work mainly concentrated around the one and two dimensional geometry of the regular network of states of cellular automata (CA). Further work of Neumann was concentrated on formulation of sufficient conditions to design a non - trivial, self - reproducing system based on cellular automata(CA), and he imagined to formulate first kinematic model consisting of a robot (automata) floating in a lake, it will reproduce itself. But his this research work was hampered by the issue of motion in the lake. Thus, he showed the straight and capacity of the cellular automata. Later, Arthur Burks (a computer designer) attempted to create a two - dimensional realization of von Neumann's three - dimensional system of computation~\cite{Neuma66}.
\section{Wolfram's Cellular Automata and Classification}
We have presented a formal definition of cellular automata in~\ref{sec:ca_}. Many researchers have presented and proposed various type of cellular automata with some modifications in the basic philosophy of the cellular automata (CA). Later, after around 20 years since the inception of the CA, Wolfram were the first who systematically studied the cellular automata\cite{Wolfram94}. He even proposed a well structured type of cellular automata that could be as powerful as a natural one, which they called an elementary cellular automata (ECA)~\cite{Wolfram94}. The Section~\ref{subsec:eca} is dedicated to the concept and philosophy of the elementary cellular automata. 

\subsection{Elementary Cellular Automata}
\label{subsec:eca}
The cellular automata (CA)~\ref{def1_ca} with one-dimensional lattice, three neighborhood interaction, and two state finite set were proposed by~\cite{Wolfram94}. It is one of the simplest discrete, dynamical and complex system. Due to the simple structure of the ECA and capability to model natural phenomena (and computational systems such as Turing machine), the ECA has attracted think-tanks from diverse disciplines to study and utilize it more closely. In an elementary cellular automata (ECA), each cell of one-dimensional lattice in euclidean space evolves to next state based on the local interaction with its immediate left and immediate right neighbor cell. The lattice can either be infinite or finite under the point of consideration and objective of study. Therefore, the ECA can formally be defined as~\cite{Sethi2016,fates13,CPLX_CPLX21495}
\begin{definition}~\cite{Sethi2016,wolfram84b}
	An elementary cellular automaton is a quadruple ($\mathcal{L}$, $\mathcal{S}$, $\mathcal{M}$, $\mathcal{R}$) where,
	\begin{itemize}
		\item $\mathcal{L} \subseteq \mathbb{Z}^\textbf{1}$ is the $\textbf{1}-$dimensional regular grid in euclidean space. These cells are placed in the locations of $\mathcal{L}$ to form a regular grid of cells.
		\item  $\mathcal{S}$ is the finite set of two states $\mathcal{S} = \{0, 1\}$.
		\item $\mathcal{M}=(\vec{v_{i-1}},\vec{v_i},\vec{v_{i+1}})$ is the neighborhood vector of $3$ distinct elements of $\mathcal{L}$, that associates one cell to it's neighbors.
		\item $\mathcal{R}:\mathcal{S}^3\rightarrow\mathcal{S}$, where $\mathcal{R}$ is the local transition rule of the ECA. A cell's subsequent state is determined by the expression $R(s_1,s_2,s_3)$ where $s_1,s_2,s_3$ denotes the states of it's $3$ neighbors.
		\end{itemize}   
\end{definition}
\label{def_eca}
Table~\ref{Table:ECA} depicts next state of the cell under the basic principles of ECA with $S = \{0, 1\}$, and $r = 1$. 
\begin{table}[h!bb]
\centering
\begin{tabular}
{ p{4.5 mm} p{4.5 mm} p{4.5 mm} p{4.5 mm} p{4.5 mm} p{4.5 mm} p{4.5 mm} p{4.5 mm} p{4.5 mm} p{4.5 mm} } 
 \hline
PS & 111 &110& 101 & 100 &011 &010 &001 &000 & Rule\\\hline
NS & 0 & 0 & 0 & 1 & 1 & 1 & 1 & 0 & 30\\ 
NS & 0 & 1 & 1 & 0 & 1 & 1 & 0 & 0 & 108\\
NS & 1 & 1 & 0 & 0 & 1 & 0 & 1 & 1 & 201\\\hline
\end{tabular}
\caption{Present State (PS), and Next State (NS). This table shows an evolution of few rules from their present state to next state.}	
\label{Table:ECA}
\end{table}

Under the consideration of ECA, the total possible functioning CAs are $2^{2^3}$, which results into a total of $2^8 = 256$. In the Table~\ref{Table:ECA}, we mentioned a term \emph{RMT} of rules $30, 108, 201$. The RMT is a decimal number which is defined in the Definition~\ref{def_rmt}.

\begin{definition}
(Rule Min Term(\textbf{RMT}))~\cite{Sethi2016,Bhattacharjee16b}: Let $f : S^{2r+1} \rightarrow S$ be the local transition rule of an ECA. An input $(x_{-r},\cdots, x_0,\cdots, x_r) \in s^{2r+1}$ to $f$ is called a Rule Min Term (RMT). The output of the $f(r)$ is a non-negative integer number: $r = x_{-r}.d^{2r} + x_{-r+1}.d^{2r-1} + \cdots + x_r$ and $f(x_{-r}, \cdots, x_0, \cdots, x_r)$. RMT $r$ is called passive if $f(x_{-r}, \cdots, x_0, \cdots, x_r)=x_0$. Otherwise the RMT is called active RMT. For a $d$-state CA, total number of RMTs is $d^{2r+1}$.
	\label{def_rmt}
\end{definition} 
In the ECA there are $8$ possible $RMTs$ represented by $0 (000),~1 (001), ~2 (010), ~3 (011), ~4 (100), ~5 (101), ~6 (110), ~and 7 (111)$. The $RMTs$ $0 (000), ~2 (010), ~3 (011)$, and $5 (100)$
are passive $RMTs$, while $1 (001), ~4 (100), ~6 (110)$, and $7 (111)$ are active $RMTs$ which is displayed in Table~\ref{Table:ECA}) corresponding to $ECA$ $rule$ - 30.
\subsection{Classification of Cellular Automata}
\label{subsec:wolf}
We have discussed about the $ECAs$ in Subsection~\ref{subsec:aca} which is the most basic type of CAs. Utilizing this ECA, Wolfram studied the dynamical behaviour of all $256$ CAs and categorized into four basic classes periodic, homogeneous, chaotic and complex localized~\cite{Wolfr83,Packa85b}. Classification of the CA were further have been done by many other researchers too~\cite{FSSPCA3,mitch98a,Sarkar00Survey}. Wolfram's classification~\cite{wolfram84b} is given as below, which was done by considering ECAs with finite size lattice, and classical definition of CAs for ECAs:
\begin{enumerate}
	\item[] \textbf{Class I.} Evolution leading to a homogeneous state. Rules such as 0, 8, 32, 40, 168,
	\item[] \textbf{Class II.} Evolution leading to a periodic state. Rules such as 4, 12, 20, 164, 172,
	\item[] \textbf{Class III.} Evolution leading to a chaotic state. Rules such as 18, 106, 158; and
	\item[] \textbf{Class IV.} Evolution leading to a development of complex structure. Rules such as 54 and 110.
\end{enumerate}	
\begin{figure}[hbt!]
\centering 
	\includegraphics[width=0.60\textwidth]{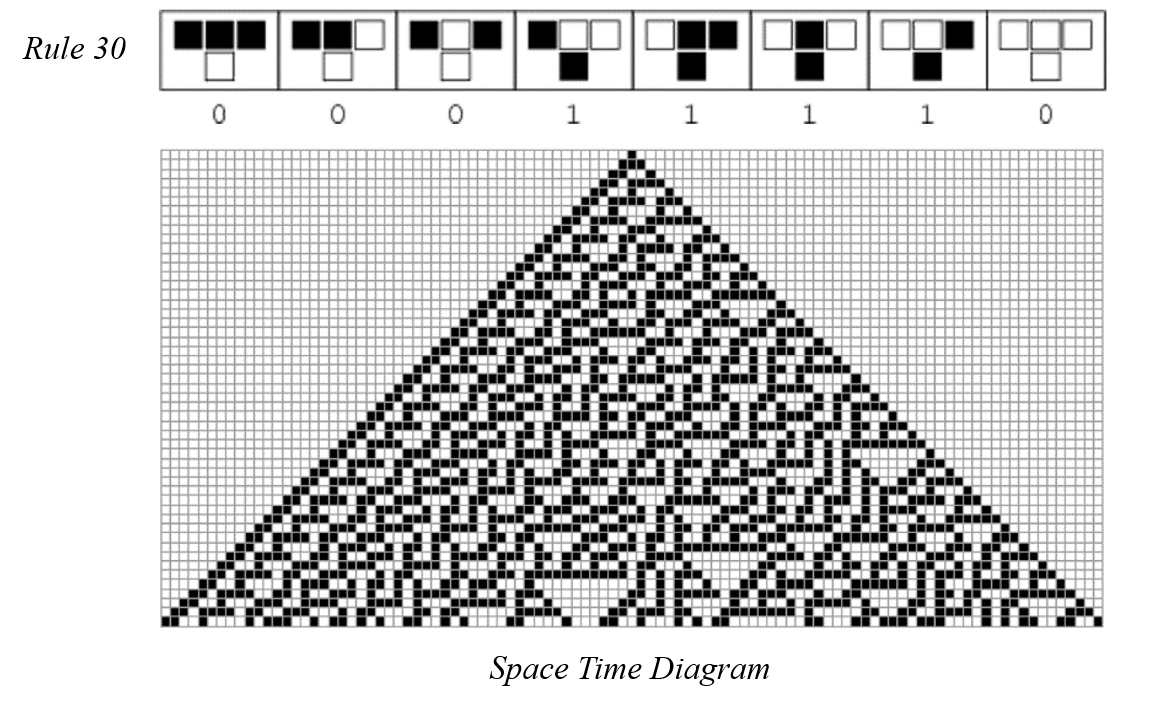} 
	\caption{Space-time diagram of a $1-$D Cellular Automaton of (Wolfram rule $30$). White cells denote state $0$, whereas black cells denote state $1$.}
	\label{rule30}
\end{figure}

He also extended this classification to two dimensional cellular automata in~\cite{Packa85b}. Fig~\ref{rule30} depicts the space-time diagram of an elementary cellular automata (ECA) rule - $30$.

\section{Types of Cellular Automata}
\label{sec:typesca}

In section-~\ref{sec:ca_}, we presented a classical cellular automata (CAs), elementary cellular automata (ECAs), von Neumann's constructor etc., where the essential three characteristics of CAs were never disturbed.

However, in this section we discuss about the asynchronous cellular automata (ACAs), types of ACAs, and applications in detail. 

A classical cellular automata posses three basic properties: (i) locality, (ii) synchronicity and (iii) uniformity. Disobeying any of the these three results into a variant of cellular automata, which definately can not be a synchronous cellular automata. Uniformity means that each CAs cell is updated by the same local rule. Synchronicity implies that all cells are updated simultaneously. Locality indicates that the rules operate locally, with each cell's neighborhood dependencies being consistent. Cells perform computations locally, and the global behaviour of CAs emerges from this local computation. Synchronicity is a specific type of uniformity, where all cells are updated simultaneously and uniformly. In fact, uniformity pervades CAs, evident in the local rule, cell updates, and lattice structure~\cite{BhattacharjeeNR16}.

Since the inception of CAs, it have been widely utilized by researchers as a modeling tool. However, it has become evident that many natural phenomena, such as chemical reactions within a living cell, do not exhibit uniformity. This realization led to the development of a new variant of CAs incorporating non-uniformity. Consequently, the introduction of non-uniformity in CAs has resulted in three main variants, arising from the relaxation of the constraints associated with uniformity. Following are the types of non-uniformity:
\begin{itemize}
	\item[1]. Automata Network: In this model, the CA is structured on a network, and the states of the nodes evolve according to neighborhood relationships defined by the network, thereby breaking the uniform neighborhood constraint.
	\item[2]. Asynchronous Automata: In ACAs, cells are not updated at the same discrete time step and can be updated independently, thus violating the uniform update constraint.
	\item[3]. Hybrid Automata: These CAs allow cells to have different local transition functions, thus violating the uniform local rule constraint.
\end{itemize}


\subsection{Automata Network}
\label{subsec:autnet}
Traditionally, cellular automata consist of a regular network with uniform local neighborhood dependencies. However, in an automata network, this uniform neighborhood dependency is relaxed. Here, cellular automata rules allow a cell to have an arbitrary number of neighbors, enabling them to operate on any given network topology, as demonstrated by Marr and Hütt ~\cite{Marr20}. The rules of automata networks are not always local, and since non-local and local rules differ, it is expected that non-local rules may lead to different behaviours compared to conventional local rule-based CAs. Examples of work in this area include Boccara~\cite{Boccara94}, Newman~\cite{Newman99}, and Yang~\cite{Yang07}.

The earliest instances of non-uniform neighborhood interactions in CAs are found in~\cite{Smith76,Jump74}. From the 1990s onward, networks have become a crucial model for addressing various complex problems, as exemplified by Adami, and Watts and Strogatz~\cite{Adami95, Watts98}. These wokrs significantly increased researchers' interest in automata networks. Extending standard lattice cellular automata and random Boolean networks results in a broader class of generalized automata networks~\cite{Tomasssini06}.
Researchers have explored automata networks as algebraic structures, developing their theory in parallel with other algebraic frameworks such as semi-groups, groups, rings, and fields~\cite{Domosi}. They have also introduced a method to emulate the behaviour of any synchronous automata network using a corresponding asynchronous one. ~\cite{Kayama} Presented a network representation of the Game of Life, exhibiting characteristics similar to Wolfram’s class IV rules. However, studies in this area are still in their early stages.
\subsection{Asynchronous Automata}
\label{subsec:asynch}
Concept of asynchronism in CAs were added by many researchers. It was Nakamura, who is considered the developer of the asynchronous cellular automata (ACAs). Later, follwed by Golze~\cite{Golze78}, Ingersan and Buval~\cite{Ingerson84}, and Cori et al~\cite{robco}. A proper survey on the CAs and ACAs can be found in~\cite{BhattacharjeeNR16,Fate2014}. 

Addition of the asynchronism into the classical cellular automata brought an abrupt and drastic changes into the dynamical behaviour of the cellulara automata. He also have properly investigated its computational complexity ~\cite{Nakamura}. He proposed technique by which ACAs could be used to implement universal Turing machine. He also had simulated working process of the synchronous cellular automata (ACAs) using the asynchronous cellular automata(CAs)~\cite{Nakamura}. ACAs have diverse applications and with the passage of time new chapters are getting added into it. The asynchronous cellular automata are a type of cellular automata which have no global clock. Instead of, each cell is updated independently corresponding to a given CAs rule $R$. The asynchronous cellular automata are basically categorized into two types:
\begin{itemize}
\item[1]. Fully Asynchronous Cellular Automata~[\ref{subsec:fullyACA}].
\item[2]. $\alpha$ Asynchronous Cellular Automata~[\ref{subsec:alphaACA}].
\end{itemize}

The $\alpha$ is known as synchrony rate where $\alpha \in [0, 1]$. For $\alpha = 1$, the CAs behaves like classical CAs. This tends to a sense of special case of $\alpha$ - ACAs. Latter, many asynchronous ACAs scheme have been proposed by researchers over globe such as $\beta - $ asynchronism and $\gamma - $ asynchronism~\cite{BoureFC12}, $m - $asynchronism~\cite{Dennunzio13a}.

One of the excellent work on the computation equivalence of CAs and ACAs  and have been shown by Hemmerling~\cite{Hemmerling82}. Further, he demonstrated that any $d-$state deterministic rule can be modeled by a 3$d^2$ state asynchronous rule with same neighborhood interactions. With different perspective, Golze~\cite{Golze78} demonstrated that a (D+1)-dimensional asynchronous rule is capable of simulating a D-dimensional synchronous rule. Latter, Dennunzio~\cite{Dennunzio12a} illustrated the ability of fully ACAs to simulate a universal Turing machine. Petri net is a directed bipartite graph that cab be implemented by ACAs~\cite{Golze82}. ACAs can also be utilized to model concurrency and distributed computing systems which has been illustrated in~\cite{robco} further investigated by Pighizzini and Droste~\cite{Pighizzini94,Droste2000}.

Letter, number conservation~\cite{Suzudo04}, reversibility~\cite{Sarkar12, Sethi2016} and density classification problems~\cite{fates13} have also been investigated by researchers. The possibility of returning back to the initial configuration in a finite number of time steps motivated the researchers to investigate the reversibility problem under ACAs~\cite{Sarkar12}. Because of the capability of the ACAs to generate pseudo random numbers and behaving as to model tranpositions of binary bits, ACAs have further been explored and utilized for designing symmetric-key cryptography algorithms~\cite{Sethi2016, Wolfram86}.

In following subsections~\ref{subsec:fullyACA}, and~\ref{subsec:alphaACA}, we give a special attention to understand the working of the main asynchronous CAs.
\subsection{Fully Asynchronous Cellular Automata}
\label{subsec:fullyACA}
One of the extensively studied type of asynchonism in the cellular automata is \textbf{f}ully asynchronous cellular automata. State of exactly one cell is updated while keeping states of remaining other cells unchanged~\cite{Nakamura}. This is the central/basic philosophy of the fully cellular automata. The fully asynchronous cellular autotmata together with the Wolfram's cellular automata have been extensively utilized by many researchers around the globe to model the real world problems. Be it number conservation property, classification or clustering problem, land use and city planning problem etc~\cite{Sethi2016, Bandini12, arrighi2013stochastic, jca/Fates09}. The Fig - ~\ref{faca} displays working of fully asynchronous cellular automata. 
\begin{figure}[h]
	\centering
	\includegraphics[scale = 0.8]{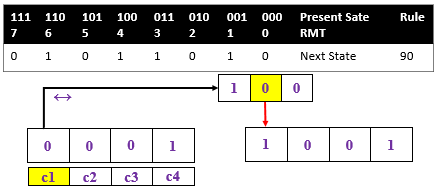}
	\caption{Working of {\em fully} Asynchronous Cellular Automata under periodic boundary condition.}
	\label{faca}
\end{figure}

Chapter-~\ref{chap3} is centered on the fully asynchronous cellular automata, where we have utilized it to solve clustering problem. We have utilized a total of seven real world datasets for experiment and exploration of the capability and applicability of the fully asynchronous cellular automata for clustering purpose.

\subsection{$\alpha$ - Asynchronous Cellular Automata}
\label{subsec:alphaACA}
$\alpha-$ asynchronism is highly sensitive and complex type of asynchronism in the cellular automata. In this, only those cells are updated uniformly and randomly those have probability at least $\alpha$, and remaining get never updated throughout the time step.
\emph{Alpha Asynchronous Cellular Automata} ($\alpha$ -ACA) is a type of asynchronous CA in which states of only $\alpha$ probability cells are updated simultaneusly, while states of (1-$\alpha$) probability cells are remain unchanged.

We try to know the working of the $\alpha$ - ACA by the example - 2.2 in which the elementary cellular automata has been used to demonstrate the algorithm.

\textbf{Example 2.1:} Suppose, we are given an $ECA - 90$, ring size $n = 5$, initial configuration $(1, 1, 0, 0, 1)$ and $\alpha$ = 0.5. Then, we are interested to see the next four configuration after evolved from  the given and subsequent evolved configurations.

Under the consideration of $\alpha -$ asynchronous CA update scheme, suppose that cell number $0$ and $1$ (i.e. two immediate neighbors) cells have $\alpha$ probability. So, therefore, these cells are select uniformly and randomly for update corresponding to the update rules based on the given $ECA - 90$. Then, using periodic boundary condition, the input patter based on the select cell number $0$ is $(1, 1, 1)$. Corresponding to this input pattern, the output pattern is $0$ for this cell. While, input pattern for cell number $1$ is $(1, 1, 0)$ and output pattern corresponding to this input pattern is $1$. Therefore, the first next configuration becomes: $(0, 1, 0, 0, 1)$. The other three next configurations have been summarized into the Table 2.2.

\begin{table}[h]
	\centering	
	\resizebox{0.8\textwidth}{!}{
		\begin{tabular}{|c|c|c|c|c|c|c|c|c|c|}
			\hline
{\bfseries Cell Number } & {\bfseries $0$} & {\bfseries $1$}& {\bfseries $2$} & {\bfseries $3$} & {\bfseries $4$} & {\bfseries Pr($\alpha$) - cells to be updated}\\\hline
Present State & 1 & 1 & 0 & 0 & 1 & \\\hline
Next State(i) & 0 & 1 & 0 & 0 & 1 & 0, 1\\
Next State(ii)& 0 & 1 & 1 & 1 & 1 & 0, 2, 3\\

Next State(iii)& 0 & 1 & 0 & 1 & 1 & 1\\

Next State(iv) & 0 & 0 & 0 & 1 & 1 & 0, 1, 2, 3, 4\\\hline
	\end{tabular}}
	\caption{The next four configurations after evolution of initial configuration (1, 1, 0, 0, 1) under the $\alpha$ fully asynchronous cellular automata and ECA - 90}\label{tab1}
\end{table}

\subsection{Hybrid Automata}
\label{hybrca}
When cells use different local rules, it breaks uniform local interaction with neighbors. It is the most popular type of non - uniformity in CAs. This model has been firstly studied and developed by Pries~\cite{Pries86}. Pries explored the group properties of one dimensional CAs with finite lattice size under both the null and periodic boundary conditions. Since then, it become a hot area of study among the researchers. For example, ~\cite{Hortensius89a, Das92, cattell1996synthesis, Supreeti_2018_chaos}.

In recent times, a more generalized definition for hybrid automata has been given by Cattaneo et al.~\cite{Cattaneo09}. Basic global characteristics such as surjectivity, injectivity, equicontinuity,
decidability, structural stability etc related to the hybrid automata have also been explored by him~\cite{Cattaneo09, Dennunzio12b, Dennunzio14a, salo2014}.

\section{Some Definitions and Terms}
\begin{definition}~\cite{Sethi2016,Sarkar11}
    A configuration $ \mathcal{X} \in \mathcal{E}_{n}$ is a recurrent configuration, if for every reachable configuration $ \mathcal{Y}$, the configuration $ \mathcal{X}$ is also reachable from the configuration $ \mathcal{Y}$.
\end{definition}
\begin{definition}~\cite{Sethi2016,Sarkar11}
    An ECA R under the $\alpha$ - asynchronous update scheme is reversible if each and every configuration $ \mathcal{X} \in \mathcal{E}_{n}$ is a recurrent configuration, otherwise irreversible.
\end{definition}
where $\mathcal{E}_{n}$ represents configuration space associated to a give CAs size of $n$.

\subsection{Boundary Conditions}
By virtue of its nature, the cellular space ($\mathcal{L}$) has no boundary and notion of boundary condition. It is  an infinite space in an euclidean space. However, some of the works have been done in the framework of cellular automata where $\mathcal{L}$ has been used with assumption of finite dimension. Therefore, the CAs those utilize such a cellular space ($\mathcal{L}$), are called finite CAs. Here, we discuss two main boundary conditions as follw:

\begin{definition}
\textit{A CA is called as a Finite Cellular Automaton if the cellular space $\mathcal{L} \in \mathbb{Z}^{D}$ is finite. Otherwise, it is an Infinite Cellular Automaton.}
\end{definition}

\begin{itemize}
\item[1]. \textbf{Open Boundary Condition:} In cellular automata (CAs) with open or fixed boundaries, the cells at the edges (the leftmost and rightmost cells in one-dimensional CAs) lack some neighbors, which are usually assigned predetermined states. One common type of open boundary condition is the \emph{null boundary}, where these edge cells' missing neighbors are consistently set to state 0 (Null). Null boundary conditions in CAs have been widely studied, especially in the context of Very Large Scale Integration (VLSI) design and testing. 
\begin{figure}[h!bb]
	\centering	
	\includegraphics[scale = 0.8]{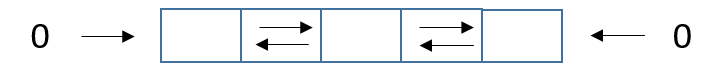}
	\caption{{Null boundary condition.}}
	\label{nb}
\end{figure}

\item[2]. \textbf{Periodic Boundary Condition:} the boundary cells are neighbors of some other boundary cells. For instance, for 1-D CAs, the rightmost and leftmost cells are neighbors of each other (see Fig. 3b).
\begin{figure}[h!bb]
	\centering	
	\includegraphics[scale = 0.8]{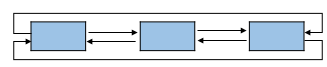}
	\caption{{Periodic boundary condition.}}
	\label{nb}
\end{figure}
\end{itemize}

We have utilized \textit{periodic boundary condition} throughout this dissertation.

\section{Applications}
This spreads light on the applications of cellular automata in diverse fields those have been published in the past.

ACA have been utilized to design the essential components of Cryptography algorithm such as permutation and combination of bits, transformation of bits, bit shifts stc.~\cite{Wolfram86}, parallel computing ~\cite{Das95}, Market dynamics, weather patterns, Earthquake and Astronomical simulations, neural networks~\cite{Hopfield}, modeling of spread of diseases through population~\cite{NakataY} etc. are some of the major applications where the asynchronous CA have been used for modeling purpose. We give some detailed discussion related to the application as follow:

\begin{itemize}
	\item[1]. \textbf{Cellular Automata for Electronic Circuit Design}: Because of the simplicity, cascading property and modularity, Non - Uniform CAs have captured attention of the researchers to utilize it for electronic circuit design in area of VLSI field of study. For example CAs Machine with high degree of parallelism have been designed and developed by Toffoli and Morgulus~\cite{Toffo87}, and studied the complex systems. Atrubin utilized it and developed a parallel multiplier~\cite{Atrubin65}. A quality of works in the literature deals with the application of CAs in cryptography~\cite{Wolfram86,Nandi94}.
	\item[2]. \textbf{Computer Vision}: Researchers have also explored the applications of CAs in computer vision, machine learning, pattern recognition, and in allied field of studied, where especially two - dimensional CAs have been utilized. For example, image processing\cite{Rosin2006, Kazar11}, CA base GF(2) coding transformer scheme for gray level~\cite{Paul1999}.
	
	\item[3]. \textbf{Medical and Bio-science}: Since CAs posses (i)self - replication property, (ii) simple and fast to implement a model, (iii) CAs based simulation provides remarkable resemblance with original experiments, therefore they have been utilized in the medical and bio-sciences for modeling purposes. For example, evolution of DNA within the CAs based framework~\cite{Burks94}, C${_p}$G island detection in DNA sequence~\cite{Ghosh07}.
	
\end{itemize}

In next Chapter~\ref{chap3}, we have presented the experimental set up and brief study of fully ACA with its application to solve the clustering problems. 

\chapter{Fully Asynchronous Cellular Automata for Clustering}
\label{chap3}

\section{Introduction}
\label{sec:S1}
Traditionally, a cluster is a collection of two or more objects which have similar characteristics (or features)~\cite{ref_r9}. A clustering technique aims to assign the target objects among
the clusters {\em effectively}. This metric also exploits the inherent anatomy of data objects for partitioning the dissimilar objects into separate clusters. Formally, an effective cluster can be defined as a set of objects those have minimum {\em intra-cluster} distance while the objects from different clusters have maximum {\em inter-cluster} distance. Thus a cluster establishes an intrinsic connection among the objects such that the objects are {\em reachable} to each other. So, clustering can be thought of as a bijective mapping $\mathcal{M}: {\mathcal{K}}\rightarrow {\mathcal{K}}$ where $\mathcal{K}$ refers to the set of target objects. Let us consider two target objects $x$ and $y$ of $\mathcal{K}$; they are {\em reachable} to each other if $x=\mathcal{M}^{l_1}(y)$ and $y=\mathcal{M}^{l_2}(x)$ for some finite $l_1$ and $l_2$. Therefore, we can get $|\mathcal{K}|!$ possible bijective mappings where every pair of target objects maintains reachability. As our target is to perform clustering effectively, therefore, to maintain less intra-cluster distance, the objects of a cluster need to maintain {\em closeness} and the number of clusters should be {\em optimal} also. The inherent hardness of this problem
motivates us to take finite sized reversible cellular automata (CAs) as natural clustering tool. Here, the {\em cyclic spaces} of reversible CAs can be used as cluster spaces where in every cycle, each pair of configurations (target objects in terms of clustering) maintains `reachability'. Besides, to maintain optimal number of clusters we select such reversible cellular automaton (CA) which produces $m$ number of cycles which is linear in the respective CA size. To maintain less intra-cluster distance, we select those CA rules which are capable of connecting the configurations which maintain minimum possible change in their cells’ state values (as the CA configurations are used as the data objects).

Reversible cellular automata have already proven its efficacy in data clustering~\cite{ref_r7,ref_r13,ref_r14}. In the prior art, the working principle of the clustering is to merge the {\em closely reachable} clusters at every level until it converges. Next, an efficient clustering scheme is reported in~\cite{ref_r11} which deals with high-dimensional data. The work mentioned in~\cite{ref_r12} introduces a novel encoding technique and this research is mainly used for characterization of CA rules for effective clustering. The state-of-art methodologies use the cellular automata with `synchronous' update of cells i.e., all the CA cells update their states at a particular time step and CA hops from the current configuration to the next configuration. If we add impurity in cell update, the cells are updated in non-deterministic fashion; such CA is called as {\em asynchronous} cellular automaton. Asynchronous CAs are flexible to model natural phenomena due to its independence in cells update. A tour on asynchronous cellular automata and some insightful thoughts of such automata in real world problem solving are described in~\cite{ref_r15}. Like classical CAs, the property of reversibility has also been studied in the CAs with non-determinism in evolution~\cite{ref_r3,ref_r4,ref_r16}. Traditionally, ``backward determinism"~\cite{ref_r17,ref_r18,ref_r19} is the key concept in reversibility as every configuration has exactly one predecessor and one successor in reversible cellular automata; but employing non-determinism in cell update introduces the possibility of multiple successors and predecessors in evolution. Thus to find reversibility in the domain of asynchronous cellular automata, instead of finding one-to-one correspondence in configuration space, we look into the fact whether every configuration can be reached by itself. The prior art~\cite{ref_r3,ref_r4,ref_r16} focuses only on finite {\em fully} asynchronous cellular automata under periodic boundary condition. In an $n$-cell fully asynchronous CA, at each time step, only one cell updates its state and such cell is selected uniformly at random. To establish the notion of reversibility in fully asynchronous CA, the concept of {\em communication class} has been introduced~\cite{ref_r3,ref_r4}. Formally, a communication class is a collection of configurations such that any two configurations always {\em communicate} to each other as they are {\em reachable} to each other; but there are multiple paths to establish the communication between any two configurations of that communication class. Therefore, a communication class can be thought of as a collection of simple cycles and for every two such cycles there exists an unique connection. This leads us to use asynchronous cellular automata for solving the problem of clustering for the following case.

Let us consider $|\mathcal{K}|$ number of objects have $p$ features $f_1$, $f_2$, $\cdots$, and $f_p$ and our target is to assign them in different clusters effectively. Let us consider $x$ be an object which is close to object $y$ for $p_1$ features and that $x$ is also close to object $z$ for $p_2$ features (here, $p_1, p_2 < p$). Therefore, $x$ and $y$ can be in same cluster; similarly, $x$ can also form cluster with object $z$. Now, there is an intrinsic connection between this two clusters as they have common objects. Thus communication classes of fully asynchronous CAs plays the appropriate role to deal with such situation.

In asynchronous CA, the cyclic space consists of a collection of disjoint communication classes. This work uses the communication classes of finite sized fully asynchronous reversible cellular automata under periodic boundary condition. Section~\ref{sec:basic} introduces the terminologies and notations of cellular automata related to this work and a brief description on asynchronous cellular automata has also been introduced. It has already been proven that out of 256 possible CA rules, only {\em forty six} rules are called as {\em recurrent} rules~\cite{ref_r4}; but not all these CAs can work for efficient clustering. The CAs which produce the extremities in communication classes can be debarred for clustering. Such extremities are $n$-cell CA with {\em one} communication class or $2^n$ communication classes or {\em two} communication classes with one configuration in one class and the remaining $2^n -1$ are in other class. Thus we get only {\em twelve} rules which can play the key role in data clustering by maintaining optimal number of clusters as well as less intra-cluster distance (Section~\ref{sec:S3}). This work focuses on the real world datasets with {\em quantitative} attributes. Here, we use the an encoding scheme referred in~\cite{ref_r13,ref_r14}. By this encoding scheme, the real valued objects will be converted to a binary strings of finite length which are called {\em useful} configurations (Section~\ref{subsec:encoding}). Our clustering algorithm is explained in Section~\ref{subsec:algo} in-terms of communication classes. The devised CA based algorithm has been tested and examined with the benchmark clustering algorithms such as K-means and hierarchical clustering techniques. Section~\ref{sec:S4} presents an experimental analysis on different datasets retrieved from  archive.ics.uci.edu/ml/index.php. For doing comparison of the proposed clustering scheme with other clustering algorithms, we need to validate the clusters quality. For measure the quality of the clusters, here, we use three well known metrics like {\em dunn index}, {\em silhouette score} and {\em DB score}. The experimental result establishes that fully asynchronous CA based clustering can also maintain quality clusters like K-means and hierarchical algorithms.

In the above background, this chapter is organized as follows. The fully asynchronous cellular automata is introduced in Section~\ref{sec:basic},establishment of relationship between CA and clustering in Section~\ref{sec:S3}. The main result of the work in Section~\ref{sec:S4}. Finally, Section~\ref{sec:S5} summarize this chapter and discuss the application of the proposed CA based clustering algorithm for solving clustering problems.

\section{Basics of Fully Asynchronous Cellular Automata and Cyclic Space}
\label{sec:basic}
For the presented work, we utilized the fully asynchronous cellular automata which is a kind of asynchronous cellular automata \cite{ref_r6,ref_r7, Sethi2016}.Under the scheme of the fully asynchronous cellular automata, update of state of only one cell takes place uniformly and randomly at next time step. In this work, we utilized one dimensional two state $\{ 0, 1\}$ and three - neighborhood $n$ - cell fully asynchronous cellular automata under the periodic boundary condition.\\ In a CA, collection of present state of all the cells of CA at a time step t is called the $\mathit{configuration}$ of the CA. These configurations can transit to another configuration under the applied kind of CA. Which can be defined in terms of the global transition function $\mathit{G}$:C$\rightarrow$C, C = $\{ 0, 1\}^n$. The global transition function $\mathit{G}$:C$\rightarrow$C, C = $\{ 0, 1\}^n$ describes the evolution of a configuration to another configuration, and therefore it represents the configuration space~\cite{ref_r7}. Suppose that we have a global configuration x and next configuration of it is  y = (x${_i}$) $\forall i \in$ n. Therefore, the both x, and y configuration belong to C. This can be shown using global transition function G as y = G(x). More precisely, let y${_i}$ be the next state of a cell x${_i}$ of configuration x, which produced the next configuration y. So, y${_i}$ = $\mathit{R}$(x$_{(i-1)}$, x$_i$, x$_{(i+1)}$), where y${_i} \in$ y, where $\mathit{R}$ is the local rule corresponding to cell i, and x$_{(i-1)}$, x$_i$, x$_{(i+1)}$ is the collection of present states of neighbors of cell i. This collection of the neighbors of a cell i is termed as rule min term (RMT), in which x$_{(i-1)}$ is the present state of left neighbor, x$_{i}$ is present state of cell i and x$_{(i+1)}$ is present state of right neighbor of the cell i. And if x${_i}$ = $\mathit{R}$(x$_{(i-1)}$, x$_i$, x$_{(i+1)}$), then this is called the $\mathit{self-replication}$. All the RMT which shows such type behaviour, are called $\mathit{self-replicating}$ RMTs. This property plays a vital role in formation of cyclic spaces. Suppose we have a configuration $\mathit{C}$, corresponding to a CA which has all the self-replicating RMTs. The, this configuration will reach its initial configuration in each iteration. It's mean it creates a cycle. Therefore, collection of these type of the cyclic configurations in CA, forms cyclic space. In Figure 1, for a CA of size four, we can see the configuration 0000, 1111, 1010, and 0101 are self replicating configurations. While, 0011, 0111, 0110, 1110, 1100, 1101, 0010, 1011, 1000, 1001, 0100, and 0001 are reachable to each other, are the part of the same cycle. It is clearly shown that configuration 0000 cannot be reached from 1100 or 0011. Therefore, it is obvious that a configuration of communication cycle c$_{i}$ and cannot be reached from any configuration of cycle c$_{j}$, and vice - versa.
The CA rules are represented into a binary string of 0 and 1 as shown in table 1. Since, we have used fully ACA under periodic boundary condition. So, the extreme right cell of the CA would become the leftmost neighbor of the extreme left cell of the lattice. In same fashion, the extreme left cell of the lattice would become the right heighbor of the extreme rightmost cell of the CA. Some of the rules and with their present and text have been presented below in the Tabel~\ref{Table1}.\\
For the study of the work presented in this report, the fully asynchronous CA was used. A CA in which only one cell is updated randomly and uniformly is called a fully asynchronous CA. This is one after the classical CAs which have been explored widely and utilized in many applications.
\begin{center}
\begin{figure}[h]
\includegraphics[width= 12 cm, height = 6 cm]{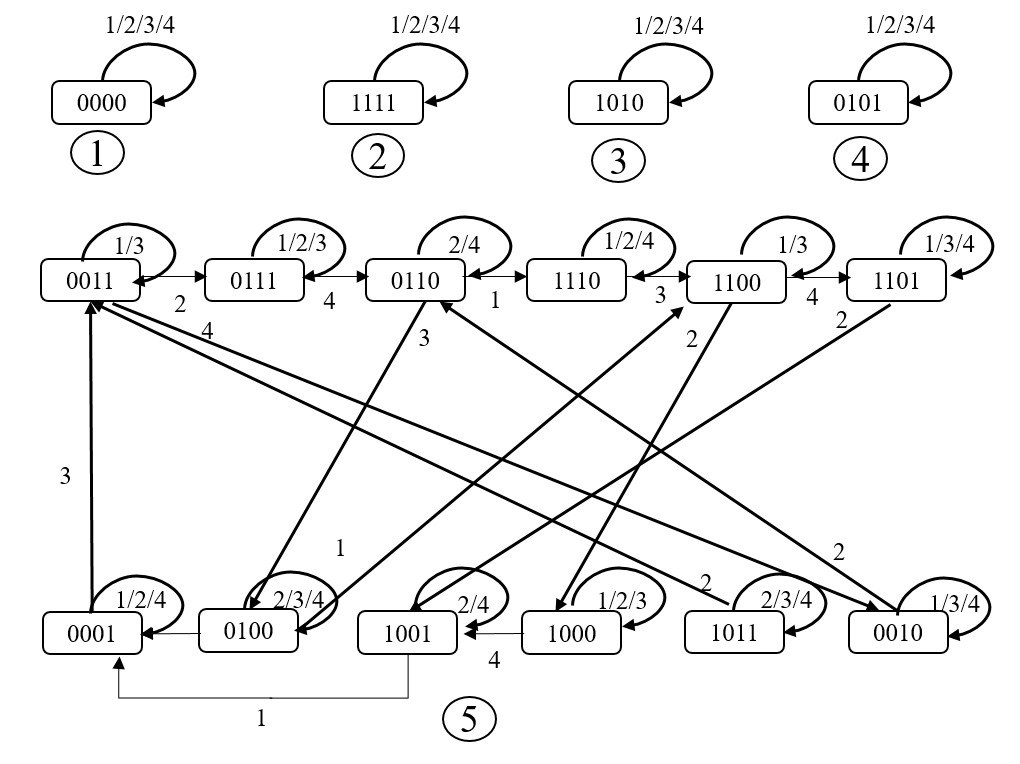}
    \caption{Cyclic space and communication classes of reversible ECA - 142 under fully asynchronous CA.}
\end{figure}  
\end{center}

\begin{table}
\centering
\begin{tabular}{ p{4.5 mm} p{4.5 mm} p{4.5 mm} p{4.5 mm} p{4.5 mm} p{4.5 mm} p{4.5 mm} p{4.5 mm} p{4.5 mm} p{4.5 mm} } 
 \hline
PS & 111 &110& 101 & 100 &011 &010 &001 &000 & Rule\\\hline
NS & 1 & 0 & 0 & 0 & 1 & 1 & 1 & 0 & 142\\ 
NS & 0 & 1 & 1 & 0 & 1 & 1 & 0 & 0 & 108\\
NS & 1 & 1 & 0 & 0 & 1 & 0 & 1 & 1 & 201\\\hline
\end{tabular}
\caption{Present State (PS), and Next State (NS). This table shows an evolution of few rules from their present state to next state.}
\label{Table1}
\end{table}

\section{Reversible ECA for Clustering problem}
\label{sec:S3}
We already have discussed about the reversibility and reachability in the section of basics of cellular automata. In this section we present those reversible CA rules which are used to study the clustering problem under the application of the fully asynchronous CA.

It has already been reported that there are only fourty six CA rules which shows reversible characteristic under the fully asynchronous system of CAs. These are 33, 35, 38, 41, 43, 46, 49, 51, 52, 54, 57, 59, 60, 62, 97, 99, 102, 105, 107, 108, 113, 115, 116, 118, 121, 123, 131, 134, 139, 142, 145, 147, 148, 150, 153, 155, 156, 158, 195, 198, 201, 204, 209, 211, 212, and 214 \cite{ref_r3,ref_r4,Sethi2016}. But, the all these rules have their different size of communication classes (cyclic space). A detailed description about the communication classes of these rules is represented in Table 2. The other remaining reversible ECAs will not be used for clustering under the fully asynchronous CAs, because they have either only one communication class of size 2$^n$ or two communication classes such that one communication class of size one and other of size 2$^n$-1, where CA size is $n$. It is also because of our motive to optimize the clustering of the objects\cite{ref_r4,Sethi2016}.\\
\begin{table}
\centering
\begin{tabular}{ |p{1cm}| p{0.6cm}| p{0.6cm}| p{0.6cm}| p{0.6cm}|p{0.6cm}|p{0.6cm}| p{0.6cm}| p{0.6cm}| p{0.8cm}| p{0.8cm}|p{0.8cm}|p{0.8cm}|}\hline
CA Size(n) & 105 & 134 & 142 & 148 & 150 & 158 & 212 & 214 & 108 & 201 & 156 & 198 \\\hline\hline
8 & 7 & 7 & 7 & 7 & 7 & 7 & 7 & 7 & 90  & 90 & 48 & 48\\ 
9 & 2 & 7 & 7 & 7 & 7 & 7 & 7 & 7 & 154 & 154 & 76 & 76\\
10 & 5 &8 & 8 & 8 & 8 & 8 & 8 & 8 & 270 & 270 & 123 & 123\\
11 & 2 &8 & 8 & 8 & 8 & 8 & 8 & 8 & 471 & 471 & 200 & 200\\ 
12 & 9 &9 & 9 & 9 & 9 & 9 & 9 & 9 & 825 & 825 & 322 & 322\\\hline
\end{tabular}
\caption{Number of communication classes against CA size n of reversible ECA mentioned in this table.}
\label{Table2}
\end{table}
The communication classes of ECAs 134, 142, 148, 150, 158, 212 and 214 are exactly same with respect to their size and member elements. But ECAs 108 and 201 have different characteristic, as sometimes they are found to have exactly same communication classes and sometimes one or two communication classes rule out this behaviour. Similarly, ECAs 156 and 198 also behaves like the ECAs 108 and 201. Therefore, to design and test fully asynchronous CAs as a clustering tool, we can choose one from ECAs 134, 142, 148, 150, 158, 212, and 214. But we cannot choose only one from 108 and 201, and 156 and 198. Another ECA to be used for designing is 105. So, we are left with only six choices. These choices are as follows:\\
\begin{itemize}
\item Select any ECA from 134, 142, 148, 150, 158, 212 and 214
\item Select 105, and (or) select 108, or(and) 201, and (or) select 156, or(and) 198 
\end{itemize}

If we restrict to use at least two ECAs to design fully ACAs as a clustering tool, then we found 15 different combinations with 2 ECAs, 20 different combinations with 3 ECA, 15 different combinations with 4 ECAs, 6 different combinations with 5 ECAs and one single combinations with six ECAs. Here, we have considered 134, 142, 148, 150, 158, 212 and 214 as a single ECAs. Because they behave exactly same. In the following section,we will explain the encoding technique to encode the dataset into binary strings. We will also explain the strategy to construct relation between CAs and clustering.
\section{Establishment of Relation between CA and Clustering }
As we have discussed that we have used one dimensional, three neighborhood, and two state fully asynchronous cellular automata under periodic boundary condition. So, each cell of lattice of CA will have either 0 or 1 state. Therefore, a configuration can be viewed as a binary string. Therefore, we require to encode the given objects into the binary strings. These strings were utilized by the CA based clustering algorithm for mapping to configurations under the CAs~\cite{ref_r7}.

\subsection{Techniques for Encoding Datasets}
\label{subsec:encoding}
Suppose that we have dataset (set of target objects) X = $\{$X$_1$, X$_2$, ..., X$_u\}$, where m is the size of dataset. Each target object has attribute A = $\{$A$_1$, A$_2$, ..., A$_v\}$, each A$_i$ with a set of finite values . It is possible that all the attributes either be qualitative or quantitative, or combination of both types. Since the CAs are in binary form, therefore, all the objects were essentially needed to be encoded into binary form. If it would not be done correctly, we would not get optimal and correct clustering. Therefore, to achieve intracluster distance as minimum as possible, we considered Hamming distance as prime factor to convert the target objects into binary strings. The method (frequency based encoding) already have been utilized and described in ~\cite{ref_r7}.\\ A quantitative attribute should be converted into binary string in such way that the all the objects could be arranged in either ascending or descending order together with the minimum Hamming distance between them. Now, We represent a hypothetical example to explain the encoding technique.\\\\
\textbf{Example 3.1:} Take a hypothetical set of books, where each book has three attributes: (1) number of pages, (2) ratings by reviewers and (3) type of binding. The first two attributes are quantitative and continuous in nature, whereas the last one is qualitative and categorical in nature. We have summarized the encoding details in Table 3 as shows below. In this scheme, qualitative attribute values are encoded as 01 (hard) or 10 (soft). The quantitative attribute values are encoded into three sub intervals to be represented by 00, 01 and 11, respectively. For example, the number of pages values are divided into sub intervals $[40, 110]$, $[120, 300]$
and $[325, 400]$, represented by 00, 01 and 11, respectively. So, the all the target objects are of six bit binary string. These strings can be utilized to show configuration of CAs of size six cells.
\begin{table}[h]
\centering
\begin{tabular}{ |p{0.5cm}| p{1.0cm}| p{1.5cm}| p{1.2cm}| p{1.5cm}|p{1.2cm}|p{1.5cm}|p{2.0cm}|}\hline
IDs & No of Pages &  Col1 - Binary String & Ratings & col2 - Binary String & Binding Type & col3 - Binary String & Encoded CAs\\\hline\hline
1  & 40  & 00 & 9.5 & 11 & soft & 01 & 001101\\ 
2  & 200 & 01 & 4   & 00 & hard & 10 & 010010\\
3  & 300 & 01 & 9   & 11 & hard & 10 & 011110\\
4  & 325 & 11 & 8   & 11 & soft & 01 & 111101\\ 
5  & 129 & 01 & 4.5 & 01 & soft & 01 & 010101\\
6  & 65  & 00 & 7   & 01 & hard & 10 & 000110\\
7  & 319 & 11 & 6.8 & 01 & soft & 01 & 110101\\ 
8  & 110 & 00 & 3   & 00 & soft & 01 & 000001\\
9  & 400 & 11 & 2.6 & 00 & soft & 01 & 110001\\
10 & 350 & 11 & 9.3 & 11 & soft & 01 & 111101\\\hline
\end{tabular}
\caption{Encoding of quantitative and qualitative attributes in binary form.}
\label{Table3}
\end{table}

In Table 3, the respective attributes of the objects have been encoded. In Table 4, the objects have been represented in the form of respective binary string, which produced by utilizing the binary form of their attribute value presented in Table 3.\\

Next, we have presented fully asynchronous CAs as a natural cluster, which connects these objects together to form clusters. So as to show that each object can be connected to a unique cyclic space in CAs.\\Under introduction and basics of cellular automata, we discussed about two important properties: (1) reversibility, and (2) reachability. We also had discussed about the bijection that every reversible CAs can be represented as a bijective mapping. Therefore, the reachable objects are said closed objects and belong to same group (cluster). Hence, the reachability property can be used to treat the CAs as the natural clustering tool.\\

From the above discussions, we can conclude that a CAs can be found effective for clustering if it maintains the following two conditions: (i) all the cells of CA must follow CA rules that contribute a reasonable number of self-replicating RMTs and (ii) their number of cycles should be limited. We already have represented these reversible CA rules in section 2.1.
In next subsection, we represent our cycle based clustering for fully asynchronous CAs that was designed by~\cite{ref_r7} for non uniform cellular automata.

\subsection{Algorithm for Cycle Based Clustering}
After detailed discussion on essentially and primarily required materials, we are presenting first algorithm for clustering using fully asynchronous cellular automata. The algorithm was given by~\cite{ref_r7}. It is named cycle based clustering algorithm, for clustering those objects into same cluster which are reachable to each other and forms a cycle. Let we have x, and y be two objects such that x, y $\in$ c$_i \cap$  c$_j$ . Then, any object z $\in c_j$ will not be reachable from any object of c$_i$.  Since x and y belongs to same cluster, therefore, they form a cycle. In forming these cycles and clusters, hamming distance plays an important role. The distance ensures that CA does not grows exponentially but maintains the minimum intractuster distance.
The steps of the algorithm are as below:

\begin{enumerate}
\item Based on the types of attributes (quantitative or qualitative), the target objects of a given dataset are encoded into the set of n-bit useful configurations that is denoted by $c$.
\item Randomly choose an arbitrary CA $\mathit{R}$ of size n that maintains rules at all cells. But choose anyone from the CAs 134, 142, 148, 150, 158, 212, and 214. Since they have same property and behaviour. With that CA, choose 105; 156, 198; 108 and 201. Discussed in Section~\ref{sec:S3}.

\item Let C' be the set of remaining objects to be clustered. So, initially C' = C and Set k = 1.

\item Let the configurations those are close to x$_i$ be C$^k$ such that C$^k \subset$ C'. Set C$^k$ = C$^k \cup $  x$_i$, C' =  C'$\backslash$ C$^k$. Increment k by 1.

\item And, iterate step 4 until all the configurations are clustered, and return total number of clusters $`m'$.
\end{enumerate}

\subsection{Algorithm to produce k clusters}
\label{subsec:algo}
The algorithm was invented by~\cite{ref_r7} where they used it to study non uniform cellular automata for clustering problem. The algorithm can be found in~\cite{ref_r7}. Though we have presented it as below algorithm 1:\\
\textbf{Input:} A dataset (set of target objects), required clusters (m), and an auxiliary CA space.\\
\textbf{Output:} The clusters $\{ C_1, C_2,..., C_m\}$\\\\
\textbf{Example 3.1:} We explain our algorithm by this example using real world dataset (Engineering Colleges - collected from Kegle plateform).\\
The datset has total 5 quantitative attributes those are  Teaching, Fees, Placements, Internship, and Infrastructure. The encoding of the attributes were done as per the procedure explained in the sub section 3.1 and discussed in example 3.1.\\\
The size of each encoded object become n = 15. So, we used reversible CA of size 15 X = $\{X_1, X_2, ... X_{25}\}$. We found 14 useful (unique) configurations out of the total 26 configurations. At level 0, these 14 configurations become the primary clusters C and those are:\\
X$_1$ = 001001001001001, X$_2$ =  001010001001001, X$_3$ = 001010001010010, X$_4$ =  010010001010010, X$_5$ = 010010001010011, X$_6$ =  010011010010011, X$_7$ = 011011010010011, X$_8$ = 011100010010011, X$_9$ =  011100011010011,\\ 
X$_{10}$ = 011100011011011,
X$_{11}$ = 011100011011100, X$_{12}$ = 011100100011100, X$_{13}$ = 011100100100100, X$_{14}$ = 100100100100100. The algorithm generates total 14 primary clusters at level 0. These clusters are C$^{0}_1$ = $\{$ X$_1$ $\}$, C$^{0}_2$ = $\{$ X$_2$ $\}$,..., C$^{0}_{14}$ = $\{$ X$_{14}$ $\}$
The algorithm selects a CA $\mathit{R}$ uniformly and random from the list of candidate CA and applies the clustering procedure. It generates 5 auxiliary CA such that aux clusters are auxC$_{1}$ = 
$\{$  C$^{0}_1$, C$^{0}_2$, C$^{0}_3$, C$^{0}_4$, C$^{0}_9$, C$^{0}_{10}$ , C$^{0}_{14}$ $\}$
auxC$_{2}$ = $\{$  C$^{0}_5$, C$^{0}_8$, C$^{0}_{11}$ $\}$
auxC$_{3}$ = $\{$  C$^{0}_6$ $\}$
auxC$_{4}$ = $\{$  C$^{0}_{12}$, C$^{0}_{13}$ $\}$
auxC$_{5}$ = $\{$  C$^{0}_7$ $\}$, where each C$^{0}_i$, i $\in \{1, 14\}$  has degree of participation of 100\%. It forms a total of 5 primary clusters for next level. On next second level an another CA were selected and used to clusters these 5 primary clusters to produce three desired clusters. This level produces the following auxiliary clusters
auxC$_{1}$ = $\{$  C$^{0}_1$ $\}$, degree of participation is 100\%.
auxC$_{2}$ = $\{$  C$^{0}_2$, C$^{0}_4$ $\}$ degree of participation of C$^{0}_2$ is 100\%, and of C$^{0}_4$ is 100\%
auxC$_{3}$ = $\{$ C$^{0}_3$, C$^{0}_5$ $\}$, degree of participation of C$^{0}_3$, and C$^{0}_5$ is 100\%, and 100\%. Since it clusters those primary clusters which have maximum degree of participation that is first maximum and second maximum. Here each participating primary cluster has maximum degree of participation 100\%. Hence it procudes the three clusters as below:\\
C$_{1}$ = $\{$  C$^{0}_1$, C$^{0}_2$, C$^{0}_3$, C$^{0}_4$, C$^{0}_9$, C$^{0}_{10}$ , C$^{0}_{14}$ $\}$, 
C$_{2}$ = $\{$  C$^{0}_5$, C$^{0}_8$, C$^{0}_{11}$, C$^{0}_{12}$, C$^{0}_{13}$ $\}$, and
C$_{3}$ = $\{$ C$^{0}_6$, C$^{0}_7$ $\}$.
\begin{algorithm}[h]
\caption{Algorithm for clustering given target objects}
\textbf{Step 1:} Read Encoded Target Objects (M), where $|M| = k'$\\
\textbf{Step 2:} Read n-cell auxiliary CA space\\
\textbf{Step 3:} \textbf{Input:} required number of clusters `m', and m$_0$ = k'\\
\textbf{for} j=1 to m$_0$

$\space \hspace{1cm}$ set c${^0}_{j} \leftarrow \{x_j\}$;\\
\textbf{Step 4:} while(True):

$\space \hspace{1cm}$ select auxiliary CA space randomly and uniformly

$\space \hspace{1cm}$ generate auxiliary clusters C$^{i}{_1}$, C$^{i}{_2}$,..C$^{i}{_m}$ for selected 

$\space \hspace{1cm}$ CA space.

$\space \hspace{1cm}$ create and initialize a matrix to store degree of 

$\space \hspace{1cm}$ participation A[a$_{ij}$]$_{m'*m_{i-1}}$ to 0

$\space \hspace{1cm}$ while t < m':

$\space \hspace{2cm}$  for j = 1 to m$_{i-1}$

$\space \hspace{2.5cm}$ a$_ij \leftarrow $ degPart(C$^{i}{_t}$, C$^{i-1}{_j}$)

$\space \hspace{1cm}$  for j = 1 to m$_{i-1}$

$\space \hspace{1cm}//$ find first maximum degree of participation in previous 

$\space \hspace{1cm}$ primary clusters

$\space \hspace{1cm}//$ suppose a$_{t'j}$ = max(a${_tj}$), 1$<=t<=$m

$\space \hspace{1cm}//$ find auxiliary cluster with max deg of part of the c$^{i-1}_{j}$

$\space \hspace{2cm}$ for j${_1}$ = 1 to m$_{i-1}$ and (j1 $!=$ j)

$\space \hspace{2.5cm}$ a$_{t'j'}$ = second maximum of a$_{tj1}$

$\space \hspace{2.5cm}$ if  a$_{t'j'}$ = 0 then $\mathit{Continue}$

$\space \hspace{2.5cm}$ else
 
$\space \hspace{3cm}$ select the cluster and update auxiliary

$\space \hspace{3cm}$ cluster set, increment z by 1

$\space \hspace{3cm}$ mark c$^{i-1}_j$ and c$^{i-1}{_j'}$ selected

$\space \hspace{3cm}$ delete the corresponding row t' from

$\space \hspace{3cm}$ matrix A

$\space \hspace{2cm}$ for all non selected clusters

$\space \hspace{2.5cm}$ select c$^i_{z}$ and increment z by 1

$\space \hspace{2cm}$ now set m$_i$ = z and increment i by 1

\textbf{Step 5:} \textbf{Output:}Print all the clusters produced by algorithm\\
\textbf{Exit}

\end{algorithm}

\subsection{Association of Reversible CA with k Cluster}
This section presents the discussion on the reversible CA associated to produce $k$ cluster corresponding to a given dataset. When the aove the experiment were performed, it was observed that the type of dataset, and association of reversible CA plays an important role in clustering under the cycle based clustering. An incorrect association of CAs and dataset may result into undesired clustering. Therefore, the Figure -3.2 and 3.3 depicts the association of CAs corresponding to a given dataset and desired number of clusters (k). 

\begin{figure}[h!]
\centering
		\includegraphics[scale = 0.55]{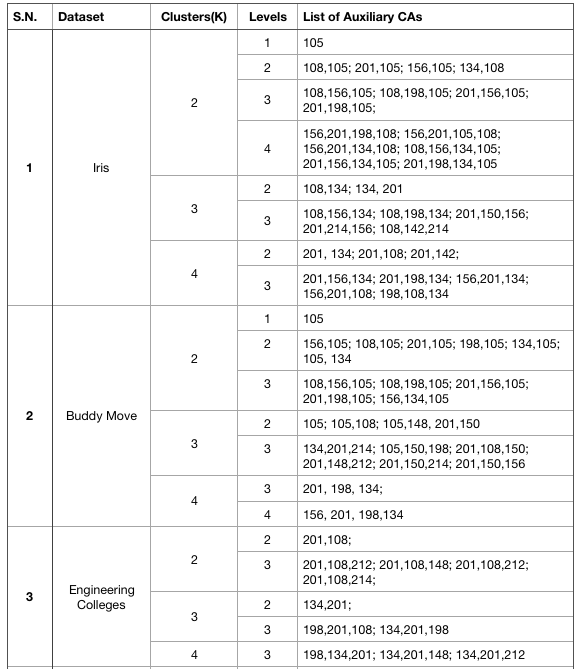}
	\caption{user Knowledge, Seeds, Stoneflakes and Segmentation datasets with associated clusters (k), level by which these clusters can be obtained corresponding to a given list of auxiliary reversible CA.}
\end{figure}
\begin{figure}[h!]
\centering
		\includegraphics[scale = 0.55]{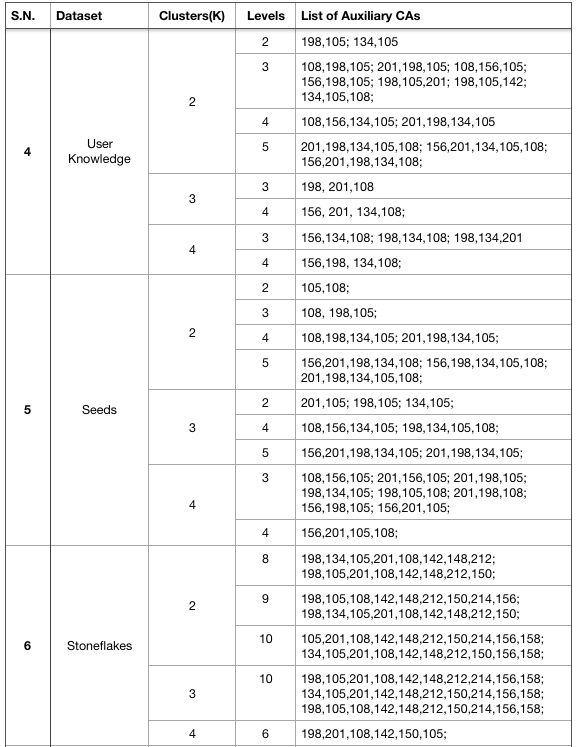}
            \includegraphics[scale = 0.481]{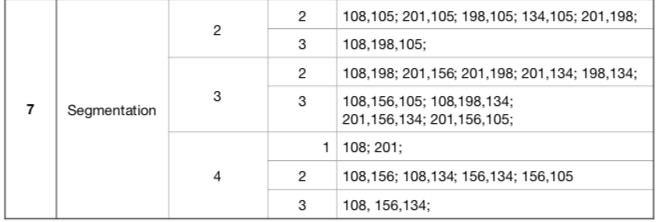}
	\caption{User Knowledge, Seeds, Stoneflakes and Segmentation datasets with associated clusters (k), level by which these clusters can be obtained corresponding to a given list of auxiliary reversible CA.}
\end{figure}

\section{Results and Discussion}
\label{sec:S4}
This section reports the performance and results of our iterative cycle-based clustering algorithm on some real datasets (Iris, Buddy-Move, User Knowledge, Seeds, Engineering Colleges, Segmentation, and Stonflake). These datasets can be collected from Kaggle and archive.ics.uci.edu/ml/in/dex.php.
The accuracy of the proposed CA based cyclic clustering algorithm and of the other three benchmark algorithms K-means, Hierarchical, and PAM (Partitioned Around Medoids) clustering algorithms were evaluated using the clustering validation indices and scores those are silhouette score, dunn index and db score provided by python package. The performance of each score corresponding to an algorithm and dataset for a given cluster size have been presented in tables Table 5 and Table 6. Table 5 and Table 6 contains the results of benchmark algorithm  clustering respectively for cluster k = 2, and k = 3.\\ For the performance analysis and their validation, we used benchmark cluster validation scores which are Dunn score, Silhouette score and DB score. The set of auxiliary CA with the smallest Davies-Bouldin score and highest silhouette score and Dunn index are considered to generate best quality of clustering results. In this work, we used seven datasets: Iris, Buddy Move, User Knowledge, Seed, Stoneflakes, Segmentation, and Engineering Colleges. Each of them has mostly quantitative attributes. The details of these datasets is reported into Table~\ref{Table4}.\\
In Table~\ref{Table4}, column 3 shows the number of quantitative attributes (Qnt attributes), and column 5 shows the target objects that are to be distributed among m clusters using an n-cell CA, where column 6 shows the size of CA for respective dataset.\\
In Table 5, we can observe that the proposed CA based algorithm, almost works well for clustering problems. For K=2 required clusters, it performs well for target objects of Iris, Buddy Move, User Knowledge, seeds. While it gives some comparative results for iris and seeds target objects. We can see their comparative performance in the Figure~\ref{fig:1a},\ref{fig:2a}, \ref{fig:3a}, \ref{fig:4a}, \ref{fig:5a}, \ref{fig:6a}, \ref{fig:7a} (for required clusters of k = 2). We observed that the Silhouette and Dunn score shows comparative performance to the benchmark clustering algorithms. The CA based clustering algorithm performed well for buddymove, user knowledge datasets while it shows comparative performance for Iris dataset. We observed that the CA based algorithm performs well for buddymove, and user knowledge (k=2). \\
If we talk about the three required clusters (K=3), Table~\ref{Table6} presents the statistical performance of the algorithm applied on the datasets presented in Table~\ref{Table4}. The proposed algorithm gives a comparative scenario of results with respect to smallest Davies-Bouldin score and highest silhouette score and Dunn index scores. The proposed CA based cyclic clustering algorithm performs as better as the PAM algorithm. This was observed from the Fig~\ref{fig:2b}, \ref{fig:5b}, \ref{fig:6b}, \ref{fig:7b} (for k=3). The observations made on the segmentation, stoneflakes, user knowledge, and iris supports our CA based algorithm, for better silhouette, and dunn scores which is depicted in the Figures 2 to 8 and same in Table~\ref{Table5} and Table~\ref{Table6}. \\ 
The reversible CAs 105, 108 214 or 108, 201, 214 can be used to  get two cluster of iris dataset. The required clusters would be processed at level 3 and 4. Similarly, two clusters of user knowledge would also be obtained with auxiliary CAs 158, 201 or 158 108, or 201 and 212 at level 4. The result of three clusters (k=3) presented in this article were obtained with the following auxiliary CA with respect to the dataset as mentioned: Iris - 108, 201, 212, 108, 105, 212. Buddy Move - 156, 198, 108, 201 and any CA from 134, 142, 148, 150, 158, 212 and 214. User Knowledge - 105, 156, 201, and 108,201. Seeds - 108, 214, 201 (for k=2, and k=3).

\begin{figure}[h]
	\vspace{1pt}
	\subfloat[\label{fig:1a}]{%
		\includegraphics[scale = 0.55]{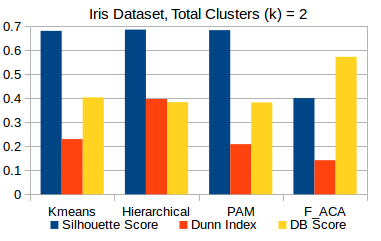}
	}
	\hfill
	\subfloat[\label{fig:1b}]{%
		\includegraphics[scale = 0.55]{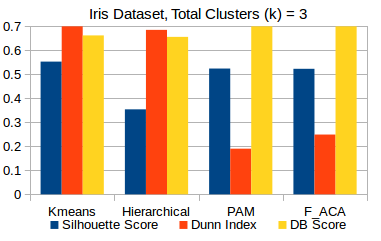}
	}    
	\vspace{-2pt}
	\caption{Iris Dataset, Performance of Silhouette, Dunn and DB Score: (a) Cluster (K = 2), (b) Cluster (K=3).}
\end{figure}

\begin{figure}[h]
	\vspace{-1pt}
	\subfloat[\label{fig:2a}]{%
		\includegraphics[scale = 0.55]{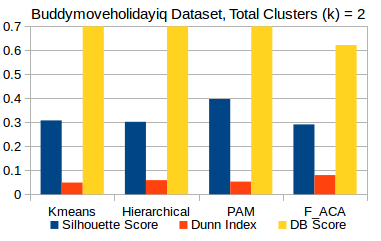}
	}
	\hfill
	\subfloat[\label{fig:2b}]{%
		\includegraphics[scale = 0.55]{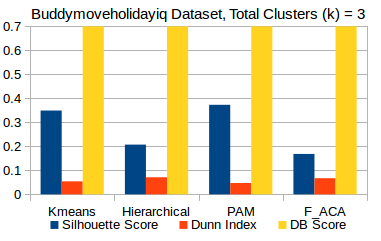}
	}    
	\vspace{-2pt}
	\caption{Buddymove Holidayiq Dataset, Performance of Silhouette, Dunn and DB Score: (a) Cluster (K = 2), (b) Cluster (K=3).}
\end{figure}

\begin{figure}[h]
	\subfloat[\label{fig:3a}]{%
		\includegraphics[scale = 0.55]{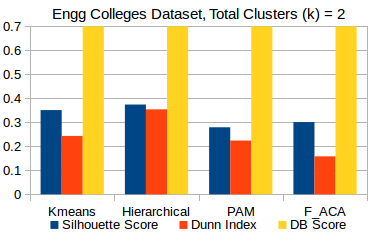}
	}
	\hfill
	\subfloat[\label{fig:3b}]{%
		\includegraphics[scale = 0.55]{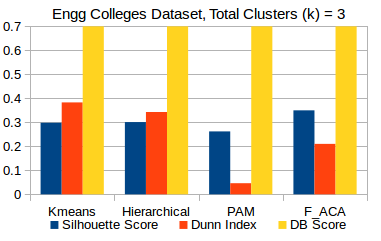}
	}    

	\caption{Engineering College Dataset, Performance of Silhouette, Dunn and DB Score: (a) Cluster (K = 2), (b) Cluster (K=3).}
\end{figure}

\begin{figure}[h]
	\vspace{-5pt}
	\subfloat[\label{fig:4a}]{%
		\includegraphics[scale = 0.55]{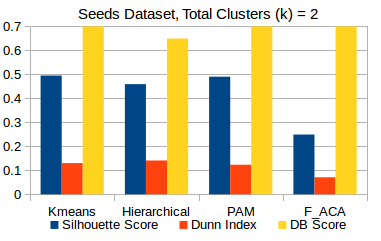}
	}
	\hfill
	\subfloat[\label{fig:4b}]{%
		\includegraphics[scale = 0.55]{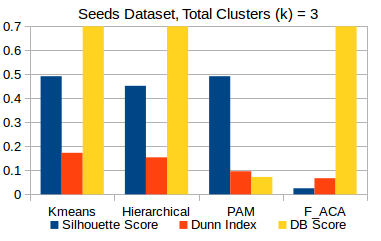}
	}    
	\caption{Seeds Dataset, Performance of Silhouette, Dunn and DB Score: (a) Cluster (K = 2), (b) Cluster (K=3).}
\end{figure}

\begin{figure}[h]
	\vspace{-8pt}
	\subfloat[\label{fig:5a}]{%
		\includegraphics[scale = 0.55]{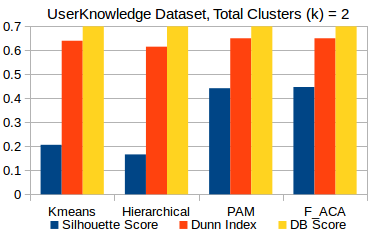}
	}
	\hfill
	\subfloat[\label{fig:5b}]{%
		\includegraphics[scale = 0.55]{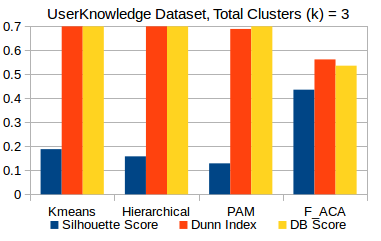}
	}    
	\caption{User Knowledge Dataset, Performance of Silhouette, Dunn and DB Score: (a) Cluster (K = 2), (b) Cluster (K=3).}
\end{figure}

\begin{figure}[h]
	\vspace{-8pt}      
	\subfloat[\label{fig:6a}]{%
		\includegraphics[scale = 0.55]{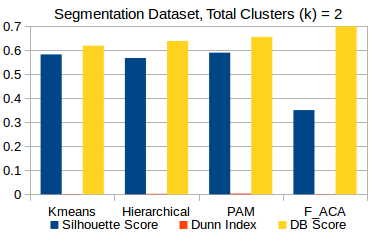}
	}
	\hfill
	\subfloat[\label{fig:6b}]{%
		\includegraphics[scale = 0.55]{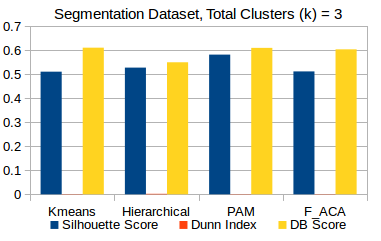}
	}    
	\caption{Segmentation Dataset, Performance of Silhouette, Dunn and DB Score: (a) Cluster (K = 2), (b) Cluster (K=3).}
\end{figure}

\begin{figure}[h]
	\vspace{-8pt}
	\subfloat[\label{fig:7a}]{%
		\includegraphics[scale = 0.55]{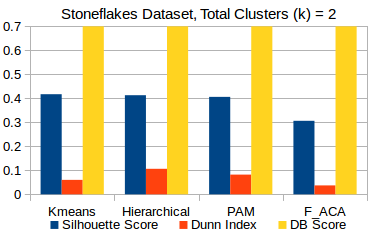}
	}
	\hfill
	\subfloat[\label{fig:7b}]{%
		\includegraphics[scale = 0.55]{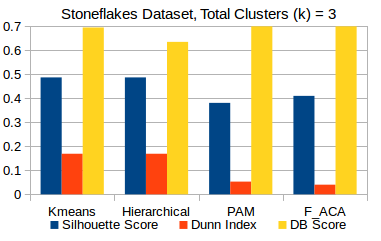}
	}    
	\caption{Stoneflakes Dataset, Performance of Silhouette, Dunn and DB Score: (a) Cluster (K = 2), (b) Cluster (K=3).}
\end{figure}

\begin{table}[h]
\centering
\begin{tabular}{ |p{0.5 cm}| p{3.0 cm}| p{2.2 cm}| p{2.0 cm}| p{2.0 cm}|p{1.4 cm}|}\hline
S.N. & Datasets & Qnt attributes & Qlt attributes & Total Objects & CA Size \\\hline\hline
1  & Iris   & 4  & 0 & 150 & 8 \\ \hline
2  & Buddy Move  & 5 & 0 & 403 & 10 \\\hline
3  & User Knowledge  & 6 & 0 & 249 & 12 \\\hline
4  & Seeds  & 7 & 0 & 210 & 14\\\hline
5  & Segmentation  & 4 & 0& 2000 & 8\\\hline
6  & Engineering Colleges  & 5 & 0 & 25 & 10\\\hline
7  & Stoneflakes  & 8 & 0 & 80 & 16\\\hline
\end{tabular}
\caption{Details of datasets which were used for experiment.}
\label{Table4}
\end{table}

\begin{table}[h]
\begin{center}
\begin{tabular}{ |p{2.5 cm}|p{2.5cm}|p{1.5cm}|p{2.0cm}|p{1.5cm}|p{1.5cm}|}
\hline
Datasets & Cluster Validation Indices & K-Means & hierarchical & PAM & fully ACA \\
\hline
\multirow{4}{4em}{Iris}
& Silhouette 	& 0.681& 0.686 & 0.684 & 0.401\\ 
& Dunn 	        & 0.230& 0.397 & 0.209 & 0.142\\ 
& DB Score   	& 0.404& 0.384 & 0.383 & 0.573\\
& Calinski H   	& 513.303& 501.924 & 500.112 & 500.601\\
\hline
\multirow{4}{4em}{Buddy Move} 
& Silhouette 	& 0.308& 0.302 & 0.397 & 0.291\\ 
& Dunn 	        & 0.049& 0.059 & 0.053 & 0.0801\\ 
& DB Score   	& 1.338& 1.211 & 0.913 & 0.622\\ 
& Calinski H   	& 119.973& 123.118 & 118.404 & 115.301\\
\hline
\multirow{4}{4em}{Stoneflakes} 
& Silhouette 	& 0.417 &0.413 & 0.406 & 0.306\\ 
& Dunn 	        & 0.060 &0.106 & 0.082 & 0.037\\ 
& DB Score   	& 1.010 &1.009 & 1.040 & 1.300\\ 
& Calinski H   	& 46.596&46.359 & 44.542 & 30.125\\
\hline
\multirow{4}{4em}{Seeds} 
& Silhouette 	& 0.495 &0.459 &0.490 &0.249\\ 
& Dunn 	        & 0.130 &0.141 &0.123 &0.071\\ 
& DB Score   	& 0.741 &0.649 &0.754 &0.916\\ 
& Calinski H   	& 310.501& 208.568 & 275.845 & 400.012\\
\hline
\multirow{4}{4em}{Segmentation} 
& Silhouette 	& 0.583 &0.568 &0.590 &0.351\\ 
& Dunn 	        & 0.001 &0.002 &0.003 &0.001\\ 
& DB Score   	& 0.619 &0.639 &0.656 &1.035\\ 
& Calinski H   	& 2777.304& 2753.427 & 2520.399 & 2201.001\\
\hline
\multirow{4}{4em}{Engg Colleges} 
& Silhouette 	& 0.351 &0.374 &0.279 &0.301\\ 
& Dunn 	        & 0.243 &0.354 &0.224 &0.158\\ 
& DB Score   	& 1.067 &1.037 &0.972 &2.250\\ 
& Calinski H   	& 19.051& 18.969 & 19.051 & 15.561\\
\hline
\multirow{4}{4em}{User Knowledge} 
& Silhouette 	& 0.206 &0.166 &0.442 &0.447\\ 
& Dunn 	        & 0.640 &0.615 &0.650 &0.650\\ 
& DB Score   	& 1.840 &2.142 &0.853 &0.853\\ 
& Calinski H   	& 109.996& 75.347 & 70.837 & 60.521\\
\hline
\end{tabular}
\end{center}
\caption{Clustering for cluster size K = 2. Table represents the cluster validation indices and scores of different datasets used for clustering with different clustering algorithms.}
\label{Table5}
\end{table}

\begin{table}[h]
\begin{center}
\begin{tabular}{ |p{2.5 cm}|p{2.5cm}|p{1.5cm}|p{2.0cm}|p{1.5cm}|p{1.5cm}|}
\hline
Datasets & Cluster Validation Indices & K-Means & hierarchical & PAM & fully ACA \\
\hline
\multirow{4}{4em}{Iris}
& Silhouette 	& 0.553 &0.354 &0.524 &0.523\\ 
& Dunn 	        & 0.717 &0.685 &0.190 &0.249\\ 
& DB Score   	& 0.662 &0.656 &0.786 &0.717\\ 
& Calinski H   	& 561.627& 556.841 & 520.313 & 367.739\\
\hline
\multirow{4}{4em}{Buddy Move} 
& Silhouette 	& 0.349 &0.207 &0.373 &0.168\\ 
& Dunn 	        & 0.054 &0.071 &0.047 &0.067\\ 
& DB Score   	& 1.039 &1.162 &1.102 &0.918\\ 
& Calinski H   	& 149.753& 103.809 & 144.449 & 101.991\\
\hline
\multirow{4}{4em}{Stoneflakes} 
& Silhouette 	& 0.487 &0.487 &0.3809 &0.410\\ 
& Dunn 	        & 0.169 &0.169 &0.0530 &0.040\\ 
& DB Score   	& 0.695 &0.635 &1.140  &1.350\\ 
& Calinski H   	& 97.069& 97.069 & 28.679 & 35.950\\
\hline
\multirow{4}{4em}{Seeds} 
& Silhouette 	& 0.492 &0.452 &0.492  &0.025\\ 
& Dunn 	        & 0.173 &0.154 &0.096  &0.067\\ 
& DB Score   	& 0.725 &0.756 &0.0722 &2.042\\ 
& Calinski H   	& 368.540& 303.465 & 369.546 & 450.121\\
\hline
\multirow{4}{4em}{Segmentation} 
& Silhouette 	& 0.511 &0.528 &0.582 &0.512\\ 
& Dunn 	        & 0.001 &0.002 &0.001 &0.001\\ 
& DB Score   	& 0.611 &0.550 &0.610 &0.604\\ 
& Calinski H   	& 3308.508& 2795.315 & 2932.353 & 1950.901\\
\hline
\multirow{4}{4em}{Engg Colleges} 
& Silhouette 	& 0.298 &0.301 &0.2621` &0.350\\ 
& Dunn 	        & 0.383 &0.343 &0.0461  &0.210\\ 
& DB Score   	& 1.060 &1.062 &1.150   &2.100\\ 
& Calinski H   	& 16.648& 16.648 & 12.729 & 12.651\\
\hline
\multirow{4}{4em}{User Knowledge} 
& Silhouette 	& 0.188 &0.158 &0.129 &0.436\\ 
& Dunn 	        & 0.742 &0.721 &0.689 &0.562\\ 
& DB Score   	& 1.595 &1.755 &2.230 &0.536\\ 
& Calinski H   	& 98.504& 82.057 & 93.853 & 65.556\\
\hline
\end{tabular}
\end{center}
\caption{Clustering for cluster size K = 3. Table represents the cluster validation indices and scores of different datasets used for clustering with different clustering algorithms.}
\label{Table6}
\end{table}
The performance of clustering algorithms varies significantly based on the dataset and the validation index used. Some general observations are:
\begin{itemize}
	\item[•] \textbf{KMeans:} Generally performed well across most indices but was outperformed by Hierarchical and PAM in some cases.
	\item[•] \textbf{Hierarchical:} Showed consistent performance, often leading in Silhouette and Dunn indices.
	\item[•] \textbf{PAM:} Performed well in Silhouette and DB scores, often outperforming other methods.
	\item[•] \textbf{Fully ACA:} Can be considered proposed method, with some exceptions in DB and Silhouette scores for specific datasets.

\end{itemize}
Selecting the appropriate clustering algorithm thus depends on the dataset characteristics and the priority of validation indices. Each algorithm has its strengths and may outperform others under different conditions.

\clearpage
\section{Conclusion}
\label{sec:S5}
We have studied and performed experiment on different types of real world datasets for clustering. We discussed the results and important observations in the section of results and discussion that we observed during the experiments. We found that the designed fully asynchronous CA based clustering algorithm performs well for clustering. It can be utilized for the clustering problems just like the other benchmark algorithms.
\chapter{Skew Asynchronous Cellular Automata }
\label{chap4}
\section{Introduction}
To break the assumption of global clock, cellular automata (CA) community has explored the notion of asynchronous cellular automata in the last two decades, see \cite{jca/Fates14}. Asynchronism is seen as an uncontrolled phenomenon where the cells are independent and are updated independently during the evolution of the system. To introduce this notion of independence (of cells), researchers have introduced different updating schemes \cite{BoureFC12,Bandini12,Alberto12,ROY2019600,CARONLORMIER2008522}, where the most studied schemes are fully asynchronous updating scheme \cite{FATES20061,Sethi2016} and $\alpha$-asynchronous updating scheme \cite{BoureFC12}. In fully asynchronous updating scheme \cite{FATES20061,Sethi2016}, a random cell is selected at each time step to update following uniform distribution. On the other hand, each cell has a given probability $\alpha$ to update and a probability $1 - \alpha$ not to update in $\alpha$-asynchronous updating scheme \cite{BoureFC12}. 

In the literature, some researchers have considered fully asynchronous updating scheme as `just' a sequential updating scheme \cite{Bandini12}. On the contrary, Fat\`{e}s \cite{jca/Fates14} has argued that fully asynchronous cellular automata (ACA) is the most `natural' updating scheme following the continuous nature of `real' time. In the same direction, a strong counter argument provides the justification that fully asynchronous updating scheme is a special case of {\em atomicity property} which is capable to model concurrent and distributed systems \cite{roydistributed}. In fact, in the early work, Zielonka et al. \cite{robco} have initially introduced ACA for modelling distributed systems using the atomicity property. 

Following the atomicity property \cite{roydistributed,Anindita12}, during the update of a cell's state, we say that the cell is enabled for update. While enabled, a cell first reads the states of its self and neighbours following the neighbourhood dependency, and then acts following the state transition function to update its state. This entire operation, i.e. reading of self and neighbour's states and update of the cell's own state, is considered as {\em atomic}. Therefore, during the enablement of a cell, its neighbours can not update their states. In other words, no two neighbouring cells can be enabled simultaneously. We call this cell update property as {\em atomicity property}. However, more than one cell at the same time step can be enabled following the atomicity property. For instance, at most half of the cells can be enabled following three-neighbourhood dependency in one-dimensional system. 

However, this assumption of atomicity property is not always applicable if we see cellular automata as a model of societal phenomenon. For example, in society, some neighbouring populations follow the same societal norms which violate the independence introduced by atomicity property. Moreover, for modelling natural systems, atomicity property is not a `good' choice as the perturbation (or noise) is applied to all cells of a small part of the system. With this motivation, this study observes the effect of the cellular system after breaking the atomicity property. In the first step towards this direction, we introduce the notion of {\em skew}-asynchronous updating scheme, where two neighbouring cells are allowed to be enabled simultaneously. Specifically, for a one-dimensional system, at each time step we randomly and uniformly select one cell and update the state of that selected cell and the state of it's right neighbouring cell. That is, two neighbouring cells are bound to update together (in other words, they are bound to follow the same societal norms), however, the notion of independence is still there in the overall system.

Particularly, in this paper, we observe the effect of breaking the atomicity property in elementary cellular automata (ECAs). Recently \cite{roy:hal-04456320}, we have completed the characterization of the dynamics, i.e. convergence, recurrence, and non-convergence non-recurrence, of ECAs following fully asynchronous updating scheme. In the light of this characterization, here, we classify $88$ minimal ECAs following qualitative and quantitative approaches under proposed skewed environment in comparison with fully asynchronous environment. This displays the overall picture after breaking atomicity property. As we will see, this  study is sufficiently rich to provide many kinds of worthy examples. Some ECAs show their absolute irritation in the absence of atomicity property, where some convergent ECAs under fully asynchronous scheme show opposite divergence dynamics in the absence of atomicity property, and vice versa. Moreover, Some ECAs show dependency on CA size for convergence following atomicity property \cite{Sethi901}, however, the same is not applicable for skewed environment. Obviously, some ECAs ignore the absence of atomicity property. Following this, in the second part of this study, we theorize the conditions for convergence towards all $0$ and all $1$ point attractors under proposed skewed environment. 

\section{Cellular automata, asynchronism and experimental setup}
\label{S1}
In this work, we consider elementary cellular automata (ECAs), i.e. one-dimensional three - neighbourhood (left, self, right) and two-state (\{$0,1$\}) cellular automata. Here, we consider periodic boundary condition, i.e. cells are arranged as a ring. The set of indices that represent each cell is denoted by $\mathcal{L} = \mathbb{Z}/n\mathbb{Z}$, here, the number of cell is $n$. The collection of all states at a given time is called a configuration. We denote $\varepsilon_n = $\{$0,1$\}$^\mathcal{L}$ as the set of configurations. Following the local transition function $f: \{0,1\}^3 \rightarrow \{0,1\}$, we define elementary cellular automata which indicates how a cell updates its state according to its own state and the state of its left and right neighbours. In Table~\ref{T1}, we express local transition function in a look-up table format where Rule Min Term (RMT) represents each argument by $r = 4 \times x + 2 \times y + z$, see Table~\ref{T1} (row $2$). Finally, the decimal equivalent of the eight outputs is called ``rule'', i.e. $f(1,1,1)\cdot 2^7 + f(1,1,0)\cdot 2^6 + \cdots + f(0,0,0)\cdot2^0$. We call an RMT active if it changes the state of a cell, i.e. $f(x,y,z) \neq y$, otherwise it is passive.  Note that, a configuration can also be written as a sequence of RMTs. Therefore, we have $2^8 = 256$ ECAs, out of which $88$ are minimal representative ECAs and the rests are their equivalent. Table~\ref{Table1} shows example ECA $134$ where RMT $6$ (resp. $7$) is active (resp. passive).

\begin{table}[!htbp]    
\centering
\scriptsize
\begin{tabular}{cccccccccc} \hline  
($x$,$y$,$z$)       &111&110&101&100&011&010&001&000& Rule \\    
(RMT)               &(7)&(6)&(5)&(4)&(3)&(2)&(1)&(0)& \\\hline  
$f$($x$,$y$,$z$)    & 1 & 0 & 0 & 0 & 0 & 1 & 1 & 0 & 134\\ \hline
\end{tabular}
\caption{Look-up table for ECA 134.\label{Table1}}  
\end{table}

Here, we update the system following fully asynchronous and proposed skew-asynchronous updating schemes. In fully asynchronous updating scheme, one cell is selected randomly and uniformly for update at each time step. Similarly, in the proposed {\em skew}-asynchronous scheme, one cell (say $i$) is selected randomly and uniformly, and the local rule is applied for update to cell $i$ and $i+1$ (right neighbouring cell of the selected cell) at each time step. Let ($U_t$)$_{t \in \mathbb{N}} \in \mathcal{L}^{\mathbb{N}}$ denotes the random sequence of selected cells for update where $\mathbb{N} = \{0,1,\cdots\}$ is the set of natural numbers. Evolution of the ECA under these asynchronous updating scheme from an initial configuration $x$ is represented by the stochastic process ($x^t$)$_{t \in \mathbb{N}}$, and defined recursively by: $x^0 = x$ and $x^{t+1} = F(x^t,U_t)$ with

(For fully asynchronous): $ x^{t+1}_i =
\begin{cases}
f (x_{i-1}^t,x_i^t, x_{i+1}^t)&  \text{ if }  i = U_t \\
x^t_i  & \text{ otherwise.}
\end{cases}$

(For skew-asynchronous): $ x^{t+1}_i =
\begin{cases}
f (x_{i-1}^t,x_i^t, x_{i+1}^t)&  \text{ if }  i = U_t \text{ or }  i = U_t + 1 \\
x^t_i  & \text{ otherwise.}
\end{cases}$

Next, a configuration $x \in \varepsilon_n$ is called a {\em point attractor} if we have $F(x,u) = x$ for all $u \in \mathcal{L}$, that is, when all the RMTs of $x$ are passive. We have also introduced the recurrent property of (fully) ACA by introducing the notion of {\em recurrent} and {\em transient} configuration, see \cite{ref_r4,ref_r16}. Here, a configuration $x \in \varepsilon_n$ is recurrent if for every configuration $y$ that is reachable (following a sequence of successor relations) from $x$, $x$ is also reachable from $y$; a non-recurrent configuration is transient (i.e. after some finite time, it is not possible to return back again).  To sum up. following three are the possible dynamics of the system: (i) For convergent system, starting from any initial configuration, the system converges to point attractor; (ii) For recurrent system, the system shows kind of reversible `eternal return' phenomenon where all configurations are recurrent; and (iii) For non-convergent non-recurrent system, it shows the dynamics of convergence towards {\em multi-length} attractor, where few configurations are recurrent and rest are transient.

Following this, in this first experiment, we follow the well-established \cite{ROY2019600} qualitative and quantitative experimental approaches: (a) Firstly, we need to observe the evolution of the system through space-time diagrams which can be able to provide an important qualitative visual comparison. We consider $n \in$ [$90,100$] and evolve the system for $1000$ time steps; and (b) Secondly, we need to calculate the density of a configuration $x$ which can be written as $d_x = x_1/n$ ($x_1$ is the number of $1$s in configuration $x$ and $n$ is the lattice size). We start with $n = 100$ considering initial density $d_{ini} = 0.5$; and evolve the system again for $1000$ steps; and calculate the density of the configuration in every step. The change in density during evolution of the system provides the formal quantitative comparison of two asynchronous updating schemes. 

\section{Dynamics of skew-asynchronous cellular automata}

This section classifies the dynamics of ECA under skew-asynchronous scheme in comparison with fully asynchronous dynamics. In Table~\ref{Table2}, \verb|Fully|$_P$ denotes the dynamics (property) of these $88$ minimal ECAs under fully asynchronous updating scheme where convergence, recurrence and non-convergence non-recurrence dynamics respectively denoted as \verb|C|, \verb|R| and \verb|NC-NR|. We have identified $50$ convergent ECAs under fully asynchronous environement \cite{Sethi2016} where starting from any initial configuration cellular system converges to point attractor. Moreover, some ECAs \cite{Sethi901} show convergence dynamics depending on the lattice size ($n \in 2\mathbb{N}$ or $n \in 3\mathbb{N}$) under atomicity property. In the (opposite) divergence dynamics, we have identified $18$ recurrent ECAs \cite{ref_r4,ref_r16} following atomicity property which can be able to capture reversible \cite{ref_r4,ref_r16} `eternal return' phenomenon \footnote{ECA $204$ belongs to both convergenct and recurrent class.}. We have marked rest divergent fully asynchronous ECAs as non-convergent non-recurrent where some configurations are recurrent and some are transient (in a different view, they depict convergence towards multi-length attractor), see \cite{roy:hal-04456320}.

\begin{table}[h]  
\centering
\scriptsize
\begin{tabular}{lll|lll|lll} \hline  
ECA & Fully$_{P}$ & Skew$_{C}$ & ECA & Fully$_{P}$ & Skew$_{C}$ & ECA & Fully$_{P}$ & Skew$_{C}$ \\ \hline 
0 & C & \checkmark & 1 & NC-NR & \checkmark & 2 & C & \checkmark \\
3 & NC-NR & \checkmark & 4 & C & \checkmark & 5 & C & \checkmark \\
6 ($n \in 2\mathbb{N}$) &  C & \checkmark & \textbf{6} ($n \in 2\mathbb{N}+1$) & NC-NR & $\times$ & 7 ($n \in 2\mathbb{N}$) & C & \checkmark \\
7 ($n \in 2\mathbb{N}+1$) &  NC-NR & \checkmark & 8  & C & \checkmark  & 9  & NC-NR & \checkmark \\
10 & C & \checkmark & 11 & NC-NR & \checkmark & 12 & C & \checkmark \\
13 & C & \checkmark & 14 ($n \in 2\mathbb{N}$) & C & \checkmark & 14 ($n \in 2\mathbb{N}+1$) & NC-NR & \checkmark \\
15 ($n \in 2\mathbb{N}$) & C & \checkmark & 15 ($n \in 2\mathbb{N}+1$) & NC-NR & \checkmark & 18 & C & \checkmark \\
19 & NC-NR & \checkmark & 22 ($n \in 2\mathbb{N}$) & C & \checkmark & \textbf{22} ($n \in 2\mathbb{N}+1$) &  NC-NR   & $\times$ \\
23 ($n \in 2\mathbb{N}$) & C & \checkmark & 23 ($n \in 2\mathbb{N}+1$) & NC-NR & \checkmark & 24  &  C  & \checkmark \\
25 & NC-NR & \checkmark & \textbf{26} & C & $\times$ & 27 & NC-NR & \checkmark \\
28 & NC-NR & \checkmark & 29 & NC-NR & \checkmark & 30 ($n \in 2\mathbb{N}$) & C & \checkmark \\
30 ($n \in 2\mathbb{N}+1$) &  NC-NR & \checkmark & 32  & C & \checkmark  & 33  & R & \checkmark \\
34 &  C & \checkmark & 35  & R & \checkmark  & 36  & C & \checkmark \\
37 ($n \in 3\mathbb{N}$) &  C & \checkmark & 37 ($n \notin 3\mathbb{N}$)  & NC-NR & \checkmark  & \textbf{38}  & R & $\times$ \\
40 &  C & \checkmark & 41  & R & \checkmark  & 42  & C & \checkmark \\
43 &  R & \checkmark & 44  & C & \checkmark  & 45 ($n \in 3\mathbb{N}$) & C & \checkmark \\
45 ($n \notin 3\mathbb{N}$)  & NC-NR  & \checkmark & 46  & R & \checkmark  & 50  & C & \checkmark \\
51 &  R & \checkmark & \textbf{54}  & R & $\times$   & 56  & C & \checkmark \\
57 &  R & \checkmark & \textbf{58}  & C & $\times$  & 60  & R & \checkmark \\
62 &  R & \checkmark & 72  & C & \checkmark  & 73  & NC-NR & \checkmark \\
74 &  C & \checkmark & 76  & C & \checkmark  & 77  & C & \checkmark \\
78 &  C & \checkmark & \textbf{90}  & C & $\times$ & 94  & C & \checkmark \\
104 &  C & \checkmark & 105 ($n \notin 4\mathbb{N}$)  & R & \checkmark & \textbf{105} ($n \in 4\mathbb{N}$)  & R & $\times$ \\
106  & C & \checkmark & 108 &  R & \checkmark & \textbf{122}  & C & $\times$ \\
128  & C & \checkmark & 129 &  NC-NR & \checkmark & 130  & C & \checkmark  \\
132  & C & \checkmark & \textbf{134} &  R & $\times$ & 136  & C & \checkmark  \\
137  & NC-NR & \checkmark & 138 &  C & \checkmark & 140  & C & \checkmark  \\
142  & R & \checkmark & 146 &  C & \checkmark & \textbf{150}  & R & $\times$  \\
152  & C & \checkmark & 154 &  C & \checkmark & 156  & R & \checkmark  \\
160  & C & \checkmark & 162 &  C & \checkmark & 164  & C & \checkmark  \\
168  & C & \checkmark & 170 &  C & \checkmark & 172 & C & \checkmark  \\
178  & C & \checkmark & 184 &  C & \checkmark & 200 & C & \checkmark  \\
204  & C,R & \checkmark & 232 &  C & \checkmark & & &   \\
\hline
\hline
\end{tabular} 
\caption{$88$ minimal ECAs with their dynamics under fully and skew-asynchronous environment. Here, C,~R,~ NC-NR respectively notes convergence, recurrence and non-convergence non-recurrence dynamics.\label{Table2}}
\end{table}

Next, we compare the dynamics of these $88$ minimal ECAs following skewed environment. Following are the most remarkable observations of this study:

\begin{itemize}
\item[(a)] ECAs \textbf{26},~\textbf{58},~\textbf{90} and \textbf{122} show convergence dynamics (mostly towards all $0$ point attractor) under fully asynchronous updating scheme. However, these ECAs depict divergence dynamics in the absence of atomicity property.

\item[(b)] In an opposite direction, ECAs \textbf{38},~\textbf{54},~\textbf{134}, and \textbf{150} depict recurrence (reversible `eternal return' phenomenon) dynamics following atomicity property. Surprisingly, these ECAs show convergence (towards all $0$ point attractor) under skewed environment.

\item[(c)] ECAs \textbf{6} and \textbf{22} show most interesting dependency on lattice size following atomicity property. The system converges to point attractor for  $n \in 2\mathbb{N}$, and, show multi length attractor of length $2n$ (i.e. \verb|NC-NR|) for $n \notin 2\mathbb{N}$ (see \cite{roy:hal-04456320}). However, under skewed environment, the cellular system shows convergence for any $n \in \mathbb{N}$. 

\begin{figure}[h!tbp]
\begin{center}
\begin{tabular}{ccccc}
ECA & Synchronous & Fully ACA & Skew-ACA &  \\ 
26 & \includegraphics[width=24mm]{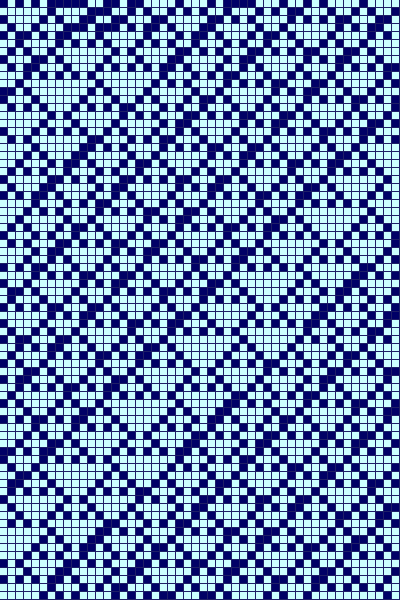} &  \includegraphics[width=24mm]{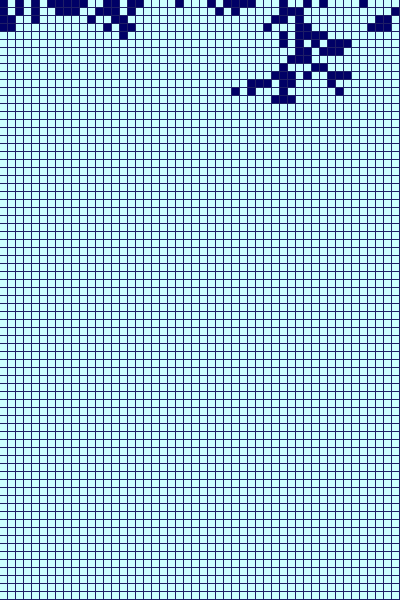} & \includegraphics[width=24mm]{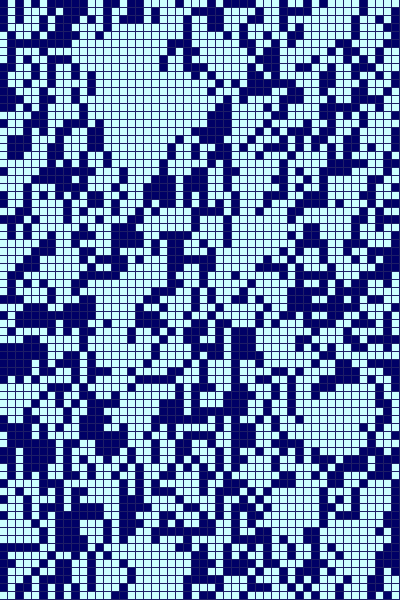} &  \includegraphics[width=52mm]{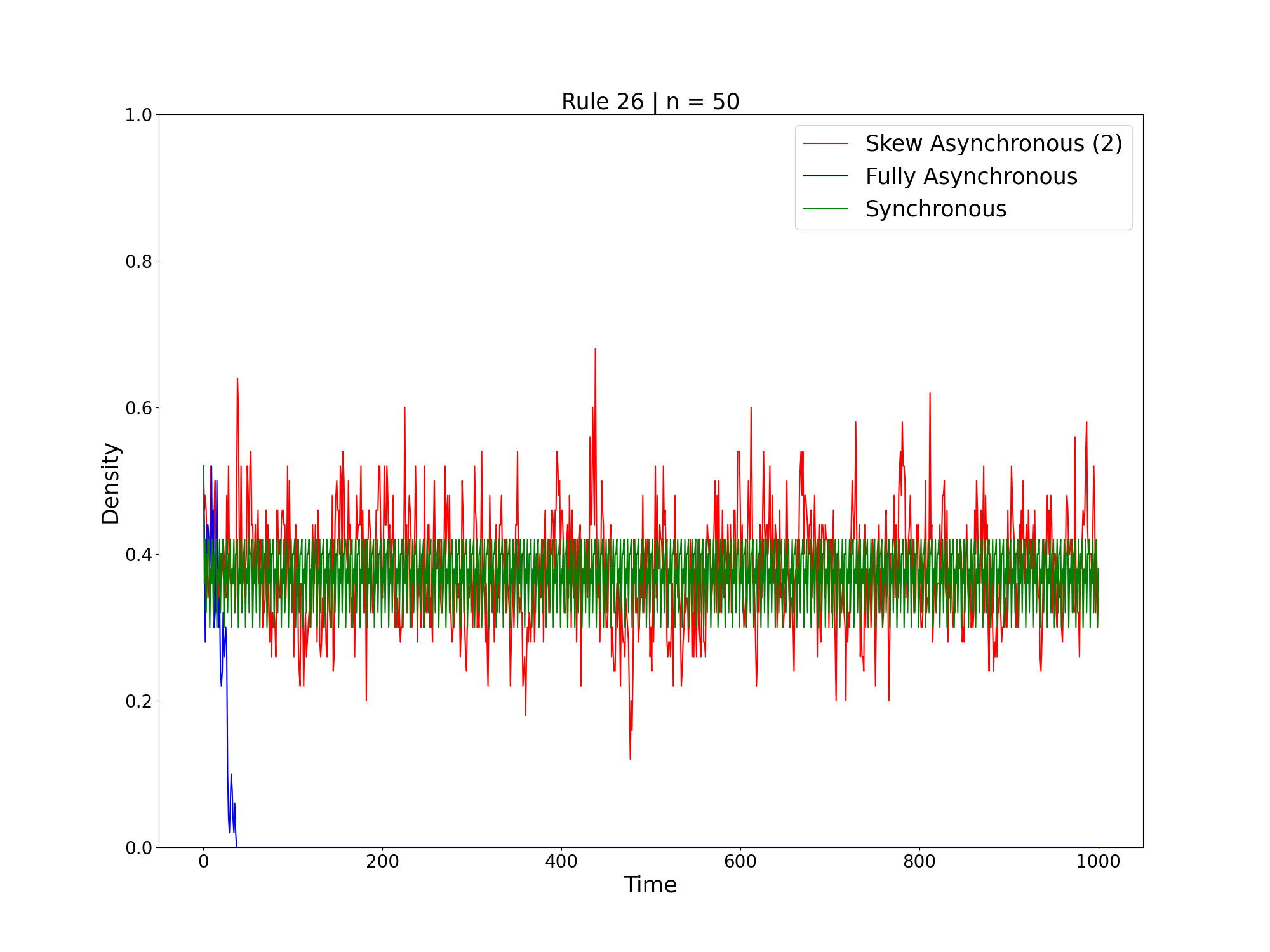} \\
50 & \includegraphics[width=24mm]{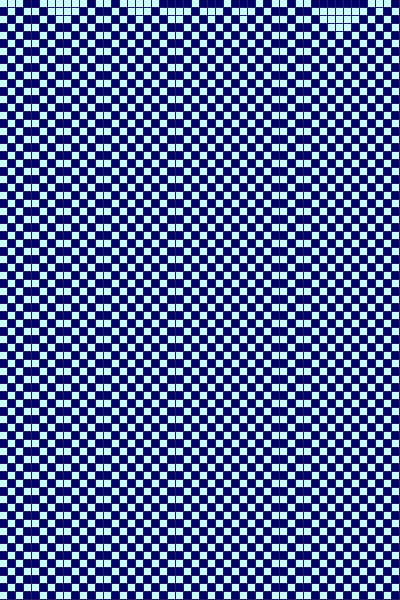} &  \includegraphics[width=24mm]{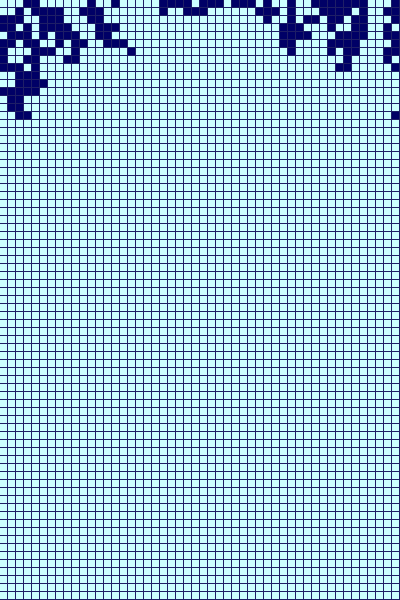} & \includegraphics[width=24mm]{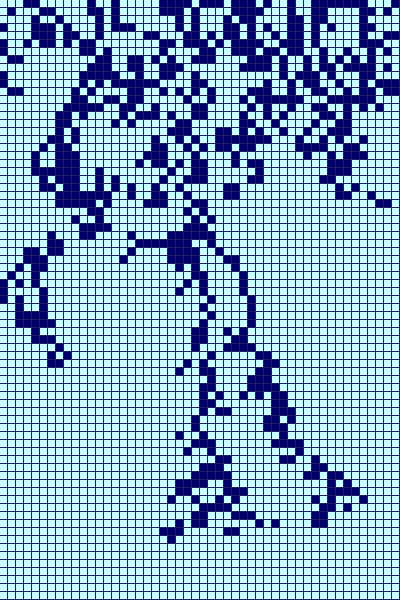} &  \includegraphics[width=52mm]{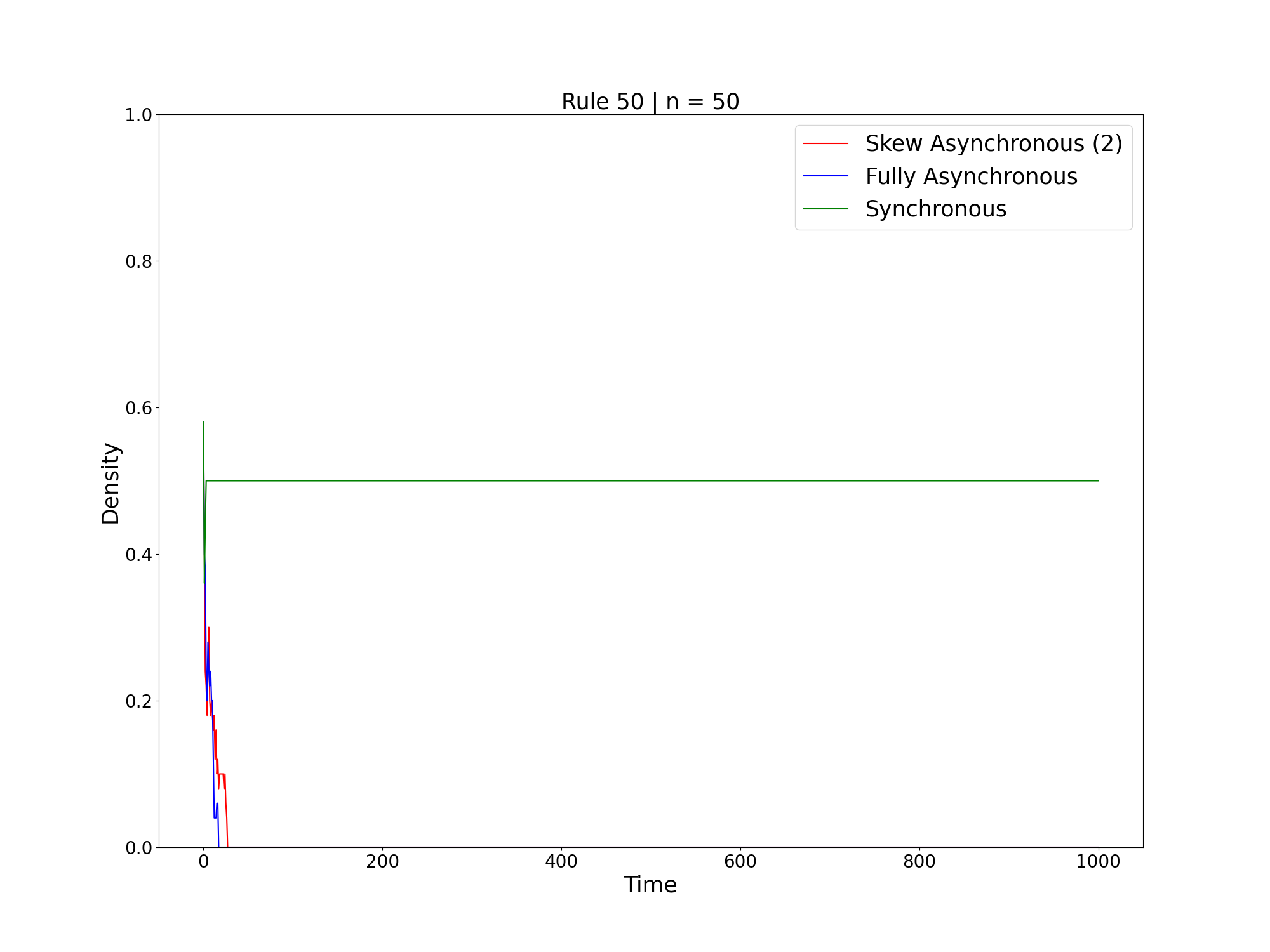} \\
58 & \includegraphics[width=24mm]{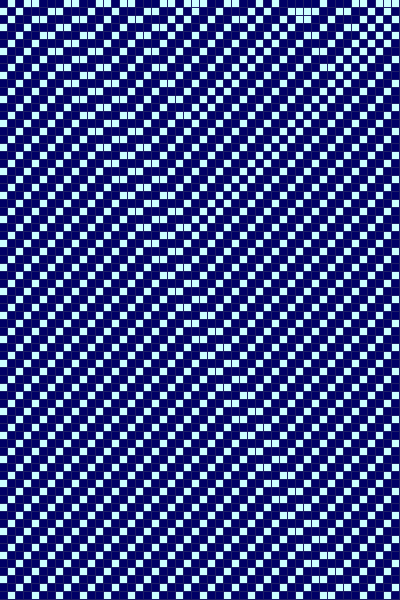} &  \includegraphics[width=24mm]{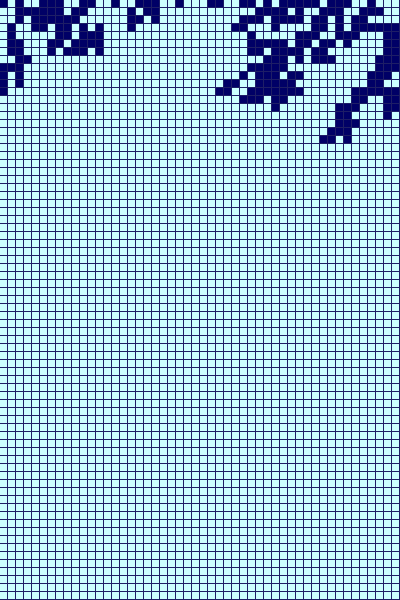} & \includegraphics[width=24mm]{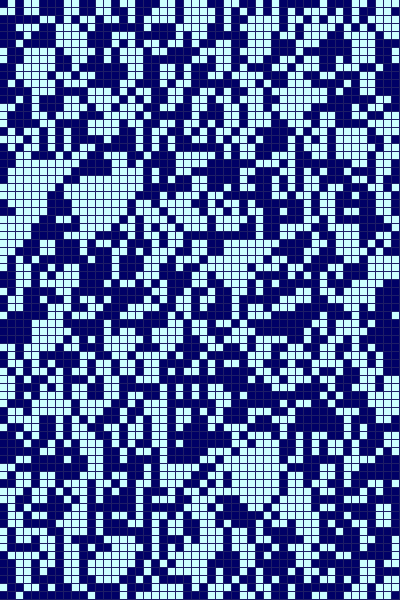} &  \includegraphics[width=52mm]{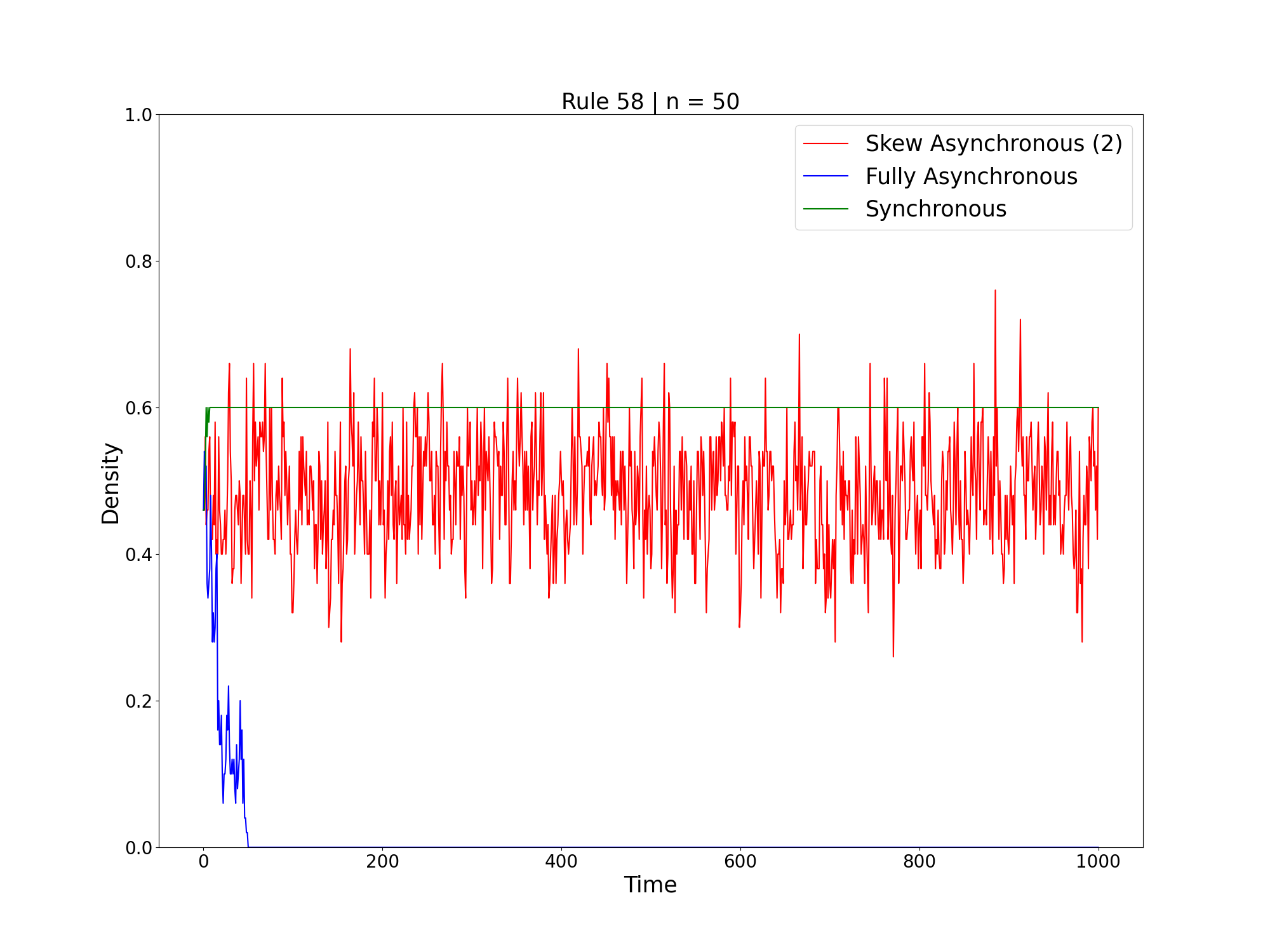} \\
\end{tabular}
\caption{\small{Dynamics of ECAs $26$,~$50$ and $58$ for - synchronous, fully asynchronous and skew-asynchronous systems. This figure also depicts the density-time plot. Here, $n = 50$ where the time goes form top to bottom. Each line represent the configuration of automata after $1,~n,~n/2$ updates respectively for changing updating schemes. Blue and white squares represent cells in state $1$ and $0$. This convention is kept in the rest of the text.}}
\label{Fig1}
\end{center}
\end{figure}

\item[(d)] Lastly, ECA \textbf{105} shows strange behaviour which shows convergence dynamics under skewed environment for lattice size divisible by $4$ ($n \in 4\mathbb{N}$). However, for $n \notin 4\mathbb{N}$, ECA $105$ shows divergence under skewed environment. Note that, this rule depicts recurrence behaviour under fully asynchronous updating for any lattice size $n \in \mathbb{N}$.
\end{itemize}

\begin{figure}[h!tbp]
\begin{center}
\begin{tabular}{ccccc}
ECA & Synchronous & Fully ACA & Skew-ACA &  \\ 
38 & \includegraphics[width=24mm]{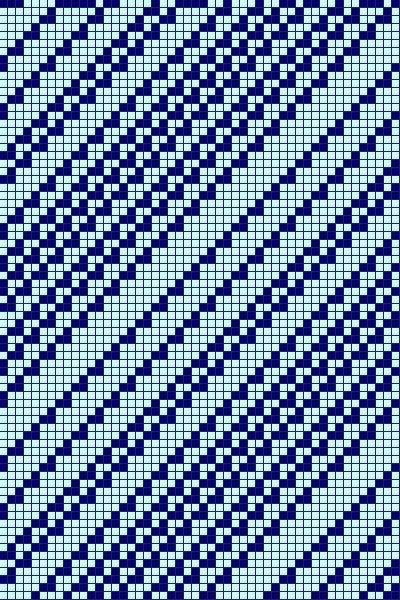} &  \includegraphics[width=24mm]{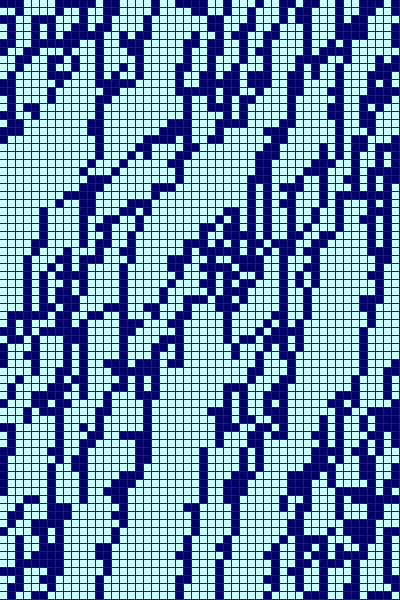} & \includegraphics[width=24mm]{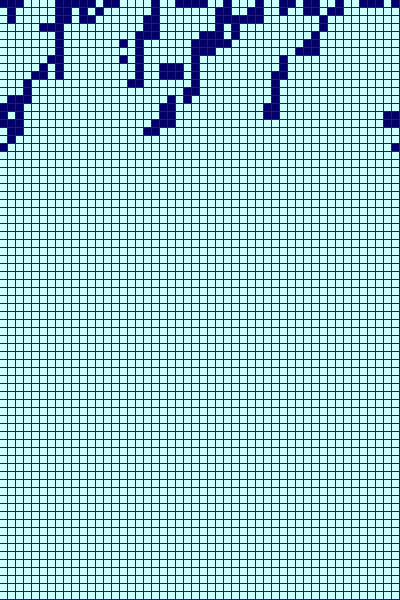} &  \includegraphics[width=52mm]{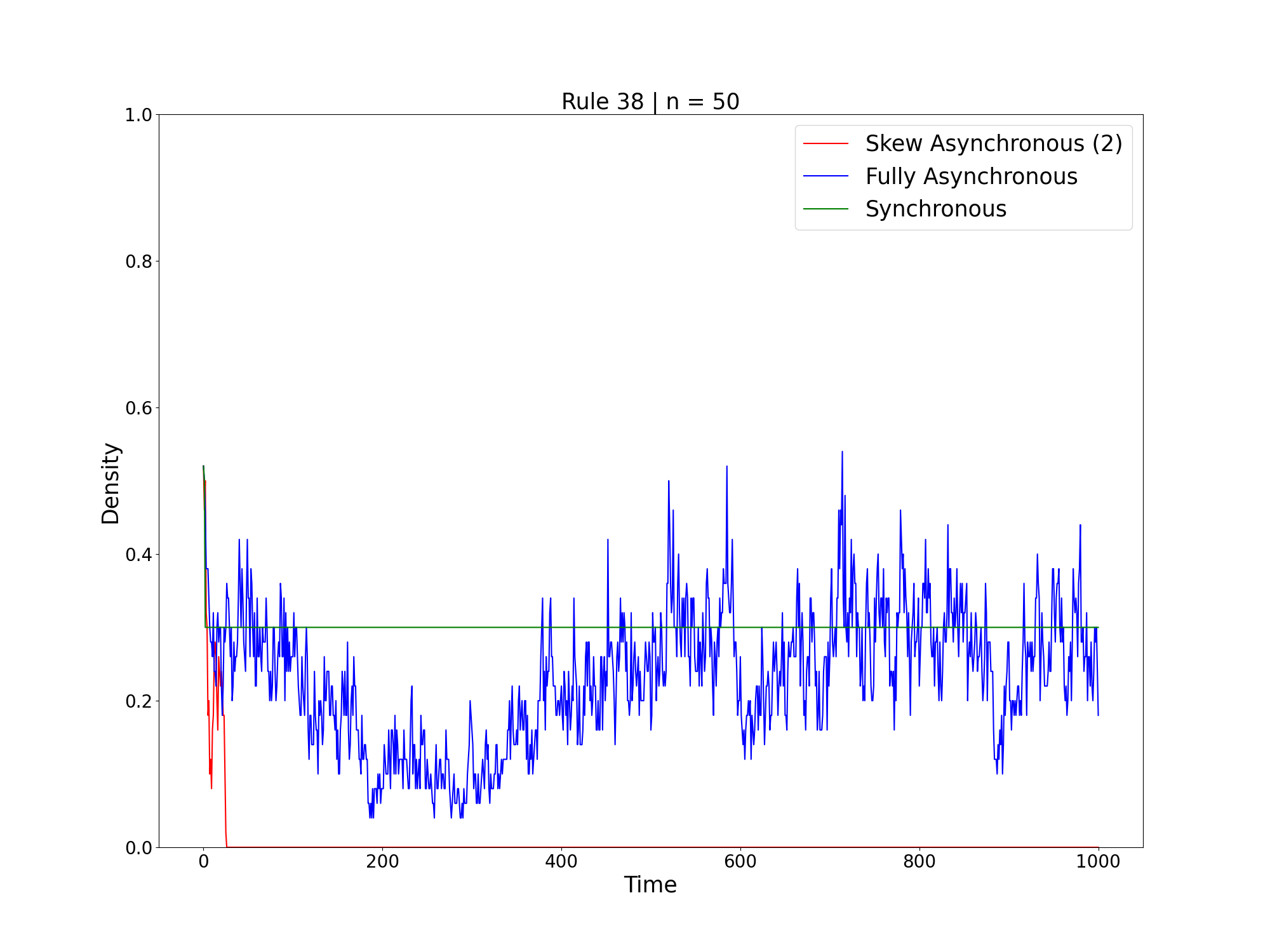} \\
54 & \includegraphics[width=24mm]{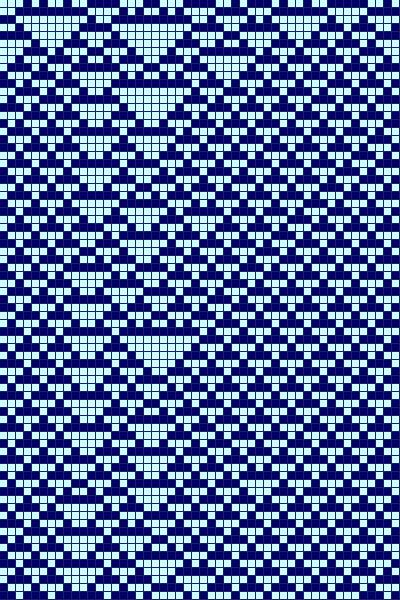} &  \includegraphics[width=24mm]{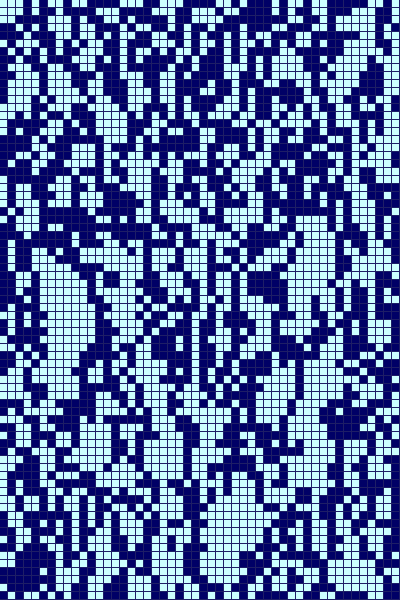} & \includegraphics[width=24mm]{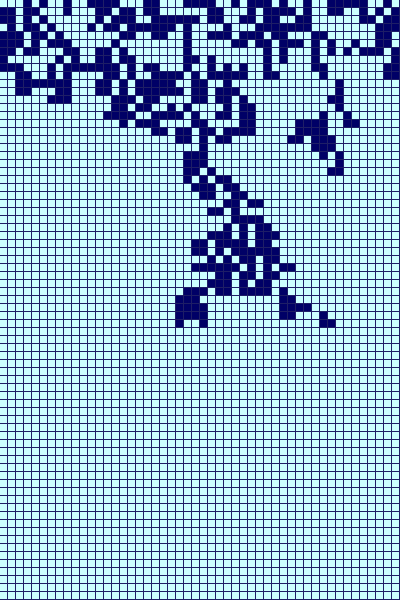} &  \includegraphics[width=52mm]{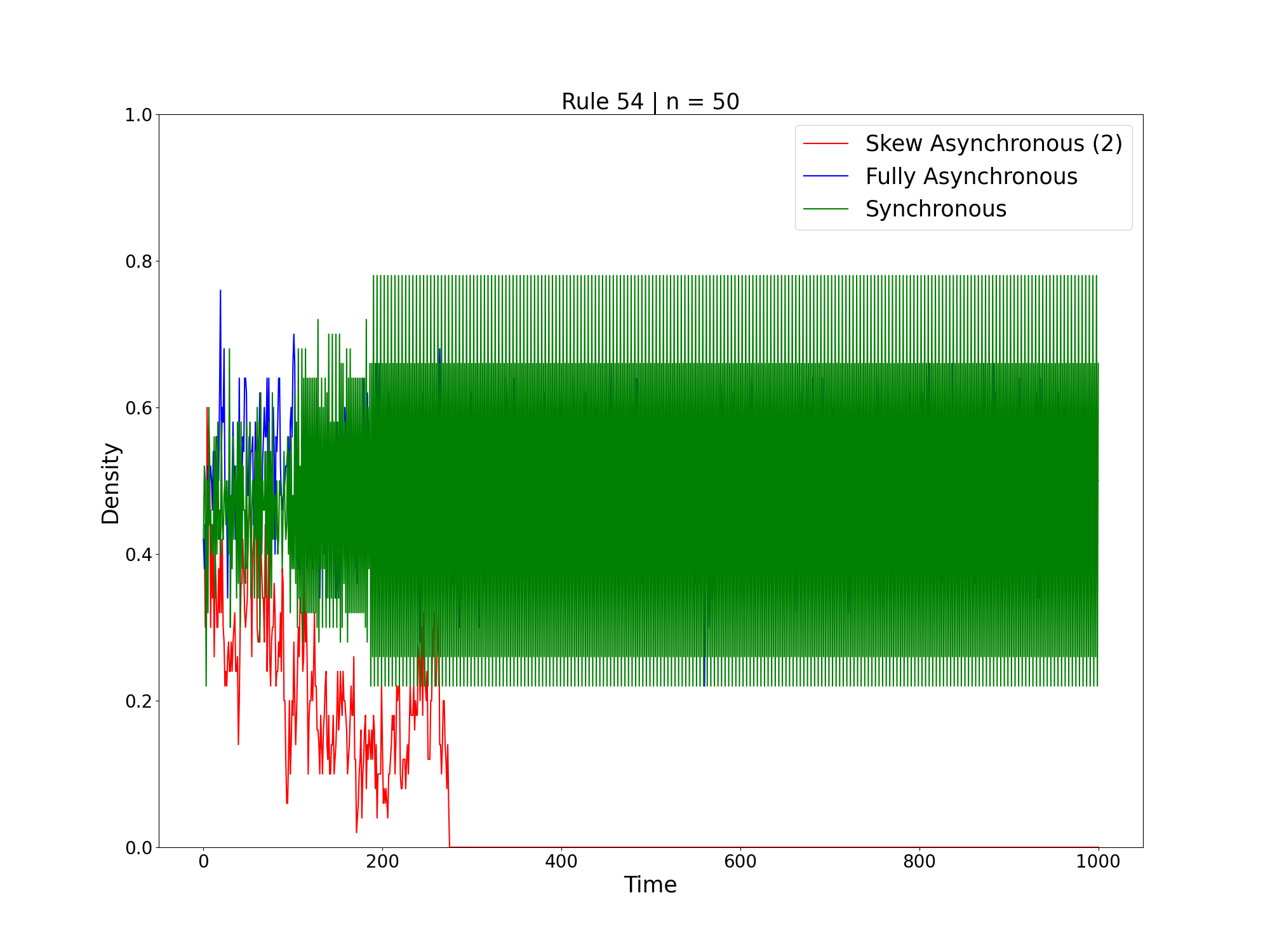} \\
62 & \includegraphics[width=24mm]{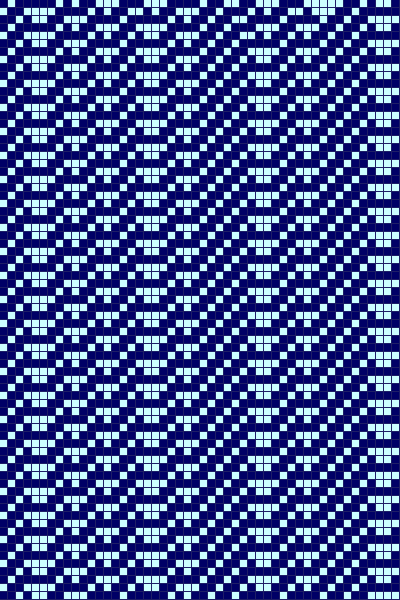} &  \includegraphics[width=24mm]{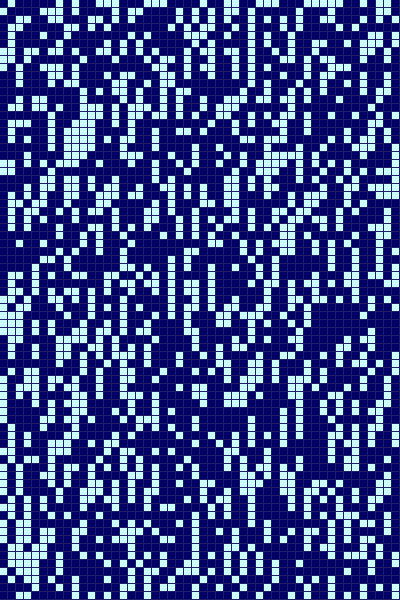} & \includegraphics[width=24mm]{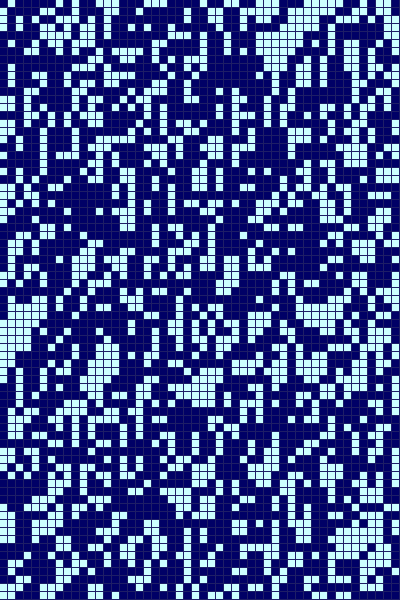} &  \includegraphics[width=52mm]{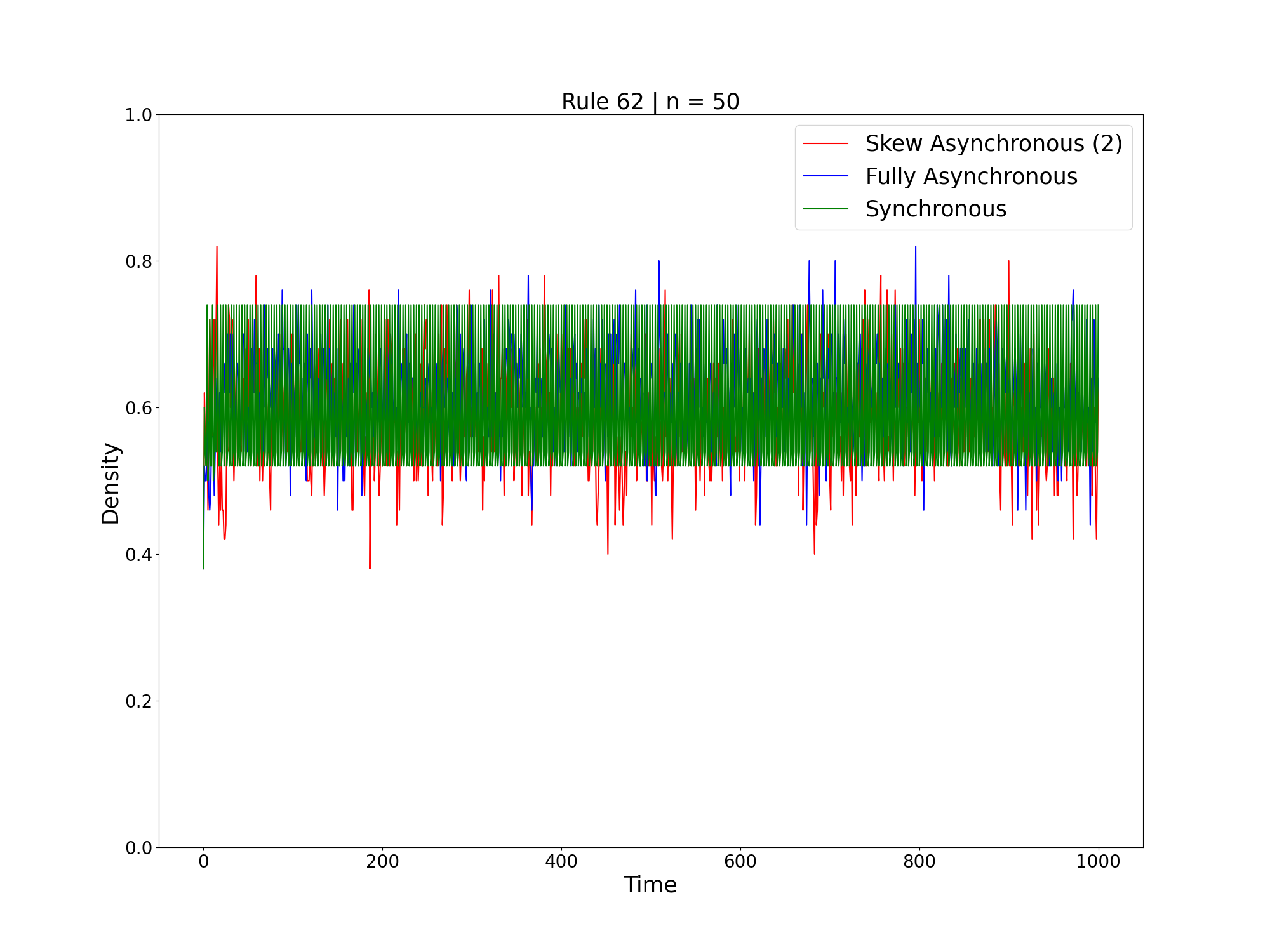} \\
\end{tabular}
\caption{Dynamics of ECAs $38$,~$54$ and $62$ for - synchronous, fully asynchronous and skew-asynchronous systems. This figure also depicts the density-time plot.}
\label{Fig2}
\end{center}
\end{figure}

To sum up, in Table~\ref{Table2}, \verb|Skew|$_C$ denotes the class of these $88$ ECAs under skewed environment where `$\checkmark$' depicts (overall) similar dynamics under both of the updating schemes, i.e. no effect of breaking atomicity property, and `$\times$' shows drastic difference in the dynamics of fully and skew-asynchronous systems. For evidence, Fig.~\ref{Fig1} depicts the dynamics of ECAs $26$ and $58$ where the system shows convergence towards all $0$ for fully asynchronous environment, and depicts divergence for skewed environment (\verb|Skew|$_C$ = $\times$). However, in Fig.~\ref{Fig1}, ECA $50$ shows same convergence (all $0$) dynamics for both of the updating schemes (\verb|Skew|$_C$ = $\checkmark$). Similarly, Fig.~\ref{Fig2} shows the dynamics of ECAs $38$ and $54$ which show divergence (specifically, recurrence) dynamics following atomicity property. On a contrary, these ECAs depict convergence towards all $0$ following skewed environment, i.e. \verb|Skew|$_C$ = $\times$. However, ECA $62$ shows divergence dynamics for both of the asynchronous updating scheme in Fig.~\ref{Fig2}; here, \verb|Skew|$_C$ = $\checkmark$. For quantitative evidence, see the density-time plot in Fig.~\ref{Fig1} and \ref{Fig2}. Fig.~\ref{Fig1} and \ref{Fig2} also depict the space-time diagrams under synchronous environment for reference. In this context, note that, Fat\`{e}s \cite{BoureFC12} has identified peculiar phase transition \footnote{There exist a critical value of $\alpha$ which distinguishes the behaviour of the system in two different phases - the system converges to all $0$ or all $1$ (passive phase); and the system oscillates around a fixed non-zero density (active phase).} behaviour for ECAs $26$,~$38$,~$58$,~$134$ (along with many others) for changing value of $\alpha$ following $\alpha$-asynchronous updating scheme. Here also, these ECAs ($26$,~$38$,~$58$,~$134$) show phase change (convergence to divergence) in the presence or absence of atomicity property. However, we are still open to identify atomicity property as (one of the) reason behind phase transition. 

\begin{figure}[h!tbp]
\begin{center}
\begin{tabular}{ccc}
ECA & Fully ACA & Skew-ACA \\  
6 ($n \in 2\mathbb{N}$) & \includegraphics[width=3.6cm]{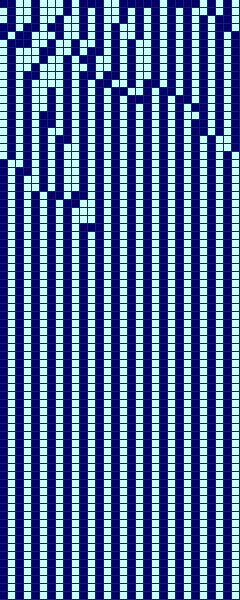} &  \includegraphics[width=3.6cm]{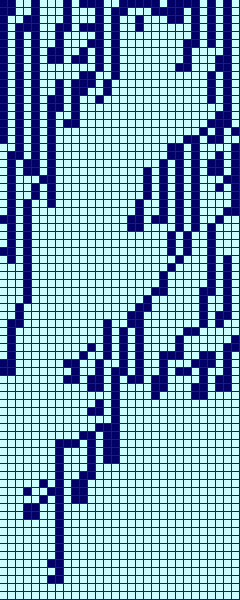} \\
 6 ($n \notin 2\mathbb{N}$) & \includegraphics[width=3.6cm]{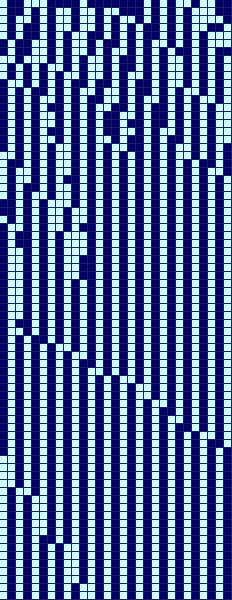} &  \includegraphics[width=3.6cm]{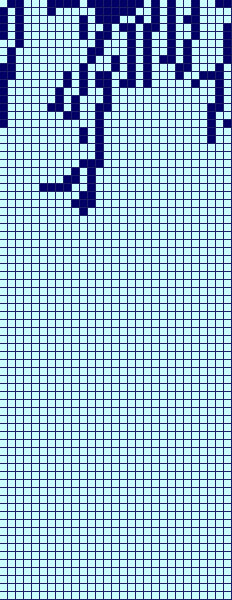}\\
 
\end{tabular}
\caption{Dynamics of ECAs $6$ under fully asynchronous and skew-asynchronous updating schemes for different lattice size.}
\label{Fig3}
\end{center}
\end{figure}

\begin{figure}[h!tbp]
\begin{center}
\begin{tabular}{ccc}
ECA & Fully ACA & Skew-ACA \\  
 22 ($n \in 2\mathbb{N}$) & \includegraphics[width=3.6cm]{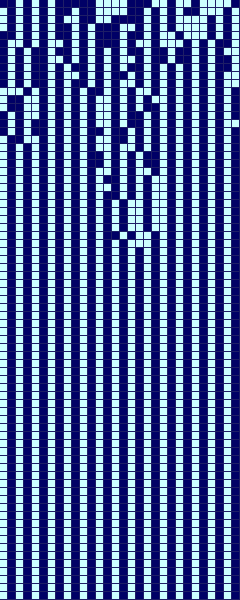} &  \includegraphics[width=3.6cm]{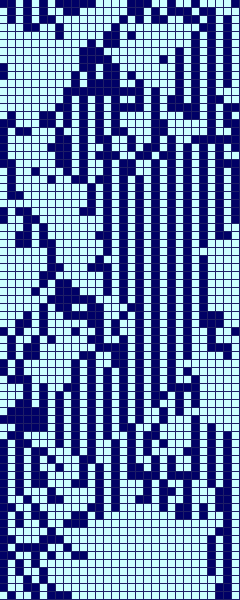} \\
 
 22 ($n \notin 2\mathbb{N}$) & \includegraphics[width=3.6cm]{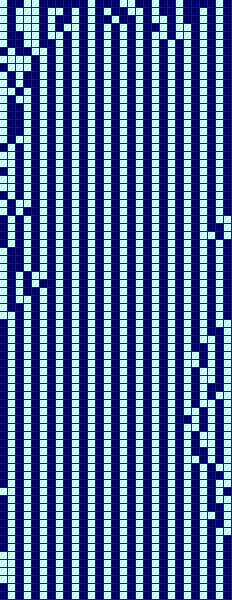} &  \includegraphics[width=3.6cm]{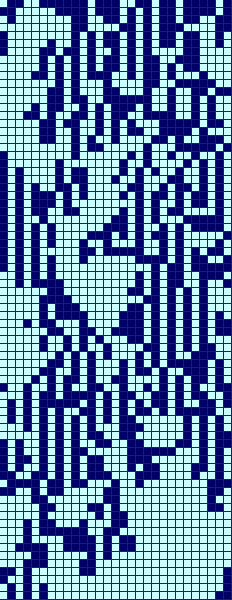}\\
 
\end{tabular}
\caption{Dynamics of ECAs $22$ under fully asynchronous and skew-asynchronous updating schemes for different lattice size.}
\label{Fig3}
\end{center}
\end{figure}

\begin{figure}[h!tbp]
\begin{center}
\begin{tabular}{ccc}
ECA & Fully ACA & Skew-ACA \\  
 105 ($n \in 4\mathbb{N}$) & \includegraphics[width=2.5cm]{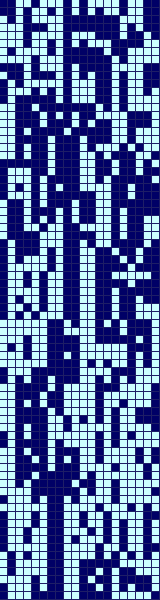} &  \includegraphics[width=2.5cm]{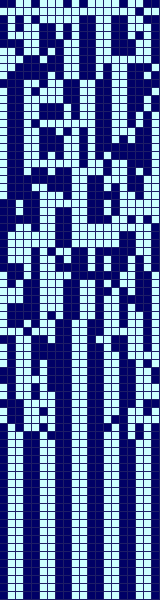}\\
 
 105 ($n \notin 4\mathbb{N}$) & \includegraphics[width=2.5cm]{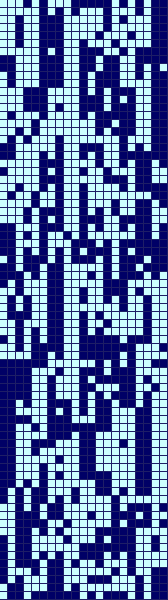} &  \includegraphics[width=2.5cm]{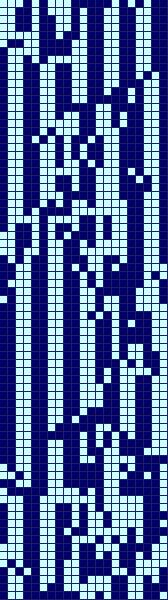}\\

\end{tabular}
\caption{Dynamics of ECAs $105$ under fully asynchronous and skew-asynchronous updating schemes for different lattice size.}
\label{Fig3}
\end{center}
\end{figure}

As a limitation, Table~\ref{Table2} only captures the extreme differences of these two asynchronous updating schemes, not microscopic details. However, the dynamics of many ECAs are not exactly same in microscopic view following these two updating schemes. For example, see ECAs $6$ and $22$, Fig~\ref{Fig3} shows their non-convergence non-recurrence dynamics ($n \notin 2\mathbb{N}$) for fully asynchronous update, however, these ECAs show convergence to all $0$ point attractor for skewed environment. Importantly, if we consider $n \in 2\mathbb{N}$, ECAs $6$ and $22$ show convergence for both of the updating schemes. However, there are differences: for fully asynchronous update, the only point attractor is ($001$)$^{n/3}$; for skewed environment, the system mostly converges to all $0$ point attractor, see Fig~\ref{Fig3}. Therefore, both updates show convergence ($n \in 2\mathbb{N}$), but there are differences following the microscopic view. Fig~\ref{Fig3} also shows ECA $105$ which converges to ($0011$)$^{n/4}$ for $n \in 4\mathbb{N}$ under skewed environment, however, depicts recurrence dynamics for $n \in 4\mathbb{N}$ under fully asynchronous update. However, we are still open about the status (recurrence or non-convergence non-recurrence) of ECA $105$ for $n \notin 4\mathbb{N}$ under skewed environment. To sum up, a proper theoretical claim about many microscopic facts are not possible with this experimental approach. However, the experimental results guide us towards the richness of this study. Following this, in the next section,  we theoretically explore part of the ECA rule space under skew-asynchronous updating scheme. 

\section{Convergence towards all 0 and all 1 point attractors under skewed environment}
It is quite obvious that an asynchronous CA is a Markov chain \cite{Sethi901}. In fact, we have also observed that convergent ACAs are absorbing Markov chain \cite{Sethi901}. Following this, we can write,

\begin{lemma}
\label{L1}
Convergent skew-ACAs are absorbing Markov chain. 
\end{lemma}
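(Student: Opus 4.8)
The plan is to check the three defining ingredients of an \emph{absorbing Markov chain}: that the evolution of a skew-ACA is a finite time-homogeneous Markov chain, that it has at least one absorbing state, and that from every state an absorbing state is reachable in finitely many steps. Each ingredient is read off from the definitions already set up in Sections~\ref{S1} and the preceding section, so the proof is short; the only real work is translating the dynamical vocabulary (``convergent'', ``point attractor'') into the probabilistic one.

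First I would record that the state space $\varepsilon_n = \{0,1\}^{\mathcal{L}}$ is finite, of size $2^n$, and that the stochastic process $(x^t)_{t\in\mathbb{N}}$ with $x^{t+1} = F(x^t,U_t)$ is Markovian and time-homogeneous: $F(\cdot,u)$ is deterministic and $U_t$ is drawn uniformly from $\mathcal{L}$ independently of the past, so the one-step probability $P(x\to y) = \tfrac{1}{n}\,\#\{u\in\mathcal{L} : F(x,u)=y\}$ depends only on the current configuration. This is exactly the remark, quoted in the excerpt, that an asynchronous CA is a Markov chain, now specialised to the skew update.

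Next I would identify the absorbing states with the point attractors. A state $x$ is absorbing iff $P(x\to x)=1$, i.e. $F(x,u)=x$ for every $u\in\mathcal{L}$ (each $u$ has positive probability $1/n$); under the skew scheme, selecting cell $i$ forces $f(x_{i-1},x_i,x_{i+1})=x_i$ and $f(x_i,x_{i+1},x_{i+2})=x_{i+1}$, and as $i$ ranges over $\mathcal{L}$ the first family of equalities already says that every RMT of $x$ is passive. Hence the absorbing states are precisely the point attractors, exactly as in the fully asynchronous case. Since the skew-ACA is convergent, by definition the system converges to a point attractor from every initial configuration, so in particular at least one point attractor exists; this gives the required absorbing state. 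Convergence then also yields the reachability condition: from an arbitrary $x$ the process hits a point attractor almost surely, so there is a finite sequence of selections $u_0,\dots,u_{k-1}$ driving $x$ to a point attractor, i.e. a positive-probability path from $x$ to an absorbing state. A finite time-homogeneous Markov chain with an absorbing state such that every state can reach one is an absorbing Markov chain, completing the argument.

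I do not expect a genuine obstacle: the statement is essentially definitional. The only two points needing care are (i) the short case check that, under the skew neighbourhood dependency, the point attractors remain exactly the all-passive configurations (so that the dynamical and probabilistic notions of an absorbing/terminal state agree), and (ii) being explicit that ``convergent'', as used above, means almost-sure absorption and hence forbids any non-trivial closed communicating class, so that the point attractors are the only recurrent states. Once these are spelled out, the absorbing-chain property is immediate.
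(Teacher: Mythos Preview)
Your proposal is correct and in fact more detailed than what the paper supplies. The paper does not give a proof of this lemma at all: it simply remarks that ``it is quite obvious that an asynchronous CA is a Markov chain'' and that convergent ACAs were already observed to be absorbing Markov chains in the fully asynchronous setting \cite{Sethi901}, and then states the lemma as the skew analogue without further argument. Your explicit verification of the three ingredients (finite time-homogeneous Markov chain, absorbing states $=$ point attractors, reachability from every state) is exactly the content one would need to fill in, and your check in point (i) that under the skew update the absorbing configurations still coincide with the all-passive ones is the one place where the skew scheme genuinely needs a word of justification beyond the fully asynchronous case.
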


In this direction, we use the fact: in a absorbing Markov chain, the probability that the chain eventually enters to an absorbing state (and stays there forever) is $1$ \cite{Sethi901}. Next, we write the conditions for convergence under skewed environment. Here, in the following theorem, we denote two homogeneous configurations all $0$ and all $1$ as \textbf{0}, \textbf{1}, respectively. Moreover, we introduce the notion of $1$-region (resp. $0$-region) which corresponds to a maximal set of contiguous cells with state $1$ (resp. $0$) in a given configuration. 

\begin{theoremm}
\label{T1}
ECA $R$ converges to \textbf{0} (resp. \textbf{1}) point attractor under skew-asynchronous updating scheme if one of the following condition is satisfied:

\begin{itemize}
\item[1.] RMT $0$ (resp. RMT $7$) of $R$ is passive, RMT $2$ (resp. RMT $5$) of $R$ is active, and atleast one RMT from the pair of RMTs \{$1,4$\}  (resp. \{$3,6$\}) of $R$ is passive. 

\item[2.] RMT $0$ (resp. RMT $7$) of $R$ is passive, RMTs $3$ and $6$ (resp. RMTs $1$ and $4$) of $R$ are active. 
\end{itemize}
\end{theoremm}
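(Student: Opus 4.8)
The plan is to convert the probabilistic convergence claim into a purely combinatorial \emph{reachability} claim, and then establish the latter by "eroding" the $1$-regions (resp. $0$-regions) of a configuration from their boundaries, with the two hypothesised conditions controlling precisely which erosion moves are available. Throughout I would treat the $\mathbf{0}$ case; the $\mathbf{1}$ case follows by the state relabelling $0\leftrightarrow1$, which sends RMT $r$ to RMT $7-r$ and interchanges the two bracketed alternatives in each clause, so no separate argument is needed there.

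First I would reduce to reachability. Since RMT $0$ is passive, $f(0,0,0)=0$, so updating any two adjacent cells of $\mathbf{0}$ leaves them at $0$; hence $\mathbf{0}$ is a point attractor, i.e. an absorbing state of the skew-ACA Markov chain. By Lemma~\ref{L1} a convergent skew-ACA is an absorbing Markov chain, and by the standard absorbing--Markov-chain fact recalled before the theorem it enters its absorbing part with probability $1$. Moreover, if $\mathbf{0}$ is reachable (by some positive-probability update sequence) from every configuration, then $\mathbf{0}$ is the unique recurrent configuration: any recurrent $y$ would have to be reachable from the absorbing $\mathbf{0}$, forcing $y=\mathbf{0}$. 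Hence it suffices to show that from every configuration there is a finite sequence of selected cells driving the system to $\mathbf{0}$.

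For this I would induct on the potential $\Phi(x)=$ number of $1$-cells of $x$ (equivalently, the total length of its $1$-regions), showing that every $x$ with $\Phi(x)>0$ admits a finite update sequence to some $x'$ with $\Phi(x')<\Phi(x)$. Fix a $1$-region, say cells $j,\dots,j+k-1$ with $x_{j-1}=x_{j+k}=0$ (the homogeneous case $x=\mathbf{1}$ is handled separately: when RMT $7$ is active, one skew-update of a single cell creates a genuine $1$-region with boundaries and the induction restarts; otherwise $\mathbf{1}$ is itself a point attractor and one reads the statement for configurations $\neq\mathbf{1}$). Under Condition~2 (RMTs $3$ and $6$ active), the left-end cell $j$ sits in window $(0,1,1)$ and maps to $0$, while its skew-partner $j+1$ maps via RMT $7$ (if $k\ge3$) or via RMT $6$ (if $k=2$); in particular, for $k=2$ both boundary $1$s are cleared in one shot, so no stray isolated $1$ is spawned, and for $k\ge3$ the region strictly shortens. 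A width-$1$ region (window $(0,1,0)$) is eliminated directly if RMT $2$ is active, and otherwise is first merged with an adjacent $1$-region and reduced to the previous case. Under Condition~1 (RMT $2$ active, and, say, RMT $1$ passive), one attacks an isolated $1$ at cell $j$ by selecting $u=j-1$: cell $j-1$ stays $0$ since its window is $(x_{j-2},0,1)$, which is RMT $1$ (passive) when $x_{j-2}=0$, and cell $j$ maps via RMT $2$ to $0$; longer $1$-regions are peeled from a boundary down to isolated $1$s first. If instead RMT $4$ is the passive member of $\{1,4\}$, one attacks symmetrically from the right end with $u=j+k-1$. Iterating drives $\Phi$ down to $0$, i.e. to $\mathbf{0}$, which with the reduction above proves the theorem.

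The main obstacle is the boundary case analysis inside the induction step. The hypotheses pin down only a few RMTs (nothing is assumed about RMT $5$, and RMT $7$ is entirely free), so one must verify that at every $1$-region boundary there is \emph{always} a legitimate choice of selected cell whose skew-update strictly decreases $\Phi$ and does not create a fresh isolated $1$ elsewhere in the ring; this includes controlling the interaction of a peeled boundary with a second $1$-region lying just across a short $0$-block. This is exactly the place where the skew (simultaneous two-cell) update is indispensable --- clearing both cells of a width-$2$ region at once is impossible under single-cell fully asynchronous updates --- and getting this window-by-window bookkeeping right is the technical heart of the proof.
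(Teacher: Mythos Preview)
Your high-level strategy --- reduce to reachability via the absorbing Markov chain lemma and then erode $1$-regions from their boundaries --- matches the paper's. But your reduction overclaims: you aim to show that $\mathbf{0}$ is reachable from \emph{every} configuration, and this is simply false under the stated hypotheses. The paper's own proof says so explicitly: ``we show that $\mathbf{0}$ is reachable from any configuration \emph{if it is the only point attractor}. Otherwise, the CA reaches to other point attractor(s).'' Concretely, take ECA~$200$ (binary $11001000$): RMT~$0$ is passive, RMT~$2$ is active, RMTs~$1$ and $4$ are passive, so Condition~1 holds; yet RMTs~$3,6,7$ are all passive, so the configuration $0110$ (periodic, $n=4$) has every cell sitting in a passive RMT and is a non-$\mathbf{0}$ point attractor. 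Your $\Phi$-induction cannot decrease from this configuration.

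This same example exposes the gap in your Condition~1 step. You write ``longer $1$-regions are peeled from a boundary down to isolated $1$s first,'' but the only RMT guaranteed active is RMT~$2$, which applies solely to isolated $1$s. A $1$-region of length $\ge 2$ has boundary cells in RMTs~$3,6$ and interior cells in RMT~$7$; none of these is constrained, so if all three are passive no peeling move exists. The paper deals with this by an explicit case split on whether RMT~$7$ is active (if so, update interior pairs to shatter long regions into short ones) or passive (if so, sub-case on RMTs~$3,6$; when both are passive it concedes that $\mathbf{0}$ is unreachable and argues convergence to some other point attractor, possibly $\mathbf{1}$ or one with RMT pattern $5,3,7,6$). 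A similar issue arises in your Condition~2 handling of isolated $1$s: ``merging with an adjacent $1$-region'' requires an active RMT among $1,4,5$, none of which is assumed. To repair your argument you must (i) weaken the goal from ``reach $\mathbf{0}$'' to ``reach some point attractor,'' and (ii) replace the strict-$\Phi$-decrease with the paper's case analysis on the unspecified RMTs~$3,5,6,7$, which sometimes forces $\Phi$ to increase temporarily or to stabilise at a non-$\mathbf{0}$ fixed point.
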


\begin{proof}
\textbf{Case (1) Let us first consider that for a rule $R$, RMT $0$ is passive, RMT $2$ is active, and RMT $1$ or $4$ is passive:} Since, RMT $0$ of $R$ is passive, therefore \textbf{0} is clearly a point attractor for the CA. Next, we show that \textbf{0} is reachable from any configuration if it is the only point attractor. Otherwise, the CA reaches to other point attractor(s) for the configurations for which it does not reaches to \textbf{0}. 

Since skew-ACA forms an absorbing Markov chain (see Lemma~\ref{L1}), we need to show here that there exists an update pattern for which the CA converges to a point attractor. 

\underline{Let us first assume that RMT $7$ is active:} From any configuration $x$ different from \textbf{0}, we need to show disappearance of state $1$ to reach \textbf{0} configuration. Let us denote $l_1$ (resp. $l_0$) as the length of an arbitrary $1$-region (resp. $0$-region). 

If $l_1 > 3$, then we can directly apply active RMT $7$ for couple of two cells. Say, $i$ is the left most cell of $1$-region (note that, for $x =$ \textbf{1}, any cell can be able to play the role of cell $i$). Following this, $i+1, i+2$ cells move to state $0$. Then again, we apply active RMT $7$ in $i+4, i+5$ cells. Sequentially following this, cells $i+1, i+2, i+4, i+5, i+7, i+8, \cdots$ move to state $0$. On the other hand, cells $i, i+3, i+6, i+9, \cdots$ remain in state $1$. Following this, the original $1$-region of length $l_1$ is divided into many (say $m$) number of $1$-regions of length one. Here, RMTs $1,~2,~4$ are the member of RMT sequence for these (first $m$) $1$-regions of length one. For the $m+1^{th}$ $1$-region (rightmost) following are the possibilities:

\begin{itemize}
\item[] A: If $l_1 = 3m+1$, $m \in \mathbb{N}$, the rightmost $1$-region of size one. 

\item[] B: If $l_1 = 3m+2$, $m \in \mathbb{N}$, the rightmost $1$-region of size two. 

\item[] C: If $l_1 = 3m+3$, $m \in \mathbb{N}$, the rightmost $1$-region of size three. However, we can update pattern $11$ in this length three $1$-region. For active RMT $7$ and passive RMTs $3$ and $6$, it creates two $1$-regions of length one. For active RMT $7$ and active RMT $3$ or $6$, it creates single $1$-region of length one. Note that, the above argument is also applicable for independent $1$-regions of $l_1 = 3$. Fig.~\ref{Fig4} depicts this all possible situations.
\end{itemize}

\begin{figure}[h]
\begin{center}
\begin{tabular}{c}
\includegraphics[width=87mm]{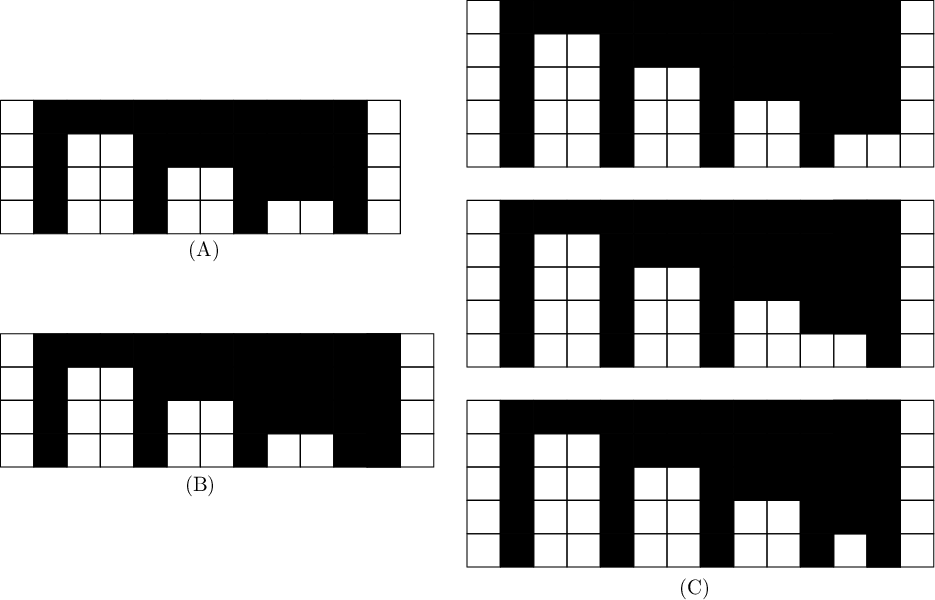}\\
\end{tabular}
\caption{Situations A, B, C where $1$-region of length $l_1 > 3$ is divided into many (say $m$) number of $1$-regions of length one. The $m+1^{th}$ $1$-region (rightmost) is of length one or two.}
\label{Fig4}
\end{center}
\end{figure}

Now, for the first $m$ $1$-regions with RMT sequence $1,~2,~4$, active RMT $2$ and passive RMT $1$ or $4$ applies, it moves the pattern $01/10$ to $00$. Hence, state $1$ disappears. 

For the rightmost ($m+1^{th}$) $1$-region of length one (situation A and C), either the RMT sequence is $1,~2,~4$ (right side $0$-region of length $l_0 > 1$) or $2,~5$ (right side $0$-region of length $l_0 = 1$). For RMT sequence $1,~2,~4$, state $1$ disappears directly following active RMT $2$ and passive RMT $1$ or $4$. For RMT sequence $2,~5$, if RMT $5$ is passive, we update pattern $10/01$, which vanishes state $1$, we can achieve the target. Otherwise, if RMT $5$ is active, we get $1$-region of length two (or more) by flipping the pattern $10$ (resp. $01$) to $01$ (resp. $10$). If again, the new $1$-region is with right side $0$-region of length $l_0 = 1$, we further follow the same, i.e. we increase (or decrease, if possible) the length of $1$-region, until and unless we get a right side $0$-region of length atleast two (which returns us back to the previous case). Note that, we definitely get a right side $0$-region of length atleast two following the early updates. During this process, if we again create $1$-region of length more than two, we return back to the initial problem. Note that, the above argument is also applicable for independent $1$-regions of $l_1 = 1$. The remaining situation, if the $1$-region of length two, the situation we discuss next.

For the rightmost ($m+1^{th}$) $1$-region of length two (situation B), either RMT sequence is $1,~3,~6,~4$ or $3,~6,~5$. For both of the situations, if RMT $3$ or $6$ or both are active, we update pattern $11$, either single or both $1$ disappear. If RMTs $1,~3,~6,~4$ all are passive, it is itself a point attractor. Similarly, if RMTs $3,~6,~5$ all are passive, it is again itself a point attractor. If the above situations are not true, then either RMT $1$ or $4$ is active. In that case, we increase the size of the $1$-region by updating pattern $01/10$ to $11$. Following this, we can be able to create $1$-region of size $l_1 = 3m+1$, $m \in \mathbb{N}$, which returns us back to the previous case. For RMT sequence $3,~6,~5$, the same argument of increasing length of $1$-region is applicable for active RMT $5$. Note that, the above argument is also applicable for independent $1$-regions of $l_1 = 2$.

Therefore, the above arguments include all possible situations for disappearance of $1$-region of length by from $l_1 \in \{1,\cdots,n\}$ considering active RMT $7$. If there are more such $1$-regions, one can use the above rationale for their disappearance which allows the CA to reach all \textbf{0}.

\underline{Next, we assume that RMT $7$ is passive:} Therefore, \textbf{1} configuration is clearly a point attractor. Now, for any configuration $x$ different from \textbf{0} and \textbf{1}, if $1$-region is of length $l_1 = 1$, it contains either RMT sequence $1,~2,~4$ or $2,~5$. For RMT sequence $1,~2,~4$, active RMT $2$ and passive $1$ or $4$ are applied, state $1$ disappears. For RMT sequence $2,~5$: for passive RMT $5$, we update pattern $10/01$, state $1$ disappears; for active RMT $5$, it flips the pattern $10/01$, and creates $1$-region of length $l_1 = 2$ or more. This situation, we discuss next. 

Next, if $x$ contains $1$-region of length $l_1 = 2$ or more, following are the situations: if one of the RMTs $3$ or $6$ is active, we update pattern $11$ (starting from left or right side of $1$-region), it makes sure that state $1$ disappears in each update, which returns us back to the previous case with RMTs $1,~2,~4$. If both RMTs $3$ and $6$ are passive, $x$ can not able to reach \textbf{0} configuration. If RMT $5$  and one of the RMTs $1$ or $4$ are active, we update pattern $10$ or $01$, state $0$ disappears, $x$ can be able to reach \textbf{1} configuration. On the other hand, if RMT $5$ is passive and one of the RMTs $1$ or $4$ is active, it reaches to point attractor where the RMT sequence is $5, 3, 7, 6$. Otherwise, the configuration itself a point attractor. 

Therefore, the above arguments include all possible situations for disappearance of $1$-region of length by from $l_1 \in \{1,\cdots,n-1\}$ considering passive RMT $7$. If there are more such $1$-regions, one can use the above rationale for their disappearance which allows the CA to reach all \textbf{0}.

While RMT $7$ is passive, RMT $5$ is active, and RMT $3$ or $6$ is passive, this property is the symmetric by the conjugation operation, i.e. exchange of $0$'s and $1$'s.

\textbf{Case (2) Next, let us consider for a rule $R$, RMT $0$ is passive, and RMTs $3$ and $6$ are active:} Since, RMT $0$ of $R$ is passive, therefore \textbf{0} is clearly a point attractor. Here, we follow the similar construction. 

\underline{Let us first assume RMT $7$ is active:} If length of $1$-region is even, we can be able to divide it into multiple $1$-region of size two using active RMTs $3,~6,~7$. If $l_1 = 2$, active RMTs $3$ and $6$ applies, both state $1$ disappear. If $l_1 > 2$ (even), cell $i, i+1, i+4, i+5, \cdots$ remain in state $1$, cell $i+2, i+3, i+6, i+7, \cdots$ move to state $0$ using active RMTs $6$ and $7$. If $l_1 > 1$ (odd), the region is divided into (say) $m+1$ number of $1$-regions, where the first $m$ $1$-region is of length two (return us back to the previous case), active RMTs $3$ and $6$ are applied, and the rightmost ($m+1^{th}$) $1$-region is of length one (will discuss later). The above argument is also applicable for \textbf{1} configuration where any cell can be able to play the role of cell $i$. 

\underline{Let us now assume RMT $7$ is passive:} If $l_1 > 1$, we can apply active RMT $3$ or $6$ and passive RMT $7$ by updating pattern $11$ (from left or right hand side of $1$-region), single state $1$ disappears in each step. Following this, we remain with $1$-region of length two, active RMTs $3$ and $6$ are applied, we can achieve the target. Here also, the remaining issue is with $1$-region of length one (will discuss next). Here, \textbf{1} configuration is a point attractor.

Next (\underline{independent of active/passive RMT $7$}), we discuss the remaining issue with $1$-region of length one. For $l_1 = 1$, the possible RMT sequences are $1,~2,~4$ and $2,~5$. For RMT sequence $1,~2,~4$: if RMT $2$ is active and RMT $1$ or $4$ is passive, we update pattern $10/01$, state $1$ disappears. For RMT $2$ passive and RMTs $1$ and $4$ passive, it is a point attractor. For RMT $2$ passive and RMT $1$ or $4$ active, we update pattern $10/01$ to get the pattern $11$, creates $1$-region of length $l_1 = 2$, returns back to the previous case. For RMT $2$ active and RMT $1$ or $4$ active, we update pattern $10/01$, the length of $1$-region remains unchanged, but it moves towards left or right direction. Now, if the new (after movement) $1$-region have a neighbouring $0$-region of length $l_0 > 2$, we update pattern $00$ using passive RMT $0$ and active RMT $1$ or $4$, creates $1$-region of length two, returns back to the previous case. Otherwise, if the new $1$-region have a neighbouring $0$-region of length $l_0 = 2$, we move the $1$-region to create neighbouring $0$-region of length $l_0 = 1$, RMT $5$ applies (discuss next).

For RMT sequence $2,~5$: if $2$ is active and $5$ is passive, state $1$ disappears. For $2$ and $5$ passive, it is a point attractor itself. For RMT $2$ active and RMT $5$ active, we update pattern $10/01$, it creates $1$-region of length two or more, returns back to the previous case. For $2$ passive and $5$ active, we update pattern $10/01$, it creates $1$-region of length three or more, returns back to the previous case. To conclude, the above arguments include all possible situations.

While RMT $7$ is passive, RMTs $1$ and $4$ are active, this property is the symmetric by the conjugation operation, i.e. exchange of $0$'s and $1$'s.
\end{proof}

Now, using Theorem~\ref{T1} following $34$, out of $88$, minimal ECAs converge to point attractor \textbf{0} and/or \textbf{1} for $n \in \mathbb{N}$:  0, 2, \textbf{6}, 8, 10, 18, \textbf{22}, 24, 32, 34, \textbf{38}, 40, 42, 50, \textbf{54}, 56, 74, 104, 106, 128, 130, \textbf{134}, 136, 138, 146, \textbf{150}, 152, 154, 160, 162, 168, 170, 178, 184. Note that, ECAs $38$,~$54$,~$134$ and $150$ (in bold) show recurrence under fully asynchronous updating scheme, however, these ECAs depict convergence to point attractor \textbf{0} following Theorem~\ref{T1} which validates our finite lattice size experimental findings. The same is true for ECAs $6$ and $22$ (in bold) considering odd lattice size. As an observation, ECAs $146$,~$150$,~$168$,~$170$,~$178$, and $184$ show convergence towards both of the point attractor \textbf{0} and \textbf{1}. Following this property of convergence towards both of the point attractors, these ECAs ($146$,~$150$,~$168$,~$170$,~$178$,~$184$) may establish themselves as a potential candidate for density classification problem \cite{fates13, fates:LIPIcs}, which is still open for us.

\section{Summary}
To sum up, we have explored the effect of breaking atomicity property in elementary cellular automata. Towards the first step in this direction, we have introduced the notion of skew-asynchronous updating scheme where two neighbouring cells are bound to update together. In the absence of atomicity property, ECAs have shown following rich variety of results: (i) A phase change (convergence $\rightarrow$ divergence) for ECAs $26$,~$58$,~$90$, and $122$; (ii) Another opposite phase change (divergence, i.e. recurrence $\rightarrow$ convergence) for ECAs $38$,~$54$,~$134$, and $150$; (iii) On the other hand, some remarkable behaviour emerged; phase change (divergence $\rightarrow$ convergence) depending on the lattice size ($n \in 2\mathbb{N}$) for ECAs $6$ and $22$; and phase change (divergence, i.e. recurrence $\rightarrow$ convergence) depending on the lattice size ($n \in 4\mathbb{N}$) for ECA $105$. Finally, we have identified the theoretical reasons behind convergence towards \textbf{0} and \textbf{1} point attractors which partially validates our finite lattice size experimental findings. However, we are still open about many questions which guides us towards following extensions:
\begin{itemize}
\item In this first report, we have only theorized convergences towards \textbf{0} and \textbf{1} point attractors. We are still open about convergence towards other point attractors considering the notion of primary RMT sets \cite{Sethi901}.

\item Moreover, we have only classified the system into convergence and divergence under skewed environment. The immediate question is for the dynamics of divergence systems; what can be said about recurrence and non-recurrence properties of the divergence systems?

\item Fat\`{e}s \cite{BoureFC12} has identified peculiar phase transition behaviour for ECAs $26$,~$38$,~$58$,~$134$ for changing value of $\alpha$. Here also, these ECAs ($26$,~$38$,~$58$,~$134$) show phase change (convergence to divergence) in the presence or absence of atomicity property. However, we are still open to identify atomicity property as (one of the) reason behind phase transition. Most importantly, in this direction, the hidden agenda is to understand probabilistic system ($\alpha$-asynchronism) following the dynamics of (kind of) deterministic environment (skewed environment). 
\end{itemize}
\chapter{Elementary Cellular Automata under $\alpha$ - Asynchronous Update Scheme}
\label{chap5}

\section{Introduction}
\label{sec:S1}

Cellular Automata (CAs) are discrete, abstract computational systems that have been studied extensively for their applications in modeling complex systems and processes~\cite{Sethi2016,Anindita12}. Asynchronous Cellular Automata (ACAs) introduce a variation where the updating of cells is not simultaneous, adding a layer of realism for certain applications. Alpha Asynchronous Cellular Automata ( $\alpha$ - ACAs) are a further specialization where the asynchrony is governed by a probability parameter  $\alpha$ ~\cite{Sethi2016,Anindita12,CARONLORMIER2008522}. In Chapter~\ref{chap2}, we have introduced asynchronous cellular automata in detail, in which no global clock exists.  The $\alpha$ - asynchronous cellular automata are a special class of asynchronism which are asynchronous by virtue of their nature of non - uniformity in update of states of cells. Though some of the reports were already published and available in literature. But in those works, no any explicit commentary were made related to the reversible and convergent nature of ECAs under $\alpha - $ asynchronism~\cite{ref_r3,ref_r4,ref_r6,Sethi2016,Anindita12,CARONLORMIER2008522}. In this class, the $\alpha$ is called asynchrony rate which can be as low as 0 and as high as 1. 

In previous chapter~\ref{chap4}, we have presented the important results produced under the application of skew asynchronous cellular automata, while here we study, examin, and analyze their dynamics under this very powerful class of asynchronous cellular automata. However, we have tried to theorize the logic working behind the reversibility of an ECA $R$ under the $\alpha$ - ACAs,as well as the convergent behaviour toward all - \textbf{0} or all - \textbf{1} of a given ECA $R$.

\section{Experimental Setup, $\alpha - $ACAs}
\label{exp_setup}
This experimental study were done under the following set-up:
As we already have discussed about the $\alpha$- ACAs in previous section and Chapter~\ref{chap2}, therefore here we only will talk about the experimental set-up and observations. In this experimental study, we have followed~\cite{ROY2019600} to conduct both qualitative and quantitative study. Following this, we utilized python language and anaconda jupyter notebook environment for performing computational experiment. We used lattice of 50 and 51 cells, 80, 1040, and 2000 iterations, and $\alpha \in \{0.1, 1.0\}$ for each CAs rule to produce the results. The all the space-time diagram~\ref{Fig1}, and~\ref{Fig2} were collected under this environment. (a) First of all, we experimented to collect quantitative data associated to each CAs rule for each $\alpha \in \{0.1, 1.0\}$. For this procedure, we used $n \in [4,5,6,7,8,9,10,11,12,..50]$ with iterations in the range of $2^{n}*100$. It resulted a clear quantitative view of dynamics of each CAs rule for each $\alpha$. We observed the experimental values and discussed the important findings in Section~\ref{sec:S2}. (b)Next, we performed the same experiment to collect the space time diagram for each minimal ECA, which can show the qualitative visuals of the dynamical behaviour under evolutions throughout the time displayed in Fig~\ref{Fig1}, and ~\ref{Fig2}. The black cells indicate bit 1 and white cell indicates bit 0 in the space-time diagram. The observations have been summarized into a tabulated form and depicted into the Table~\ref{table:tab1} and Table~\ref{table:tab2}. Algorithm~\ref{algo2} is implemented to collect the results for proper analysis of dynamical behaviours of ECAs. We explain this asynchronism in CAs with the help of the following example which have been explained in Section~\ref{subsec:alphaACA}.

\textbf{Example 2.1:} Suppose, we are given an $ECA - 90$, ring size $n = 5$, initial configuration $(1, 1, 0, 0, 1)$ and $\alpha$ = 0.5. Then, we are interested to see the next four configuration after evolved from  the given and subsequent evolved configurations.

Under the consideration of $\alpha -$ asynchronous CA update scheme, suppose that cell number $0$ and $1$ (i.e. two immediate neighbors) cells have $\alpha$ probability. So, therefore, these cells are select uniformly and randomly for update corresponding to the update rules based on the given $ECA - 90$. Then, using periodic boundary condition, the input patter based on the select cell number $0$ is $(1, 1, 1)$. Corresponding to this input pattern, the output pattern is $0$ for this cell. While, input pattern for cell number $1$ is $(1, 1, 0)$ and output pattern corresponding to this input pattern is $1$. Therefore, the first next configuration becomes: $(0, 1, 0, 0, 1)$. The other three next configurations have been summarized into the Table 2.2.
\newpage
\begin{table}[h]
	\centering
		\resizebox{0.8\textwidth}{!}{
		\begin{tabular}{|c|c|c|c|c|c|c|c|c|c|}
			\hline
{\bfseries Cell Number } & {\bfseries $0$} & {\bfseries $1$}& {\bfseries $2$} & {\bfseries $3$} & {\bfseries $4$} & {\bfseries Pr($\alpha$) - cells to be updated}\\\hline
Present State & 1 & 1 & 0 & 0 & 1 & \\\hline
Next State(i) & 0 & 1 & 0 & 0 & 1 & 0, 1\\
Next State(ii)& 0 & 1 & 1 & 1 & 1 & 0, 2, 3\\

Next State(iii)& 0 & 1 & 0 & 1 & 1 & 1\\

Next State(iv) & 0 & 0 & 0 & 1 & 1 & 0, 1, 2, 3, 4\\\hline
	\end{tabular}}
	\caption{The next four configurations after evolution of initial configuration (1, 1, 0, 0, 1) under the $\alpha$ fully asynchronous cellular automata and ECA - 90}\label{tab1}
\end{table}

\begin{algorithm}[h]
\caption{Algorithm to identify dynamical behaviour of ECAs under $\alpha - $ asynchronous update.}
\textbf{Step 1:} Read ECA rule, CA lattice size(n),alpha\\
\textbf{Step 2:} Execute the main function for total iterations up to 2$^n$*100.
$\space \hspace{1cm}$ def AlphaSyncCA(CA1,appliedRule,alpha)$:$

$\space \hspace{1.5cm}$    CA, size, temp = CA1, len(CA),[]

$\space \hspace{1.5cm}$    for state in CA$:$

$\space \hspace{2cm}$            temp.append(state) 
   
$\space \hspace{2cm}$    for j in range(int(size))$:$

$\space \hspace{2.5cm}$        if(alpha$\geq$random.random())$:$

$\space \hspace{3cm}$            if j==0$:$

$\space \hspace{3.5cm}$ Li,Ri = size - 1, 1

$\space \hspace{3cm}$            elif j==size-1$:$

$\space \hspace{3.5cm}$                Li,Ri = size - 2,0

$\space \hspace{3cm}$            else$:$

$\space \hspace{3.5cm}$                Li,Ri = j - 1,j+1

$\space \hspace{3cm}$            if CA[Li] == 1 and CA[j] == 1 and CA[Ri] == 1$:$

$\space \hspace{3.5cm}$                temp[j] = int(appliedRule[0]) 

$\space \hspace{3cm}$            elif CA[Li] == 1 and CA[j] == 1 and CA[Ri] == 0$:$

$\space \hspace{3.5cm}$                temp[j] = int(appliedRule[1]) 

$\space \hspace{3cm}$            elif CA[Li] == 1 and CA[j] == 0 and CA[Li] == 1$:$

$\space \hspace{3.5cm}$                temp[j] = int(appliedRule[2]) 

$\space \hspace{3cm}$            elif CA[Li] == 1 and CA[j] == 0 and CA[Ri] == 0$:$

$\space \hspace{3.5cm}$                temp[j] = int(appliedRule[3])
 
$\space \hspace{3cm}$            elif CA[Li] == 0 and CA[j] == 1 and CA[Ri] == 1$:$

$\space \hspace{3.5cm}$                temp[j] = int(appliedRule[4]) 

$\space \hspace{3cm}$            elif CA[Li] == 0 and CA[j] == 1 and CA[Ri] == 0$:$

$\space \hspace{3.5cm}$                temp[j] = int(appliedRule[5])
 
$\space \hspace{3cm}$            elif CA[Li] == 0 and CA[j] == 0 and CA[Ri] == 1$:$

$\space \hspace{3.5cm}$                temp[j] = int(appliedRule[6]) 

$\space \hspace{3cm}$            elif CA[Li] == 0 and CA[j] == 0 and CA[Ri] == 0$:$

$\space \hspace{3.5cm}$                temp[j] = int(appliedRule[7])

$\space \hspace{3cm}$        else$:$

$\space \hspace{3.5cm}$            temp[j] = CA[j]

$\space \hspace{1.5cm}$    return temp
\end{algorithm}

\begin{algorithm}[h]
$\space \hspace{1cm}$ def update(cur, rule, i, alpha)$:$

$\space \hspace{1.5cm}$     nxt=cur

$\space \hspace{1.5cm}$     rule1=rules(rule)

$\space \hspace{1.5cm}$     nxt=AlphaSyncCA(cur,rule1,alpha)

$\space \hspace{1.5cm}$     return nxt

$\space \hspace{1cm}$ def main(dimx, rule, att, alpha,itScale)$:$

$\space \hspace{1.5cm}$     countSameConf = 0

$\space \hspace{1.5cm}$     initConf   = np.zeros((1, dimx), dtype=int)

$\space \hspace{1.5cm}$     totlConfs = pow(2,dimx)

$\space \hspace{1.5cm}$     totlIteratons = totlConfs$*$itScale
        
$\space \hspace{1.5cm}$     for iConf in range(totlConfs)$:$

$\space \hspace{2.0cm}$        initConf = [int(x) for x in np.binaryrepr(iConf, dimx)]

$\space \hspace{2.0cm}$         nxtConf,counter = initConf,0

$\space \hspace{2.0cm}$         while counter$<$totlIteratons$:$

$\space \hspace{2.5cm}$             nxtConf = update(nxtConf, rule, counter, alpha)

$\space \hspace{2.5cm}$             if(nxtConf == initConf)$:$

$\space \hspace{3cm}$                 countSameConf = countSameConf + 1                

$\space \hspace{3cm}$                 break

$\space \hspace{2.5cm}$            counter+=1

$\space \hspace{2.0cm}$         if(countSameConf == totlConfs)$:$

$\space \hspace{2.5cm}$         att.write(Reversible)

$\space \hspace{2.0cm}$     else$:$

$\space \hspace{2.5cm}$         att.write(Not Reversible)      
\end{algorithm}
\label{algo2}

\section{Reversible Nature of ECAs}
\label{sec:S2}
\begin{figure}[h!tbp]
\begin{center}
\includegraphics[width=100mm, height=80mm]{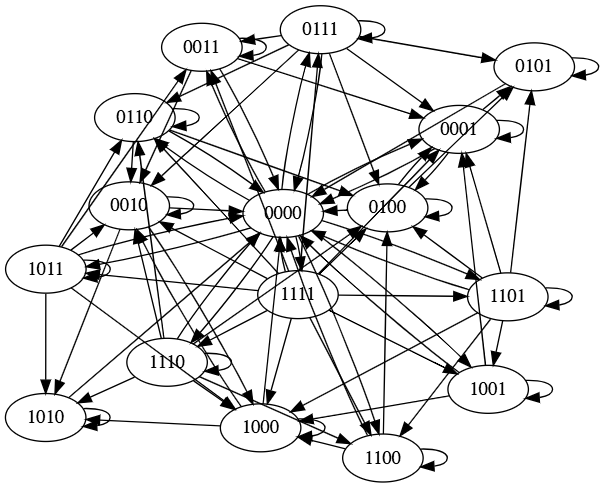} \\
\caption{\small{Transition diagram of ECA - 1 under $\alpha$ - ACA update scheme, where $n$ = 4 and $\alpha$ = 0.5.}}
\label{Fig1}
\end{center}
\end{figure}
 
\begin{table}[h!]
\centering
\begin{tabular}{c c c||c c c }\hline
\textbf{ECA} & $\alpha$ &\textbf{n}$\in \{4,5,\dots,51\}$ & \textbf{ECA} &  $\alpha$ & \textbf{n}$\in \{4,5,\dots,51\}$\\\hline
 1 &  0.2 - 0.9  & $n$                   & 57 & 0.1 - 0.9   &$n$ \\
 3 & ,,                 & ,,            & 105  & ,, &$n \neq  4i$\\
 7 & ,, & $n = 2i+1$& 142 & ,,          & $n$          \\
 9 & ,,  & ,,                           & 30  & 0.1 - 0.5   & $n = 2i+1$\\
 11 & ,, & ,,                           & 46  & ,,          & $n$\\
 19 & 0.1 - 0.9 & ,,                    & 60  & ,,          & ,,\\
 23 & ,, &$n = 2i+1$ & 62  & ,,          & ,,\\
 25 &,,   & $n$                         & 51  & 0.1 - 1.0   & ,,\\
 27 & ,,   & ,,                         & 204 & ,,          & ,, \\
 33 & ,,   & ,,                         & 15  &  1.0        & ,,\\
 35 & ,,   & ,,                        & 105 & ,,          & $n \neq 4i$ and $\neq 2i$  \\
 37 & ,,   & ,,                         & 170 & ,,          & ,,\\
 41 & ,,   & ,,                         & 106 & 0.1 - 0.2  & ,,\\
 43 & ,,   & ,,                         & 108 & ,,          & ,,\\
 45 & ,,   & ,,                        & 134 & ,,          & ,,\\\hline
\end{tabular}
\vspace{0.5em}
\caption{Reversible ECA R under the $\alpha$ - asynchronous update scheme.}
\label{table:tab1}
\end{table}

\begin{figure}[h!tbp]
\begin{center}
\begin{tabular}{cccccc}
ECA - $1$ & 
\includegraphics[width=25mm]{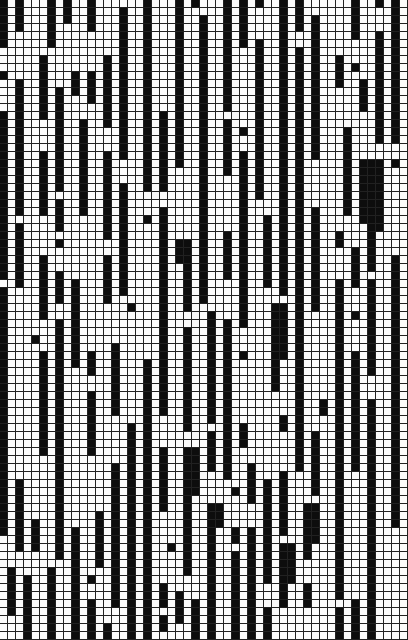} & 
\includegraphics[width=25mm]{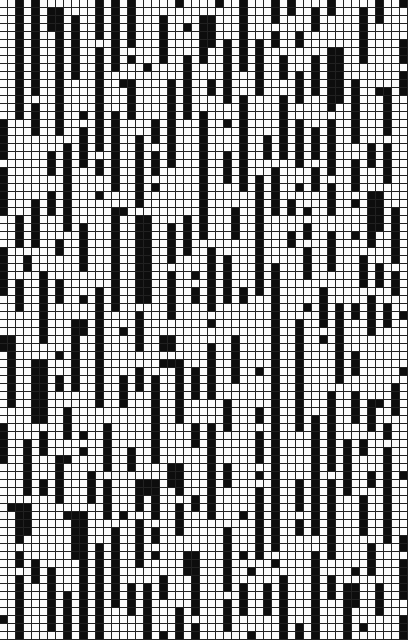} &  
\includegraphics[width=25mm]{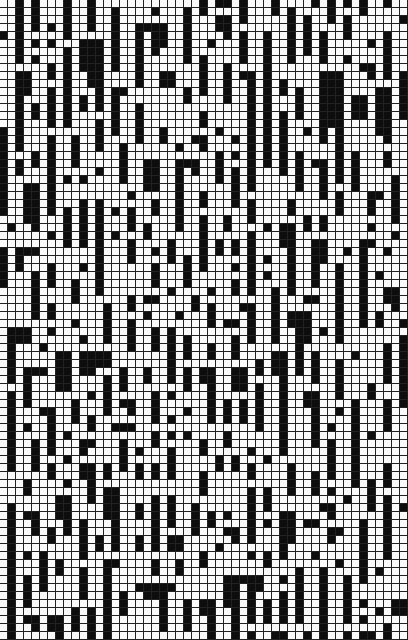} & 
\includegraphics[width=25mm]{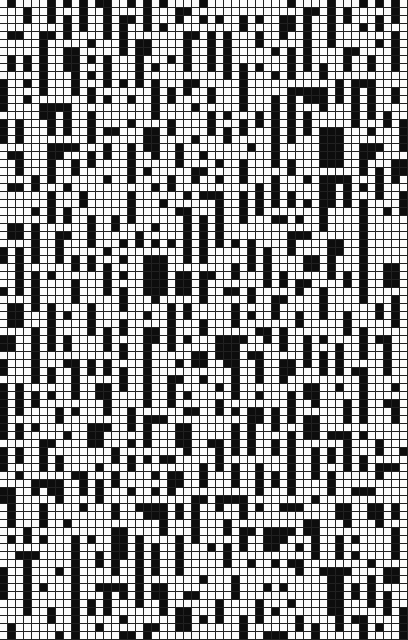} &  
\includegraphics[width=25mm]{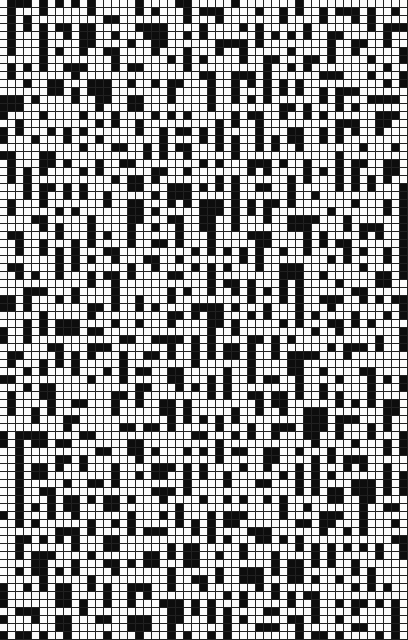} \\
  & $\alpha$ = 0.1 & $\alpha$ = 0.2 & $\alpha$ = 0.3 & $\alpha$ = 0.4 & $\alpha$ = 0.5\\
ECA - $1$  & 
\includegraphics[width=25mm]{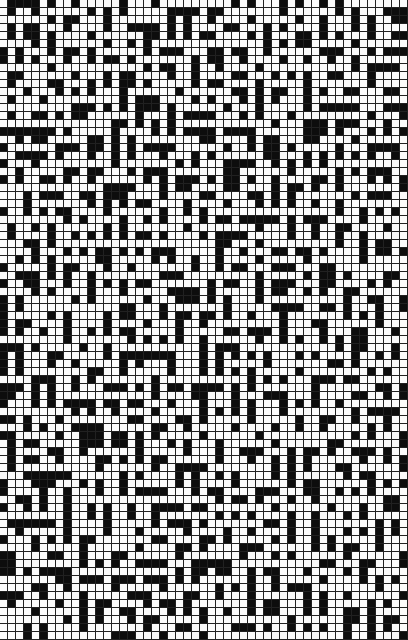} & 
\includegraphics[width=25mm]{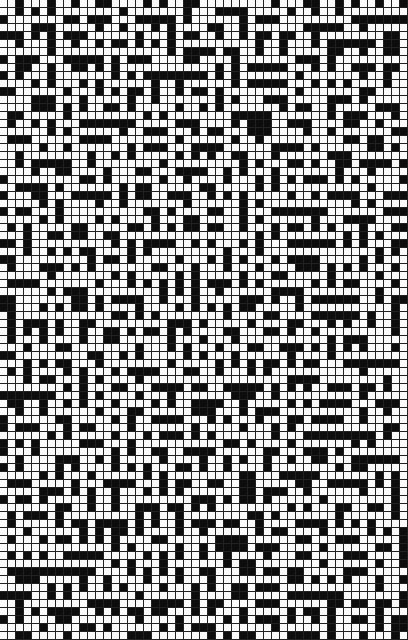} &  
\includegraphics[width=25mm]{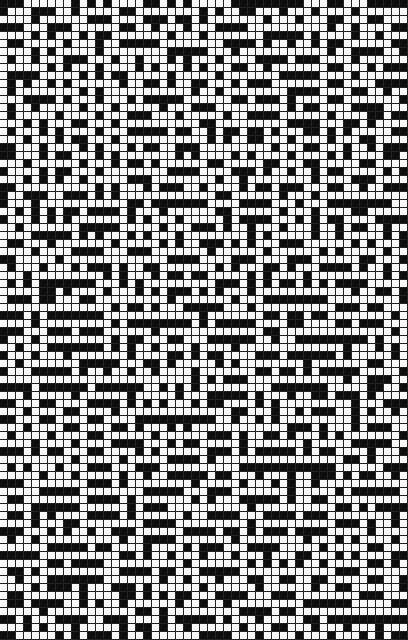} & 
\includegraphics[width=25mm]{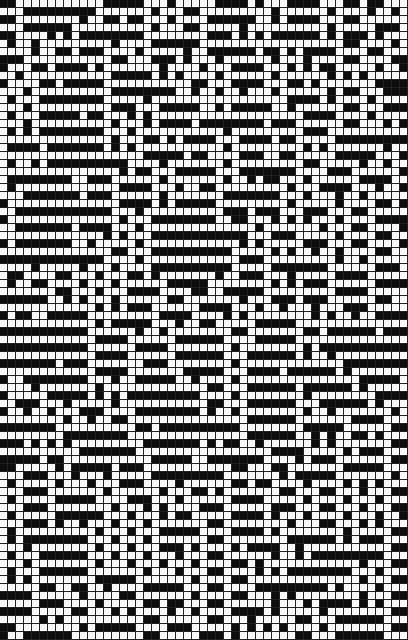} & 
\includegraphics[width=25mm]{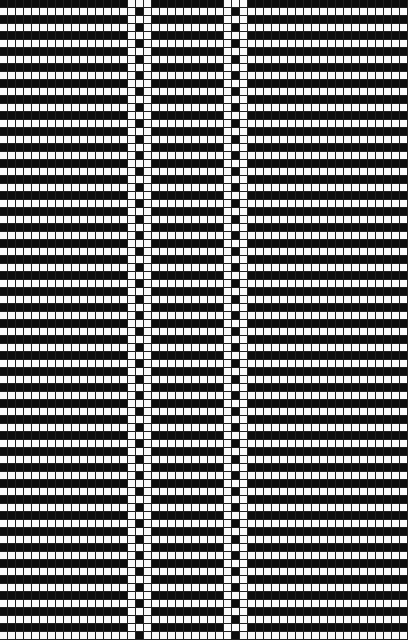} \\
  & $\alpha$ = 0.6 & $\alpha$ = 0.7 & $\alpha$ = 0.8 & $\alpha$ = 0.9 & $\alpha$ = 1.0\\
ECA - $3$ & \includegraphics[width=25mm]{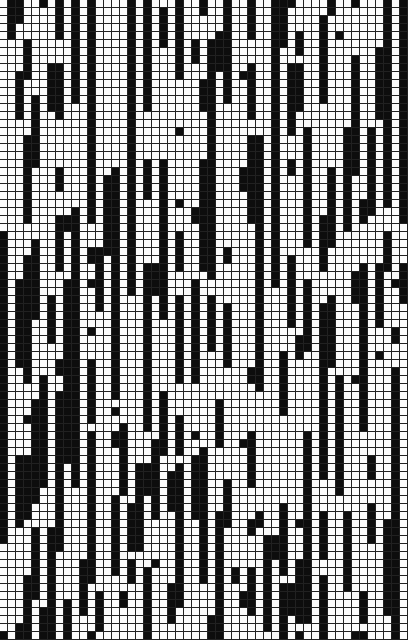} & 
\includegraphics[width=25mm]{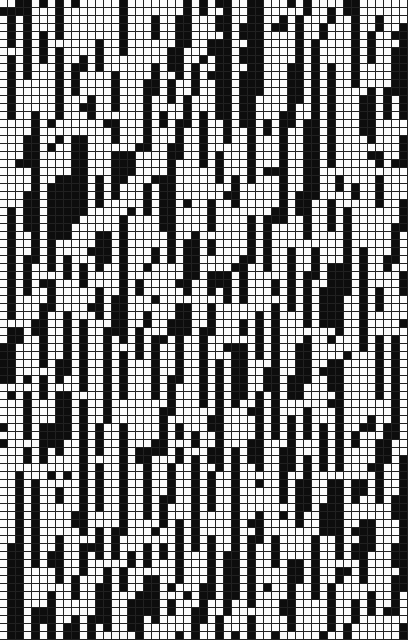} &  
\includegraphics[width=25mm]{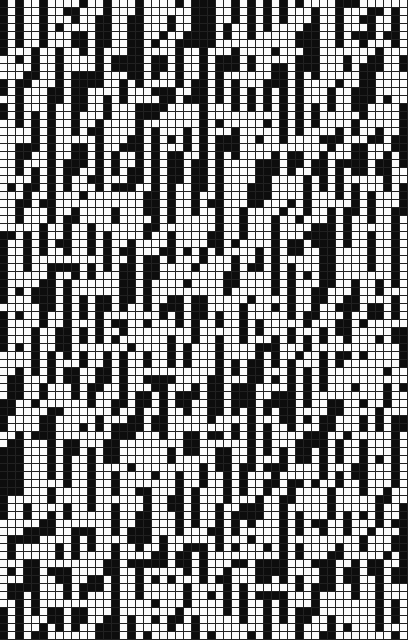} & 
\includegraphics[width=25mm]{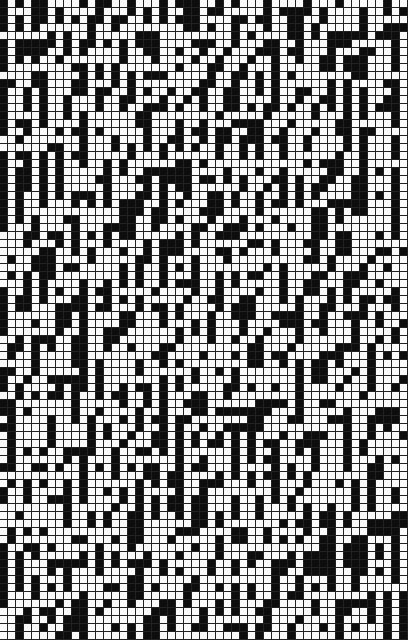} &  
\includegraphics[width=25mm]{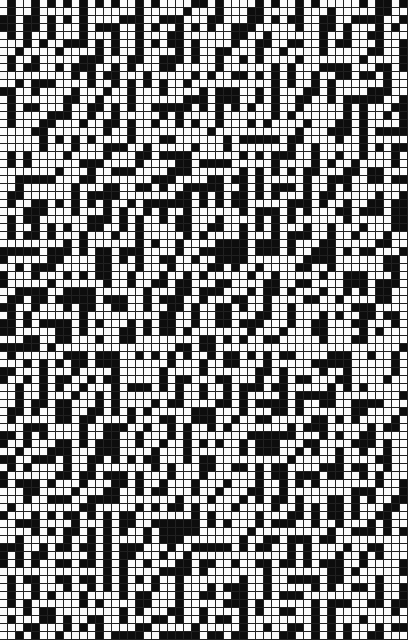} \\
  & $\alpha$ = 0.1 & $\alpha$ = 0.2 & $\alpha$ = 0.3 & $\alpha$ = 0.4 & $\alpha$ = 0.5\\

ECA - $3$& \includegraphics[width=25mm]{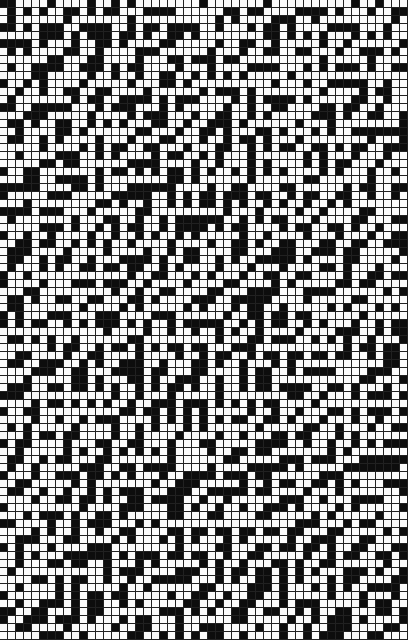} & 
\includegraphics[width=25mm]{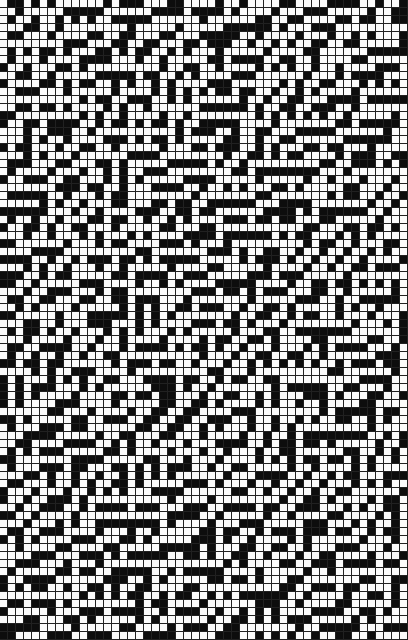} &  
\includegraphics[width=25mm]{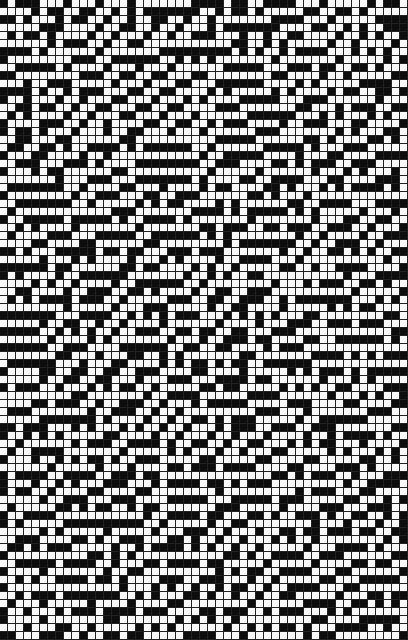} & 
\includegraphics[width=25mm]{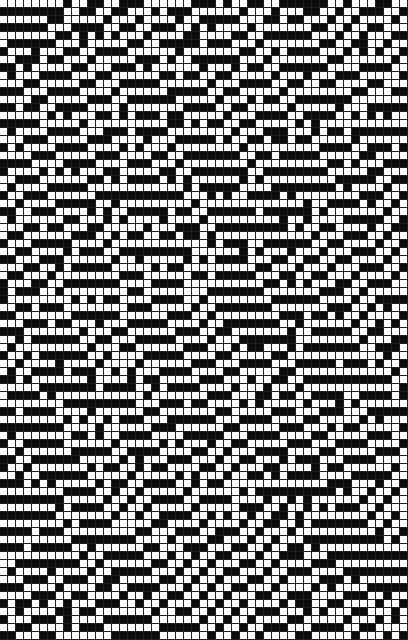} & 
\includegraphics[width=25mm]{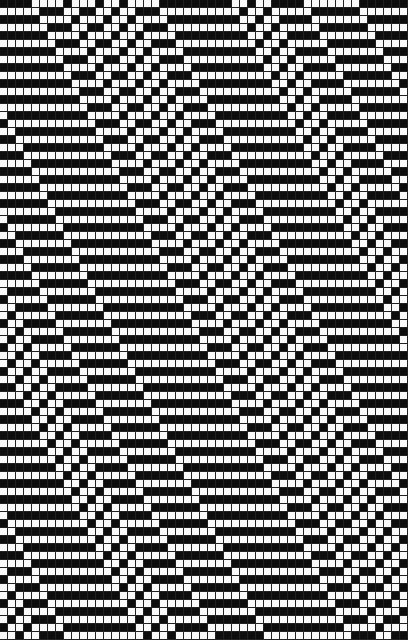} \\
  & $\alpha$ = 0.6 & $\alpha$ = 0.7 & $\alpha$ = 0.8 & $\alpha$ = 0.9 & $\alpha$ = 1.0\\
\end{tabular}
\caption{\small{Dynamics of ECAs $1$, and $3$  for $\alpha$- asynchronous systems. Here, $n = 51$ where the time goes form top to bottom. Each line represent the configuration of automata after $1,~n,~n/2$ updates respectively for changing updating schemes. Black and white squares represent cells in state $1$ and $0$. This convention is kept in the rest of the text.}}
\label{Fig2}
\end{center}
\end{figure}

\noindent \textbf{Observation 5.3.1} \textit{ An ECA R under the $\alpha$ - asynchronous update scheme is irreversible if anyone of the configurations all - $0$ and all - $1$ is a transient (acyclic) configuration.}. 
  
\noindent \textbf{Observation 5.3.2} \textit{An ECA under $\alpha$-ACA, where $\alpha \in $ $\{0.1, 0.9\}$, or $\{0.2, 0.9\}$ shows reversible dynamics if any one of the following is true corresponding to a given ECA rule R}.
\begin{itemize}
\item[1.] RMTs $0, 7$, at least one of $2, 5$ must be active such that any RMT of $1, 4, 3, 6$ can also be active, or.

\item[2.] RMT $6$, and either $0, 2$ or $1$ is active; and $5, 7$ are passive or.

\item[3.] RMTs $3, 4$ active and remaining others are passive, or.
\end{itemize}\

\noindent \textbf{Observation 5.3.3} \textit{Since ECAs show reversible dynamics for $n \in \{4,5,\dots,51\}$, therefore we can observe that they can be reversible for $n\in \mathbb{N}$ where $n \geq$ 4.}

These conditions highlight the intricate nature of reversibility in ECAs, influenced by both the rule applied and specific properties of the system size and density. Understanding these nuances is crucial for applications that require reversibility, such as cryptography and reversible computing.

\section{Convergent Nature of ECAs}
We have discussed findings and observations which focuses on the dynamics of ECA $R$ those are either reversible or convergent under the $\alpha \in [0.1, 0.9]$. However, we need to explore the logics related to the dynamics of other ECAs those behave sometimes as reversible and sometimes convergent for a specific range of $\alpha$.
\newpage
\begin{figure}[h!tbp]
\begin{center}
\includegraphics[width=14cm, height=12cm]{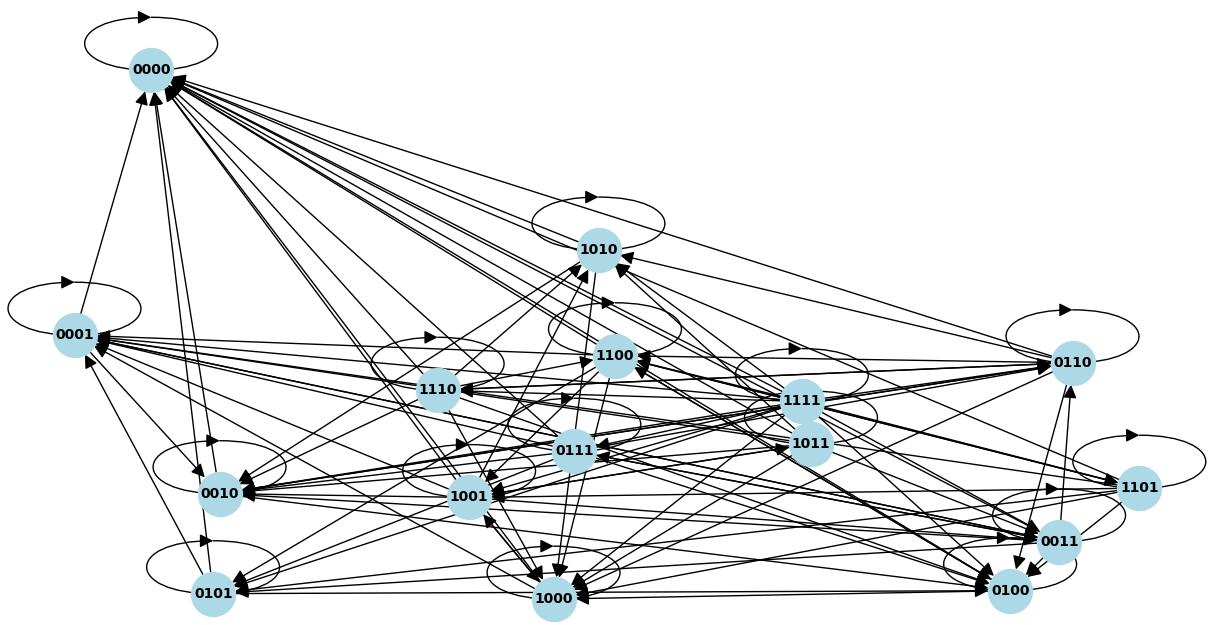} \\
\caption{\small{Transition diagram of ECA - 2 under $\alpha$ - ACA update scheme, where $n$ = 4 and $\alpha$ = 0.5.}}
\label{Fig1}
\end{center}
\end{figure}
\begin{table}[h!]
\centering
\begin{adjustbox}{max width=\textwidth}
\begin{tabular}{c c c c||c c c c }\hline
\textbf{ECA} & \textbf{C/NC/NR} & \textbf{n}$\in\{4,5,\dots,51 \}$ &  $\alpha$ & \textbf{ECA} &\textbf{C/NC/NR} & \textbf{n}$\in\{4,5,\dots,51 \}$& $\alpha$\\\hline
 0   & C & $n$ & 0.1 - 1.0                       & 2   & C & $n$ & 0.1 - 0.9\\
 4   & C & ,, & ,,                               & 6   & C & ,, & ,,\\
 5   & C & ,, & ,,                               & 10  & C & ,,& ,,\\
 7   & C & $n = 2i$& ,,     & 18  & C & ,,& ,,\\
 8   & C & $n$  & ,,                              & 24  & C & ,,& ,,\\
 12  & C & ,, & ,,                               & 30  & C & $n  = 5i$& ,,\\
 13  & C & ,, & ,,                               & 34  & C & $n$& ,,\\
 14  & C & ,, & ,,                               & 38  & C & ,,& ,,\\
 15  & C & ,, & ,,                               & 42  & C & ,,& ,,\\
 23  & C & $n = 2i$& ,,     & 50  & C & ,,& ,,\\
 28  & C & $n$ & ,,                               & 56  & C & ,,& ,,\\
 29  & C & ,,& ,,                               & 74  & C & ,,& ,,\\
 32  & C & ,, & ,,                               & 130 & C & ,,& ,,\\
 36  & C & ,, & ,,                               & 138 & C & ,,& ,,\\
 40  & C & ,, &,,                                & 152 & C & ,,& ,,\\
 44  & C & ,, &,,                                & 154 & C & ,,& ,,\\ 
 72  & C & ,, & ,,                               & 162 & C & ,,& ,,\\
 76  & C & ,, & ,,                               & 184 & C & ,,& ,,\\
 77  & C & ,, &,,                                & 26 & C & ,, & 0.1 - 0.5\\
 78  & C & ,, & ,,                               & 50 & C & ,, & ,,\\
 94  & C & ,, & ,,                               & 58 & C & ,, & ,,\\
 104 & C & ,, & ,,                               & 26 & NC,NR & ,,& 0.6 - 0.9\\
 128 & C & ,, & ,,                               & 50 & NC,NR & ,,& ,,\\
 132 & C & ,, & ,,                               & 58 & NC,NR & ,,& ,,\\
 136 & C & ,, & ,,                               & 106 & C & ,,& 0.3 - 0.9\\
 140 & C & ,, &,,                                & 108 & C & ,,& ,,\\
 156 & C & ,, & ,,                               & 46 & NC,NR & ,,& 0.6 - 1.0\\
 160 & C & ,, & ,,                               & 60 & NC,NR & ,,& ,,\\
 164 & C & ,, & ,,                               & 62 & NC,NR & ,,& ,,\\
 168 & C & ,, & ,,                               & 54 & NC,NR & ,,&0.1 - 1.0\\
 172 & C & ,, & ,,                               & 122 & NC,NR & ,,& ,,\\
 200 & C & ,, & ,,                               & 129 & NC,NR & ,,& ,,\\
 204 & C & ,, & ,,                               & 137 & NC,NR & ,,& ,,\\
 232 & C & ,, & ,,                               & 150 & NC,NR & ,, & ,,\\
 146 & C & ,, & 0.1 - 0.7                       & 146 & NC,NR & ,, & 0.8 - 1.0\\
 178 & C & ,, & 0.1 - 0.5                       & 178 & NC,NR & ,, & 0.6 - 1.0\\
 122&NC,NR &,, & 0.1 - 1.0 & 105&NC,NR & $n = 4i$ & 1.0\\\hline
\end{tabular}
\end{adjustbox}
\vspace{0.5em}
\caption{Convergent and non convergent - non reversible ECA R under the $\alpha$ - asynchronous update scheme. Here, C denotes convergent, NC denotes non convergent and NR denotes non reversible dynamics.}
\label{table:tab2}
\end{table}

\begin{figure}[h!tbp]
\begin{center}
\begin{tabular}{cccccc}
ECA - 2 & 
\includegraphics[width=25mm]{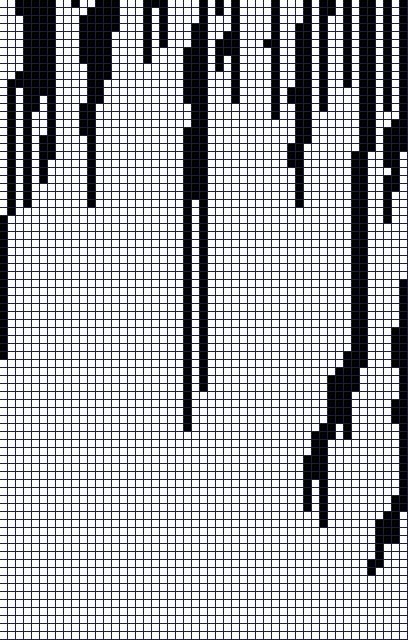} & 
\includegraphics[width=25mm]{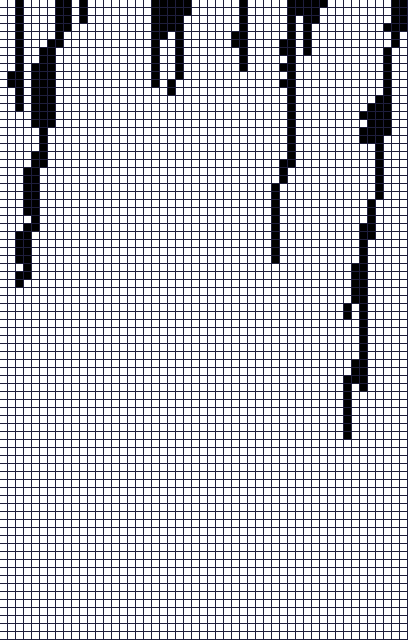} &  
\includegraphics[width=25mm]{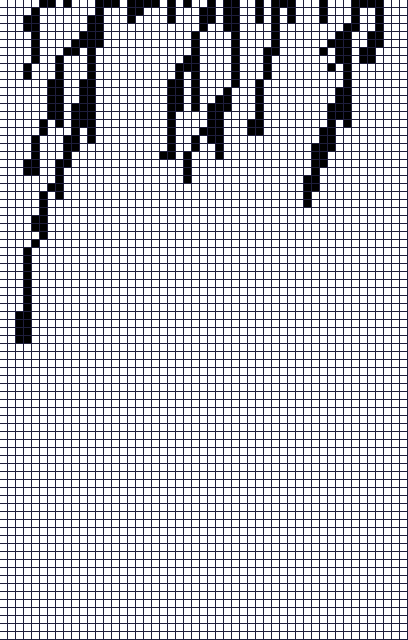} & 
\includegraphics[width=25mm]{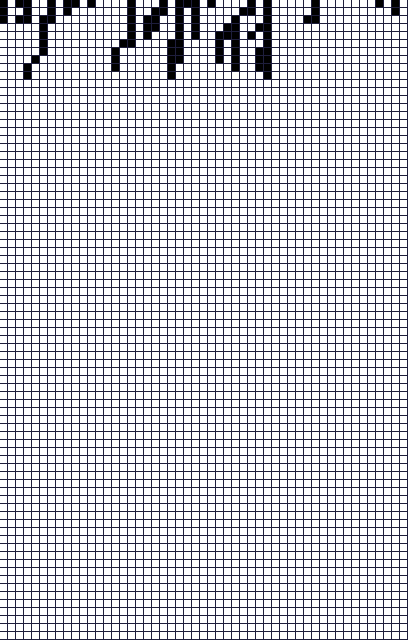} &  
\includegraphics[width=25mm]{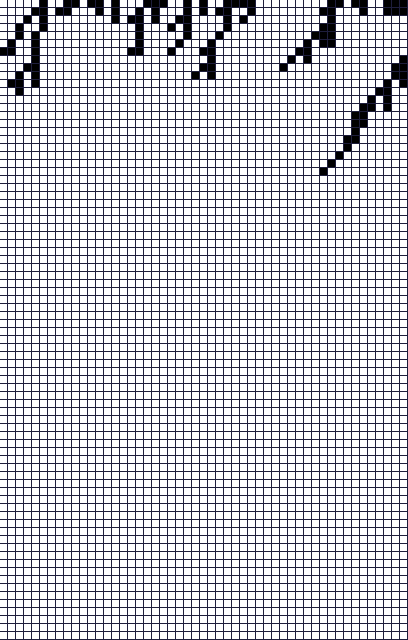} \\
  & $\alpha$ = 0.1 & $\alpha$ = 0.2 & $\alpha$ = 0.3 & $\alpha$ = 0.4 & $\alpha$ = 0.5\\
ECA - 2 & 
\includegraphics[width=25mm]{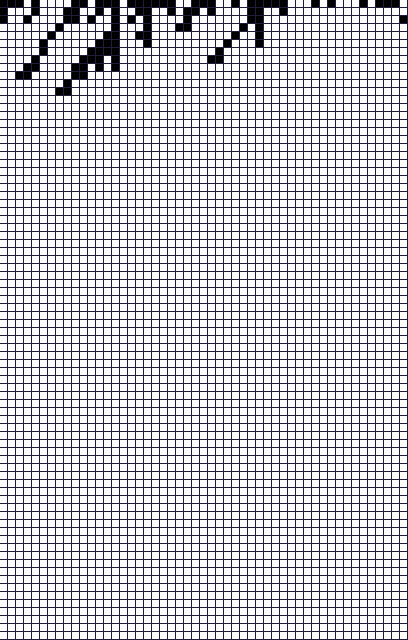} & 
\includegraphics[width=25mm]{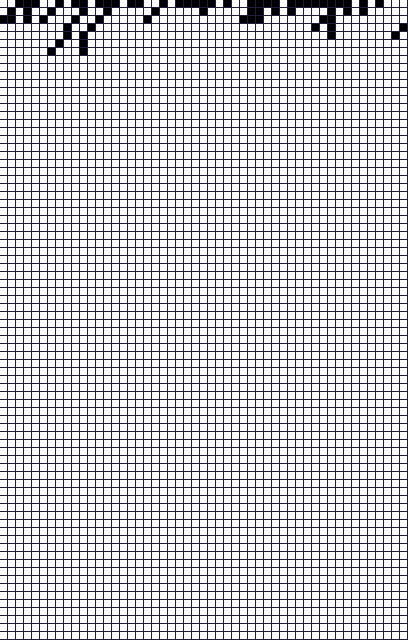} &  
\includegraphics[width=25mm]{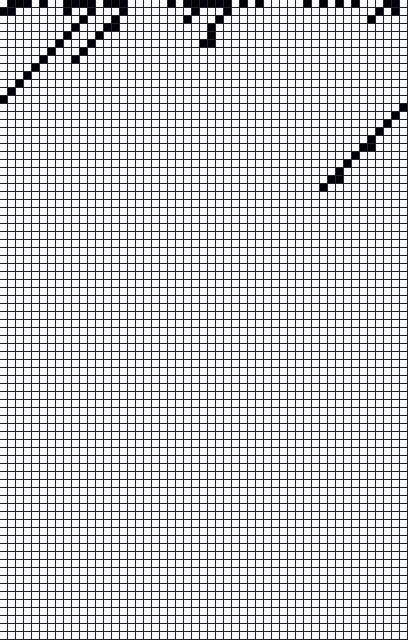} & 
\includegraphics[width=25mm]{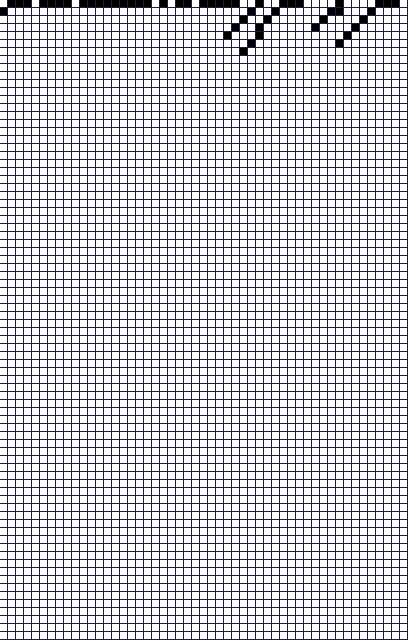} & 
\includegraphics[width=25mm]{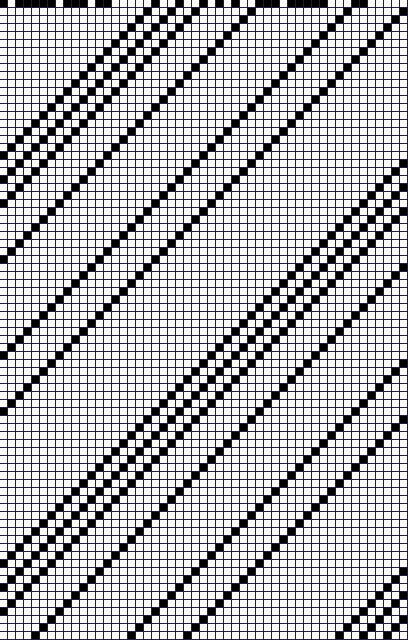} \\
 & $\alpha$ = 0.6 & $\alpha$ = 0.7 & $\alpha$ = 0.8 & $\alpha$ = 0.9 & $\alpha$ = 1.0\\
ECA - 56 & 
\includegraphics[width=25mm]{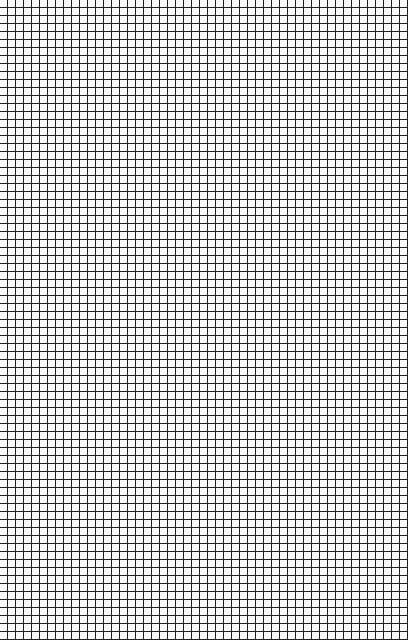} & 
\includegraphics[width=25mm]{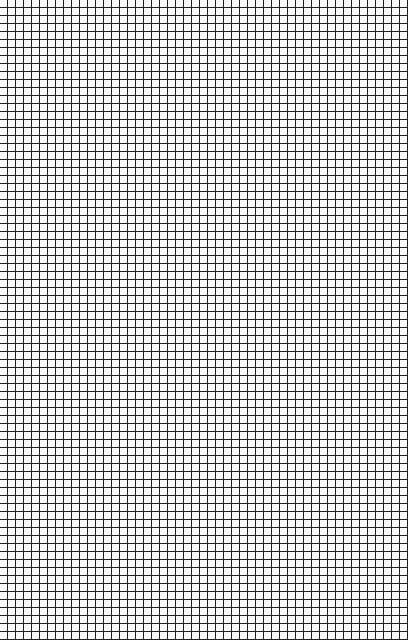} &  
\includegraphics[width=25mm]{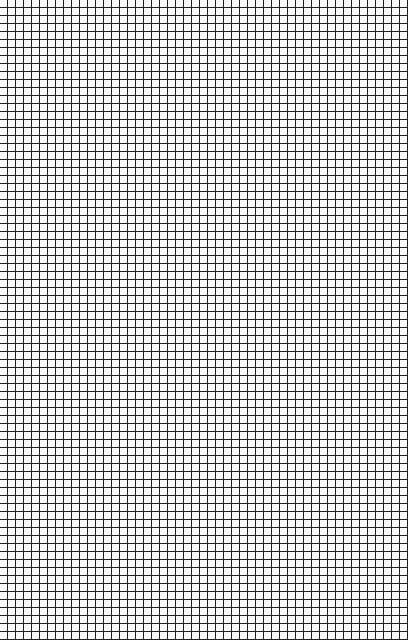} & 
\includegraphics[width=25mm]{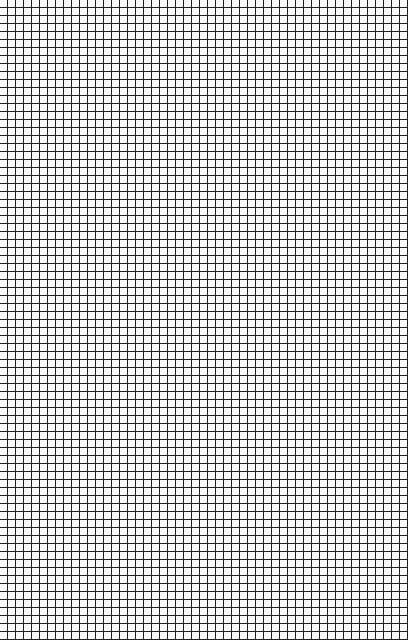} &  
\includegraphics[width=25mm]{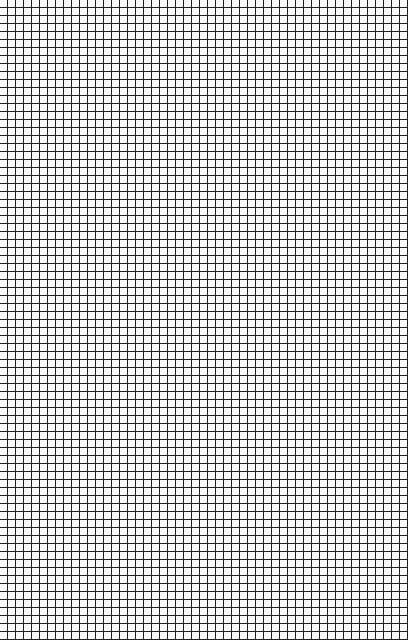} \\
  & $\alpha$ = 0.1 & $\alpha$ = 0.2 & $\alpha$ = 0.3 & $\alpha$ = 0.4 & $\alpha$ = 0.5\\
  ECA - 56 & 
\includegraphics[width=25mm]{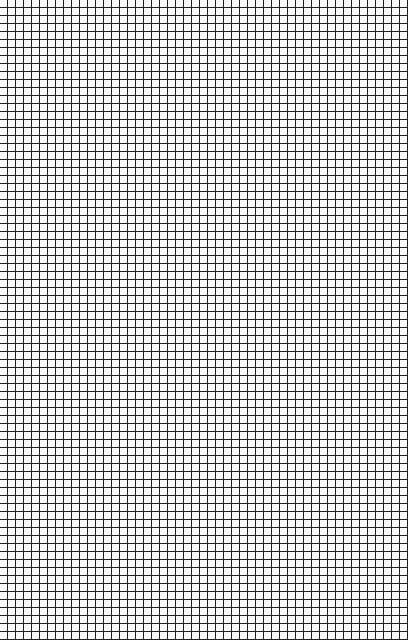} & 
\includegraphics[width=25mm]{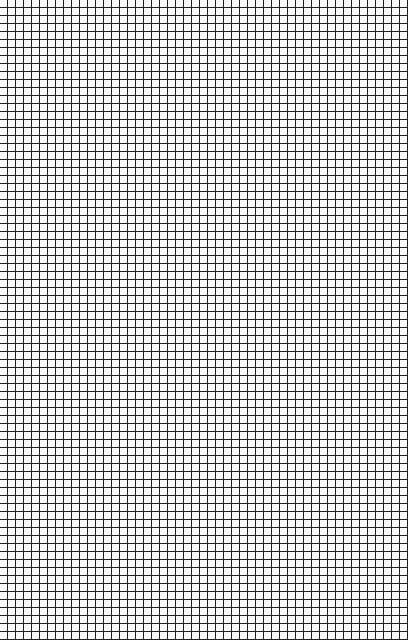} &  
\includegraphics[width=25mm]{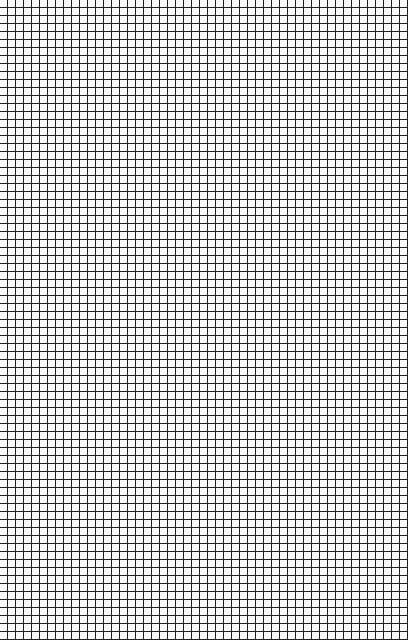} & 
\includegraphics[width=25mm]{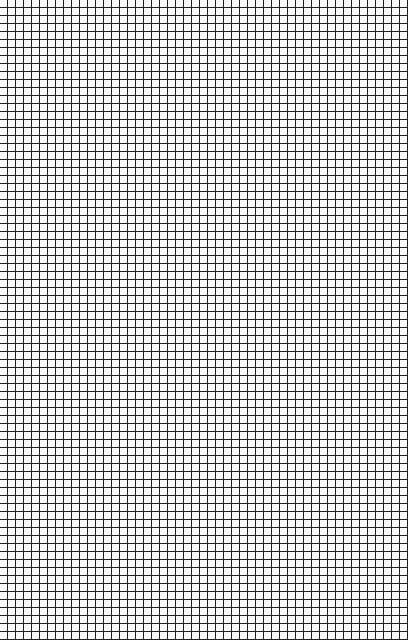} & 
\includegraphics[width=25mm]{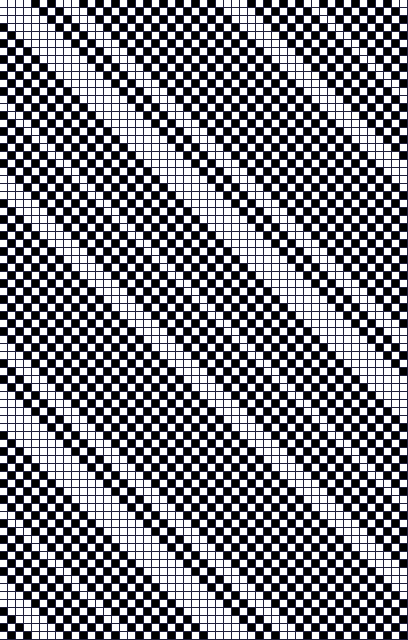} \\
 & $\alpha$ = 0.6 & $\alpha$ = 0.7 & $\alpha$ = 0.8 & $\alpha$ = 0.9 & $\alpha$ = 1.0\\
\end{tabular}
\caption{\small{Dynamics of ECAs $2$, and $56$  for $\alpha$- asynchronous systems. Here, $n = 51$ where the time goes form top to bottom. Here, ECA $2$ shows convergent behaviour on $80^{th}$ iteration, while the ECA $56$ shows it on $1040^{th}$ iteration. Each line represent the configuration of automata after $1,~n,~n/2$ updates respectively for changing updating schemes. Black and white squares represent cells in state $1$ and $0$. This convention is kept in the rest of the text. $\alpha$ = $1$ shows synchronous behaviour.}}
\label{Fig3}
\end{center}
\end{figure}

\noindent \textbf{Observation 5.4.1} \textit{Since ECAs show convergent dynamics for $n \in \{4,5,\dots,51\}$, therefore we can observe that they can be convergent for $n\in \mathbb{N}$ where $n \geq$ 4.}

\noindent \textbf{Observation 5.4.2} \textit{A rule R for $\alpha$-ACAs, where $\alpha \in \{0.1, 0.9\}$ , shows convergent dynamics to all - $0$ or all - $1$ if anyone case is true for a given rule R:}
\begin{itemize}
\item[1.] RMT $0$ is passive, one of the $2, 3$ is active such that:
\begin{itemize}
    \item[A.]$5$ and one of the $1, 4$ is passive, or.
    \item[B.]$5$ and one of the $4, 6$ is active, and $1$ is passive or.
\end{itemize}
\item[2.] RMTs $0, 7$ passive, $5$ and one of the $2,3,4,6$ are active, or.  

\item[3.] RMT $7$ is passive, $0, 5$ and one of $1, 4$ are active.
\end{itemize}

\section{Final Words}
In summary, we have explored $\alpha-$ACAs on minimal representative 88 ECAs. We studied the effect of $\alpha$ on the change into the dynamical behaviour of ECAs. In this study, we particularly focused on the (i) Reversibility and (ii) Convergent behaviour of ECAs for the $\alpha \in \{0.1, 0.9\}$. We observed that under this environment for $\alpha \in \{0.1, 0.9\}$, the ECAs $1$, $3$, $7$, $9$, $11$, $19$, $25$, and $27$ are reversible those are not reversible either under fully ACAs or skew ACAs. Though, we have theorized the logic working behind the reversibility and convergent behaviour of the ECAs those are either reversible or convergent to all-\textbf{0} or all-\textbf{1} under the above range of $\alpha$, we still need to establish theory for the logic working for these two dynamics for remaining ECAs. To sum up, we also have observed the two properties of the ECAs under the $\alpha$ - ACAs as pointed out below:
\begin{itemize}
\item[•] \textit{ An ECA R under the $\alpha$ - asynchronous update scheme is irreversible if anyone of the configurations all - $0$ and all - $1$ is a transient (acyclic) configuration.}. 
\item[•] An ECA $R \in \{ $0$, 2^{k} \}$, where $k\in\mathbb{N}$, will always converge to either all - \textbf{0} or all - \textbf{1}.
\end{itemize}
One can keenly observed these two propertied by performing experiment to establish theory and proof for these two properties.
\chapter{Conclusion}
\label{chap7}
This chapter aims to summarize the thesis' important contributions and to discuss the potential future directions for study in the subject of cellular automata.
\begin{itemize}
	\item[•] This thesis presents discussion about the asynchronous cellular automata with respect to the philosophy of asynchronism, type of the asynchronism and their applications in various dimensions of diverse fields.
	\item[•] This thesis presents a nice application of the fully asynchronous cellular automata to solve a machine learning problem that is clustering of a given data. The fully ACA based clustering algorithm has a potential as equivalent to as the other benchmark clustering techniques such as K-Means, Hierarchical clustering, and PAM. In future, its optimized form can be utilized for to cluster a data, where data can be homogeneous and heterogeneous.
	\item[•] The thesis presents a detailed study of the dynamical behaviour of the ECA under the various kinds of the asynchrous cellular auotmata update schemes. It also presents theoretical concepts with proper proofs, which explicitly tells about the logic working behind a particular dynamics under a given environment and ECA.
	\item[•] The thesis also presents a new kind of asynchronous cellular automata, which we called skewe asynchronocus cellular automata. In the skew cellular automata, exactly two consecutive and adjacent cells are updated simultaneously corresponding to update patterns associated to a given ECA. It first time explicitly tells about the effect of breaking atomicity over the reversibility and convergent dynamics.
	\item The thesis also present a partial study of the ECA under the $\alpha$-ACA update scheme. Fist time, we see that ECAs such as $1, 3, 7, 9, 11, 19, 25$, and $27$ are reversible in nature.
\end{itemize}




\cleardoublepage
\phantomsection
\addcontentsline{toc}{chapter}{Author's Publications}
\chapter*{Author's Publications}
\label{chap8}

\hspace{0.1in}\textbf{Accepted Paper}\\
\begin{itemize}
\item S.~Roy, V.~K.~Gautam and S.~Das, ``A note on Skew Asynchronous Cellular Automata'', ``Cellular Automata and Discrete Complex Systems'', \textit{AUTOMATA - 2024}, (2024).(Paper Accepted).
\end{itemize}

\vspace{0.4in}



\pagestyle{plain}
\bibliographystyle{IEEEtran}

\cleardoublepage
\phantomsection
\addcontentsline{toc}{chapter}{Bibliography}
\bibliography{Thesis}

 \newcommand{\noop}[1]{}
\begin{thebibliography}{100}
\providecommand{\url}[1]{#1}
\csname url@samestyle\endcsname
\providecommand{\newblock}{\relax}
\providecommand{\bibinfo}[2]{#2}
\providecommand{\BIBentrySTDinterwordspacing}{\spaceskip=0pt\relax}
\providecommand{\BIBentryALTinterwordstretchfactor}{4}
\providecommand{\BIBentryALTinterwordspacing}{\spaceskip=\fontdimen2\font plus
\BIBentryALTinterwordstretchfactor\fontdimen3\font minus
  \fontdimen4\font\relax}
\providecommand{\BIBforeignlanguage}[2]{{%
\expandafter\ifx\csname l@#1\endcsname\relax
\typeout{** WARNING: IEEEtran.bst: No hyphenation pattern has been}%
\typeout{** loaded for the language `#1'. Using the pattern for}%
\typeout{** the default language instead.}%
\else
\language=\csname l@#1\endcsname
\fi
#2}}
\providecommand{\BIBdecl}{\relax}
\BIBdecl



\bibitem{ref_r1}
A. K. Jain, M. N. Murty and P. J. Flynn, “Data Clustering: A Review,” \emph{ACM Computing Surveys}, 31(3), 1999 pp. 264–323. doi:10.1145/331499.331504.

\bibitem{ref_r3}
B. Sethi, Theory and applications of fully asynchronous cellular automata (Doctoral dissertation, PhD thesis, Indian Institute of Engineering Science and Technology, Shibpur),  (2018).

\bibitem{ref_r4}
N. Fatès, B. Sethi, and S. Das. "On the reversibility of ECAs with fully asynchronous updating: the recurrence point of view." Reversibility and Universality: Essays Presented to Kenichi Morita on the Occasion of his 70th Birthday (2018): 313-332.

\bibitem{ref_r5}
D. Xu and Y. A Tian, “A Comprehensive Survey of Clustering Algorithms,” Annals of Data Science, 2(2), 2015 pp. 165–193. doi:10.1007/s40745-015-0040-1.

\bibitem{ref_r6}
S. Das, B. K. Sikdar, and P. P. Chaudhuri, ``Characterization of reachable/nonreachable cellular automata states.'' In \emph{International Conference on Cellular Automata} (pp. 813-822). Berlin, Heidelberg: Springer Berlin Heidelberg.

\bibitem{ref_r7}
S. Mukherjee, K. Bhattacharjee, and S. Das, ``Cycle Based Clustering Using Reversible Cellular Automata,'' In Cellular Automata and Discrete Complex Systems: 26th IFIP WG 1.5 International Workshop, \emph{ AUTOMATA 2020}, Stockholm, Sweden, August 10–12, 2020, Proceedings 26 (pp. 29-42). Springer International Publishing.

\bibitem{ref_r8}
S. Wolfram, “Theory and Applications of Cellular Automata'': Including Selected Papers, 1983–1986, Singapore: World Scientific, 1986.

\bibitem{ref_r9}
V. Estivill-Castro, “Why So Many Clustering Algorithms: A Position Paper,” ACM Special Interest Group on Knowledge Discovery in Data Explorations Newsletter, 4(1), 2002 pp. 65–75. doi:10.1145/568574.568575.

\bibitem{ref_r11}
S.~Abhishek, M.~Dharwish, A.~Das, K.~Bhattacharjee, ``A Cellular Automata-Based Clustering Technique for High-Dimensional Data.'' In: Das, S., Martinez, G.J. (eds) Proceedings of Second Asian Symposium on Cellular Automata Technology. \emph{ASCAT 2023}. Advances in Intelligent Systems and Computing, vol 1443. Springer, Singapore. https://doi.org/10.1007/978-981-99-0688-8-4.

\bibitem{ref_r12}
V.~Manoranjan, S.~Rao, G.~Vaidhianathan, S.~V.~Bhattacharjee, ``Optimized Reversible Cellular Automata Based Clustering,''  In: L.~Manzoni, L.~Mariot, D.~R.~Chowdhury, ``Cellular Automata and Discrete Complex Systems,'' \emph{AUTOMATA 2023}. Lecture Notes in Computer Science, vol 14152. Springer, Cham. https://doi.org/10.1007/978-3-031-42250-8-6

\bibitem{ref_r13}
S.~Mukherjee, K.~Bhattacharjee and S.~Das, ``Clustering Using Cyclic Spaces of Reversible Cellular Automata,'' \emph{Complex Systems}, 30(2), 2021 pp. 205–237. https://doi.org/10.25088/ComplexSystems.30.2.205

\bibitem{ref_r14}
S.~Mukherjee, K.~Bhattacharjee and S.~Das, ``Reversible Cellular Automata: A Natural Clustering Technique''.

\bibitem{ref_r15}
N.~Fatès, ``A Guided Tour of Asynchronous Cellular Automata,'' In:  J.~Kari, M.~Kutrib, A.~Malcher, (eds) \emph{Cellular Automata and Discrete Complex Systems}. AUTOMATA 2013. Lecture Notes in Computer Science, vol 8155. Springer, Berlin, Heidelberg. https://doi.org/10.1007/978-3-642-40867-0-2

\bibitem{ref_r16}
B.~Sethi, N..~Fatès, and S.~Das, ``Reversibility of Elementary Cellular Automata under Fully Asynchronous Update,'' In: Gopal, T.V., Agrawal, M., Li, A., Cooper, S.B. (eds) \emph{Theory and Applications of Models of Computation}. TAMC 2014. Lecture Notes in Computer Science, vol 8402. Springer, Cham. https://doi.org/10.1007/978-3-319-06089-7-4

\bibitem{ref_r17}
S.~Amoroso and Y.~N.~Patt, ``Decision procedures for surjectivity and injectivity of parallel maps for tesselation structures,'' \emph{Journal of Computer and System Sciences}, 6:448–464, 1972

\bibitem{ref_r18}
D.~Richardson, ``Tessellations with local transformations,'' \emph{Journal of Computer Systems and Sciences}, 6:373–388, 1972.

\bibitem{ref_r19}
T.~Toffoli, ``Computation and construction universality of reversible cellular automata,'' \emph{Journal of Computer Systems and Sciences}, 15:213–231, 1977

\bibitem{FATES20061}
N.~Fat{\`e}s, E.~Thierry, M.~Morvan, and N.~Schabanel, ``Fully asynchronous behavior of double-quiescent elementary cellular automata,'' \emph{Theoretical Computer Science}, 362(1):1 -- 16, 2006.

\bibitem{Sethi901}
B.~Sethi, S.~Roy, and S.~Das, ``Convergence of asynchronous cellular automata: Does size matter?,'' \emph{Journal of Cellular Automata}, 13(5/6):527--542, 2018.

\bibitem{Anindita12}
A.~Sarkar, A.~Mukherjee, and S.~Das, ``Reversibility in asynchronous cellular automata,'' \emph{Complex Systems}, 21(1):72 -- 84, 2012.

\bibitem{Sethi2016}
B.~Sethi, S.~Roy, and S.~Das, ``Asynchronous cellular automata and pattern classification,'' \emph{Complexity}, 21(S1):370--386, 2016.

\bibitem{Bandini12}
B.~Stefania, B.~Andrea, and V.~Giuseppe, ``An analysis of different types and effects of asynchronicity in cellular automata update schemes,'' \emph{Natural Computing}, 11:277--287, (06) 2012.

\bibitem{roydistributed}
S.~Roy, ``Distributed Computing on Cellular Automata with Applications to Societal Problems,'' \emph{PhD thesis, Indian Institute of Engineering Science and Technology, Shibpur}, 2021.

\bibitem{roy:hal-04456320}
S.~Roy, N.~Fat{\`e}s, and S.~Das, ``Reversibility of Elementary Cellular Automata with fully asynchronous updating: an analysis of the rules with partial recurrence,'' \emph{working paper or preprint}, February 2024.

\bibitem{robco}
R.~Cori, Y.~Metivier, and W.~Zielonka, ``Asynchronous mappings and asynchronous cellular automata,'' \emph{Information and Computation}, 106(2):159--202, October 1993.

\bibitem{Alberto12}
A.~Dennunzio, E.~Formenti, L.~Manzoni, and G.~Mauri, ``M-asynchronous cellular automata: From fairness to quasi-fairness,''
\emph{Natural Computing}, 12(4):561--572, 2013.

\bibitem{langton1984self}
C.~G. Langton, ``Self-reproduction in cellular automata,'' \emph{Physica D: Nonlinear Phenomena}, vol.~10, no. 1-2, pp. 135--144, 1984.

\bibitem{Li90thestructure}
W.~Li and N.~Packard, ``The structure of the elementary cellular automata rule
  space,'' \emph{Complex Systems}, vol.~4, pp. 281--297, 1990.

\bibitem{turing1937computable}
A.~M. Turing, ``On computable numbers, with an application to the
  {E}ntscheidungsproblem,'' \emph{Proceedings of the London mathematical
  society}, vol.~2, no.~1, pp. 230--265, 1937.

\bibitem{Turing}
S.~B. Cooper and J.~V. Leeuwen, Eds., \emph{Alan Turing: His Work and
  Impact}.\hskip 1em plus 0.5em minus 0.4em\relax Elsevier,Boston, 2013.

\bibitem{Clark21}
D.~D. {Clark}, K.~T. {Pogran}, and D.~P. {Reed}, ``An introduction to local
  area networks,'' \emph{Proceedings of the IEEE}, vol.~66, no.~11, pp.
  1497--1517, 1978.

\bibitem{Maarten7}
M.~Steen and A.~S. Tanenbaum, ``A brief introduction to distributed systems,''
  \emph{Computing}, vol.~98, no.~10, pp. 967--1009, Oct. 2016.

\bibitem{tel73}
G.~Tel, ``Introduction to distributed algorithms , cambridge university
  press,'' 1994.

\bibitem{consys1}
A.~Verbová and R.~Huzvár, ``Advantages and disadvantages of the methods of
  describing concurrent systems,'' \emph{Studia Universitatis Babes-Bolyai
  Series Informatica}, vol. LIII, no.~2, 2008.

\bibitem{Milner}
R.~Milner, \emph{A Calculus of Communicating Systems}.\hskip 1em plus 0.5em
  minus 0.4em\relax Springer-Verlag New York, Inc., 1982.

\bibitem{Reisig}
W.~Reisig, \emph{Petri Nets: An Introduction}.\hskip 1em plus 0.5em minus
  0.4em\relax New York, NY, USA: Springer-Verlag New York, Inc., 1985.

\bibitem{Neuma66}
J.~von Neumann, \emph{Theory of Self-Reproducing Automata}, A.~W. Burks,
  Ed.\hskip 1em plus 0.5em minus 0.4em\relax USA: University of Illinois Press,
  1966.

\bibitem{Gardner70}
M.~Gardner, ``Mathematical games: The fantastic combinations of john conway's
  new solitaire game “life”,'' \emph{Scientific American}, vol. 223, pp.
  120--123, 01 1970.

\bibitem{TT1950}
\BIBentryALTinterwordspacing
A.~M. TURING, ``{I.—COMPUTING MACHINERY AND INTELLIGENCE},'' \emph{Mind},
  vol. LIX, no. 236, pp. 433--460, 10 1950. [Online]. Available:
  \url{https://doi.org/10.1093/mind/LIX.236.433}
\BIBentrySTDinterwordspacing

\bibitem{Searle1980}
J.~Searle, ``Minds, brains, and programs, the behavioral and brain sciences,
  vol,'' \emph{Reprinted As}, vol.~3, pp. 417--457, 01 1980.

\bibitem{Kenneth1988}
K.~Hunt, ``Intkoduction to artificial intelligence, e. charniak and d.
  mcdermott, addison-wesley publ., 1985, isbn 0-201-11946-3,''
  \emph{International Journal of Adaptive Control and Signal Processing - INT J
  ADAPT CONTROL SIGNAL PR}, vol.~2, pp. 148--149, 06 1988.

\bibitem{PhysRevLett.88.237901}
S.~Lloyd, ``Computational capacity of the universe,'' \emph{Physical Review
  Letters}, vol.~88, p. 237901, 2002.

\bibitem{Zuse1982}
K.~Zuse, ``The computing universe,'' \emph{International Journal of Theoretical
  Physics}, vol.~21, no. 6--7, pp. 589--600, 1982.

\bibitem{wolfram2002new}
S.~Wolfram, \emph{A new kind of science}.\hskip 1em plus 0.5em minus
  0.4em\relax Champaign, Ilinois, United States: Wolfram-Media Inc., 2002.

\bibitem{Cor}
R.~Cori, Y.~Metivier, and W.~Zielonka, ``Asynchronous mappings and asynchronous
  cellular automata,'' \emph{Information and Computation}, vol. 106, no.~2, pp.
  159 -- 202, 1993.

\bibitem{Smith71}
A.~R. Smith~III, ``Simple computation-universal cellular spaces,''
  \emph{Journal of the ACM}, vol.~18, no.~3, pp. 339--353, 1971.
  
\bibitem{Smith76}
A.~R. Smith~III, ``Introduction to and survey of polyautomata theory,''
  \emph{Automata Languages Development}, pp 405–422, 1976.  

\bibitem{Heinz9}
H.~M{\"u}hlenbein and R.~H{\"o}ns, ``Stochastic analysis of cellular automata
  and the voter model,'' in \emph{Cellular Automata}, S.~Bandini, B.~Chopard,
  and M.~Tomassini, Eds.\hskip 1em plus 0.5em minus 0.4em\relax Berlin,
  Heidelberg: Springer Berlin Heidelberg, 2002, pp. 92--103.

\bibitem{fates13}
N.~Fat{\`e}s, ``{Stochastic Cellular Automata Solutions to the Density
  Classification Problem - When randomness helps computing},'' \emph{{Theory of
  Computing Systems}}, vol.~53, no.~2, pp. 223--242, 2013.

\bibitem{fates:LIPIcs}
------, ``{Stochastic Cellular Automata Solve the Density Classification
  Problem with an Arbitrary Precision},'' in \emph{28th International Symposium
  on Theoretical Aspects of Computer Science (STACS 2011)}, ser. Leibniz
  International Proceedings in Informatics (LIPIcs), T.~Schwentick and
  C.~D{\"u}rr, Eds., vol.~9.\hskip 1em plus 0.5em minus 0.4em\relax Dagstuhl,
  Germany: Schloss Dagstuhl--Leibniz-Zentrum fuer Informatik, 2011, pp.
  284--295.

\bibitem{alonso2009memory}
\BIBentryALTinterwordspacing
R.~Alonso-Sanz and L.~Bull, ``Cellular automata with memory,'' \emph{Springer
  New York}, pp. 823 -- 848, 2009. [Online]. Available:
  \url{https://doi.org/10.1007/ 978-0-387-30440-3_55}
\BIBentrySTDinterwordspacing

\bibitem{CPLX_CPLX21495}
R.~Alonso-Sanz, ``Reversible cellular automata with memory of delay type,''
  \emph{Complexity}, vol.~20, no.~1, pp. 49--56, 2014.

\bibitem{doi:10.}
S.~Ninagawa, A.~Adamatzky, and R.~Alonso-Sanz, ``Phase transition in elementary
  cellular automata with memory,'' \emph{International Journal of Bifurcation
  and Chaos}, vol.~24, no.~09, p. 1450116, 2014.

\bibitem{Alonso-Sanz2013}
R.~Alonso-Sanz, ``Elementary cellular automata with memory of delay type,'' in
  \emph{Cellular Automata and Discrete Complex Systems, AUTOMATA 2013,
  Proceedings}, 2013, pp. 67--83.

\bibitem{Seck12}
J.~S.~T.~Mora, G.~J. Martinez, R.~Alonso-Sanz, and N.~H.~Romero, ``Invertible
  behavior in elementary cellular automata with memory,'' \emph{Information
  Sciences}, vol. 199, pp. 125--132, (09) 2012.

\bibitem{Das2022}
\BIBentryALTinterwordspacing
S.~Das, \emph{Game of Life, Athenian Democracy and Computation}.\hskip 1em plus
  0.5em minus 0.4em\relax Cham: Springer International Publishing, 2022, pp.
  159--169. [Online]. Available:
  \url{https://doi.org/10.1007/978-3-031-03986-7_7}
\BIBentrySTDinterwordspacing

\bibitem{Devaney}
R.~L. Devaney, ``An introduction to chaotic dynamical systems,''
  \emph{Addison-Wesley}, 1986.

\bibitem{langton90}
C.~G. Langton, ``Computation at the edge of chaos: Phase transitions and
  emergent computation,'' \emph{Physica D: Nonlinear Phenomena}, vol.~42, no.
  1-3, pp. 12--37, 1990.
  
\bibitem{CARONLORMIER2008522}
G.~C.~Lormier, R.~W.~Humphry, D.~A.~Bohan, C.~Hawes, and
  P.~Thorbek, ``Asynchronous and synchronous updating in individual-based models,'' \emph{Ecological Modelling}, 212(3):522 -- 527, 2008.

\bibitem{DCTMitchell93}
M.~Mitchell, P.~T. Hraber, and J.~P. Crutchfield, ``Revisiting the edge of
  chaos: Evolving cellular automata to perform computations,'' \emph{Complex
  Systems}, vol.~7, no.~2, pp. 89--130, 1993.

\bibitem{Cattaneocht}
G.~Cattaneo, M.~Finelli, and L.~Margara, ``Investigating topological chaos by
  elementary cellular automata dynamics,'' \emph{Theoretical Computer Science},
  vol. 244, no. 1 - 2, pp. 219 -- 241, 2000.

\bibitem{Supreeti_2018_chaos}
S.~Kamilya and S.~Das, ``A study of chaos in cellular automata,''
  \emph{International Journal of Bifurcation and Chaos}, vol.~28, no.~03, p.
  1830008, 2018.

\bibitem{KAMILYA2019116}
------, ``A study of chaos in non-uniform cellular automata,''
  \emph{Communications in Nonlinear Science and Numerical Simulation}, vol.~76,
  pp. 116--131, 2019.

\bibitem{Siap11}
I.~Siap, H.~Akin, and S.~U{\u{g}}uz, ``Structure and reversibility of {2D}
  hexagonal cellular automata,'' \emph{Computers $\&$ Mathematics with
  Applications}, vol.~62, no.~11, pp. 4161--4169, 2011.

\bibitem{Kari90}
J.~Kari, ``Reversibility of {2D} cellular automata is undecidable,''
  \emph{Physica D: Nonlinear Phenomena}, vol.~45, pp. 386--395, 1990.

\bibitem{Kari94}
------, ``Reversibility and surjectivity problems of cellular automata,''
  \emph{Journal of Computer and System Sciences}, vol.~48, no.~1, pp. 149--182,
  1994.

\bibitem{marti2011reversibility}
A.~Mart{\i}n~del Rey and G.~Rodr{\i}guez~S{\i}nchez, ``Reversibility of linear
  cellular automata,'' \emph{Applied Mathematics and Computation}, vol. 217,
  no.~21, pp. 8360--8366, 2011.

\bibitem{DIGREGORIO75}
S.~{Di Gregorio} and G.~Trautteur, ``On reversibility in cellular automata,''
  \emph{Journal of Computer and System Sciences}, vol.~11, no.~3, pp. 382 --
  391, 1975.

\bibitem{Bhattacharjee16b}
K.~Bhattacharjee and S.~Das, ``Reversibility of $d$-state finite cellular
  automata,'' \emph{Journal of Cellular Automata}, vol.~11, no. 2-3, pp.
  213--245, 2016.

\bibitem{Sarkar12}
A.~Sarkar, A.~Mukherjee, and S.~Das, ``Reversibility in asynchronous cellular
  automata.'' \emph{Complex Systems}, vol.~21, no.~1, pp. 72 -- 84, 2012.

\bibitem{SethiD14}
B.~Sethi, N.~Fat{\`e}s, and S.~Das, ``Reversibility of elementary cellular
  automata under fully asynchronous update,'' in \emph{Proceedings of 11th
  Annual Conference Theory and Applications of Models of Computation}.\hskip
  1em plus 0.5em minus 0.4em\relax Springer International Publishing, 2014, pp.
  39--49.

\bibitem{tome1994necessary}
J.~A.~B. Tome, ``Necessary and sufficient conditions for reversibility in one
  dimensional cellular automata,'' in \emph{Proceedings Workshop on Physics and
  Computation}, 1994, pp. 156--159.

\bibitem{0305-4470-37-22-006}
A.~Nobe and F.~Yura, ``On reversibility of cellular automata with periodic
  boundary conditions,'' \emph{Journal of Physics A: Mathematical and General},
  vol.~37, no.~22, p. 5789, 2004.

\bibitem{kamalikaThesis}
K.~Bhattacharjee, ``{C}ellular {A}utomata : {R}eversibility,
  {S}emi-reversibility and {R}andomness,'' \emph{Ph.D. Dissertation
  (Submitted), Indian Institute of Engineering Science and Technology,
  Shibpur}, 2019.

\bibitem{CayLeyid02}
H.~Akin and C.-H. Chang, ``The entropy and reversibility of cellular automata
  on cayley tree,'' \emph{International Journal of Bifurcation and Chaos},
  vol.~30, pp. 1--15, 04 2020.

\bibitem{uguz2013reversibility}
S.~U{\u{g}}uz, H.~Akin, and I.~Siap, ``Reversibility algorithms for 3-state
  hexagonal cellular automata with periodic boundaries,'' \emph{International
  Journal of Bifurcation and Chaos}, vol.~23, no.~06, p. 1350101, 2013.

\bibitem{BhattacharjeeNR16}
K.~Bhattacharjee, N.~Naskar, S.~Roy, and S.~Das, ``A survey of cellular
  automata: types, dynamics, non-uniformity and applications,'' \emph{Natural
  Computing}, vol.~19, pp. 433--461, Jul 2020.

\bibitem{Ángel13}
A.~Rey, ``A note on the reversibility of elementary cellular automaton 150 with
  periodic boundary conditions,'' \emph{Romanian Journal of Information Science
  and Technology}, vol.~16, pp. 365--372, (01) 2013.

\bibitem{cinkir2011reversibility}
Z.~Cinkir, H.~Akin, and I.~Siap, ``Reversibility of {1D} cellular automata with
  periodic boundary over finite fields $\mathbb{Z}_p$,'' \emph{Journal of
  Statistical Physics}, vol. 143, no.~4, pp. 807--823, 2011.

\bibitem{Fats2018}
N.~Fat{\`e}s, B.~Sethi, and S.~Das, ``On the reversibility of ecas with fully
  asynchronous updating: The recurrence point of view,'' in \emph{Reversibility
  and Universality: Essays Presented to Kenichi Morita on the Occasion of his
  70th Birthday}, A.~Adamatzky, Ed.\hskip 1em plus 0.5em minus 0.4em\relax
  Springer International Publishing, 2018, pp. 313--332.

\bibitem{Sarkar11}
A.~Sarkar and S.~Das, ``On the reversibility of 1-dimensional asynchronous
  cellular automata.'' in \emph{AUTOMATA 2011: 17-th International Workshop on
  Cellular Automata and Discrete Complex Systems}, 2011, pp. 29--40.

\bibitem{Bardell1992}
P.~H. Bardell, ``Primitive polynomials of degree 301 through 500,''
  \emph{Journal of Electronic Testing}, vol.~3, no.~2, pp. 175--176, 1992.

\bibitem{Suzudo2004185}
T.~Suzudo, ``Spatial pattern formation in asynchronous cellular automata with
  mass conservation,'' \emph{Physica A: Statistical Mechanics and its
  Applications}, vol. 343, pp. 185--200, 2004.

\bibitem{ulam1962some}
S.~Ulam, ``On some mathematical problems connected with patterns of growth of
  figures,'' in \emph{Proceedings of Symposia in Applied Mathematics},
  vol.~14.\hskip 1em plus 0.5em minus 0.4em\relax American Mathematical
  Society, Providence, 1962, pp. 215--224.

\bibitem{uguz14}
S.~Uguz, U.~Sahin, H.~Akin, and I.~Siap, ``Self-replicating patterns in {2D}
  linear cellular automata,'' \emph{International Journal of Bifurcation and
  Chaos}, vol.~24, no.~01, p. 1430002, 2014.





\bibitem{GUNJI1990317}
Y.~Gunji, ``Pigment color patterns of molluscs as an autonomous process
  generated by asynchronous automata,'' \emph{Biosystems}, vol.~23, no.~4, pp.
  317 -- 334, 1990.

\bibitem{Margolus198481}
N.~Margolus, ``Physics-like models of computation,'' \emph{Physica D: Nonlinear
  Phenomena}, vol.~10, no. 1 - 2, pp. 81 -- 95, 1984.

\bibitem{Pries86}
W.~Pries, A.~Thanailakis, and H.~C. Card, ``Group properties of cellular
  automata and {VLSI} applications,'' \emph{IEEE Transactions on Computers},
  vol. C-35, no.~12, pp. 1013--1024, 1986.

\bibitem{vlsi00d}
B.~K.~Sikdar, K.~Paul, G.~P.~Biswas, V.~Boppana, C.~Yang, S.~Mukherjee, and
  P.~Pal~Chaudhuri, ``Theory and application of {GF}($2^p$) cellular automata
  as on-chip test pattern generator,'' in \emph{Proceedings of 13th
  International Conference on VLSI Design}, 2000, pp. 556--561.


\bibitem{opre.48.2.308.12385}
P.~L'Ecuyer, J.-F. Cordeau, and R.~Simard, ``Close-point spatial tests and
  their application to random number generators,'' \emph{Operations Research},
  vol.~48, no.~2, pp. 308--317, 2000.


\bibitem{ollinger2002quest}
N.~Ollinger, ``The quest for small universal cellular automata,'' in
  \emph{Proceedings of 29th International Colloquium Automata, Languages and
  Programming}.\hskip 1em plus 0.5em minus 0.4em\relax Springer Berlin
  Heidelberg, 2002, pp. 318--329.

\bibitem{ollinger2003intrinsic}
------, ``The intrinsic universality problem of one-dimensional cellular
  automata,'' in \emph{Proceedings of International Symposium on Theoretical
  Aspects of Computer Science}.\hskip 1em plus 0.5em minus 0.4em\relax
  Springer, 2003, pp. 632--641.

\bibitem{cook2004universality}
M.~Cook, ``Universality in elementary cellular automata,'' \emph{Complex
  systems}, vol.~15, no.~1, pp. 1--40, 2004.

\bibitem{moore1962machine}
E.~F. Moore, ``Machine models of self-reproduction,'' in \emph{Proceedings of
  Symposia in Applied Mathematics}, vol.~14, 1962, pp. 17--33.

\bibitem{Wolfr83}
S.~Wolfram, ``Statistical mechanics of cellular automata,'' \emph{Reviews of
  Modern Physics}, vol.~55, no.~3, pp. 601--644, 1983.

\bibitem{ppc1}
P.~Pal~Chaudhuri, D.~Roy~Chowdhury, S.~Nandi, and S.~Chattopadhyay,
  \emph{Additive Cellular Automata -- Theory and Applications}.\hskip 1em plus
  0.5em minus 0.4em\relax IEEE Computer Society Press, USA, ISBN 0-8186-7717-1,
  1997, vol.~1.

\bibitem{das2010scalable}
S.~Das and B.~K. Sikdar, ``A scalable test structure for multicore chip,''
  \emph{IEEE transactions on computer-aided design of integrated circuits and
  systems}, vol.~29, no.~1, pp. 127--137, 2010.

\bibitem{Palas1}
P.~Sarkar and R.~Barua, ``{M}ulti-dimensional $\sigma$-automata,
  $\pi$-polynomial and generalized $s$-matrices,'' \emph{{T}heoretical
  {C}omputer {S}cience}, vol. 197, no. 1-2, pp. 111--138, 1998.

\bibitem{Toffo87}
T.~Toffoli and N.~Margolus, \emph{Cellular Automata Machines: A New Environment
  for Modeling}.\hskip 1em plus 0.5em minus 0.4em\relax Cambridge, MA, USA: MIT
  Press, 1987.

\bibitem{cattell1996synthesis}
K.~Cattell and J.~C. Muzio, ``Synthesis of one-dimensional linear hybrid
  cellular automata,'' \emph{IEEE Transactions on Computer-Aided Design of
  Integrated Circuits and Systems}, vol.~15, no.~3, pp. 325--335, 1996.

\bibitem{imai2000computation}
K.~Imai and K.~Morita, ``A computation-universal two-dimensional 8-state
  triangular reversible cellular automaton,'' \emph{Theoretical Computer
  Science}, vol. 231, no.~2, pp. 181--191, 2000.

\bibitem{morita2016universality}
K.~Morita, ``Universality of 8-state reversible and conservative triangular
  partitioned cellular automata,'' in \emph{International Conference on
  Cellular Automata}.\hskip 1em plus 0.5em minus 0.4em\relax Springer
  International Publishing, 2016, pp. 45--54.

\bibitem{margenstern1999polynomial}
M.~Margenstern and K.~Morita, ``A polynomial solution for 3-{SAT} in the space
  of cellular automata in the hyperbolic plane,'' \emph{Journal of Universal
  Computer Science}, vol.~5, no.~9, pp. 563--573, 1999.

\bibitem{Margenstern200199}
------, ``{NP} problems are tractable in the space of cellular automata in the
  hyperbolic plane,'' \emph{Theoretical Computer Science}, vol. 259, no. 1-2,
  pp. 99--128, 2001.

\bibitem{Packa85b}
N.~H. Packard and S.~Wolfram, ``Two-dimensional cellular automata,''
  \emph{Journal of {S}tatistical {P}hysics}, vol.~38, no. 5/6, pp. 901--946,
  1985.

\bibitem{gandin19973d}
C.-A. Gandin and M.~Rappaz, ``A {3D} cellular automaton algorithm for the
  prediction of dendritic grain growth,'' \emph{Acta Materialia}, vol.~45,
  no.~5, pp. 2187--2195, 1997.

\bibitem{miller2005two}
D.~B. Miller and E.~Fredkin, ``Two-state, reversible, universal cellular
  automata in three dimensions,'' in \emph{Proceedings of the $2^{nd}$
  conference on Computing frontiers, CF'05}.\hskip 1em plus 0.5em minus
  0.4em\relax ACM, May 2005, pp. 45--51.

\bibitem{mo20143}
Y.~Mo, B.~Ren, W.~Yang, and J.~Shuai, ``The 3-dimensional cellular automata for
  {HIV} infection,'' \emph{Physica A: Statistical Mechanics and its
  Applications}, vol. 399, pp. 31--39, 2014.

\bibitem{dennunzio2014multidimensional}
A.~Dennunzio, E.~Formenti, and M.~Weiss, ``Multidimensional cellular automata:
  closing property, quasi-expansivity, and (un) decidability issues,''
  \emph{Theoretical Computer Science}, vol. 516, pp. 40--59, 2014.

\bibitem{tomassini29}
M.~Tomassini, M.~Giacobini, and C.~Darabos, ``Evolution and dynamics of
  small-world cellular automata,'' \emph{Complex Systems}, vol.~15, pp.
  261--284, 2005.

\bibitem{Marr20}
C.~Marr and M.~T. H\"{u}tt, ``Outer-totalistic cellular automata on graphs,''
  \emph{Physics Letters A}, vol. 373, no.~5, pp. 546 -- 549, 2009.

\bibitem{Amoroso72}
S.~Amoroso and Y.~N. Patt, ``Decision procedures for surjectivity and
  injectivity of parallel maps for tesselation structures,'' \emph{Journal of
  Computer and System Sciences}, vol.~6, pp. 448--464, 1972.

\bibitem{sutner1991bruijn}
K.~Sutner, ``De {Bruijn} graphs and linear cellular automata,'' \emph{Complex
  Systems}, vol.~5, no.~1, pp. 19--30, 1991.

\bibitem{sikdar2002design}
B.~K. Sikdar, N.~Ganguly, and P.~Pal~Chaudhuri, ``Design of hierarchical
  cellular automata for on-chip test pattern generator,'' \emph{IEEE
  Transactions on Computer-Aided Design of Integrated Circuits and Systems},
  vol.~21, no.~12, pp. 1530--1539, 2002.

\bibitem{FSSPCA3}
H.~Umeo, M.~Hisaoka, and T.~Sogabe, ``A survey on optimum-time firing squad
  synchronization algorithms for one-dimensional cellular automata,''
  \emph{International Journal of Unconventional Computing}, vol.~1, no.~4, pp.
  403--426, 2005.

\bibitem{mitch98a}
M.~Mitchell, ``Computation in cellular automata: A selected review,'' in
  \emph{Non-Standard Computation}.\hskip 1em plus 0.5em minus 0.4em\relax
  Weinheim: Wiley-VCH Verlag GmbH \& Co. KGaA, 1996.

\bibitem{Sarkar00Survey}
P.~Sarkar, ``A brief history of cellular automata,'' \emph{{ACM} Computing
  Surveys}, vol.~32, no.~1, pp. 80--107, 2000.

\bibitem{binder1993phase}
P.-M. Binder, ``A phase diagram for elementary cellular automata,''
  \emph{Complex Systems}, vol.~7, no.~3, pp. 241--247, 1993.

\bibitem{ninagawa2014classifying}
S.~Ninagawa and A.~Adamatzky, ``Classifying elementary cellular automata using
  compressibility, diversity and sensitivity measures,'' \emph{International
  Journal of Modern Physics C}, vol.~25, no.~03, p. 1350098, 2014.

\bibitem{KARI20053}
J.~Kari, ``Theory of cellular automata: A survey,'' \emph{Theoretical Computer
  Science}, vol. 334, no.~1, pp. 3 -- 33, 2005.

\bibitem{wolfram84b}
S.~Wolfram, ``Universality and complexity in cellular automata,''
  \emph{Physica}, vol.~10, pp. 1--35, 1984.

\bibitem{langton86}
C.~G. Langton, ``Studying artificial life with cellular automata,''
  \emph{Physica D: Nonlinear Phenomena}, vol.~22, no. 1--3, pp. 120--149, 1986.

\bibitem{copeland2004essential}
B.~J. Copeland, ``The {E}ssential {T}uring: {S}eminal {W}ritings in
  {C}omputing, {L}ogic, {P}hilosophy,'' \emph{Artificial Intelligence, and
  Artificial Life Plus the Secrets of Enigma. Oxford University Press, Oxford},
  2004.

\bibitem{Gotts09}
N.~Gotts, ``Ramifying feedback networks, cross-scale interactions, and emergent
  quasi individuals in conway's game of life,'' \emph{Artificial Life},
  vol.~15, no.~3, pp. 351--375, 2009.

\bibitem{banda2}
P.~Banda, ``Cellular automata evolution of leader election,'' in \emph{Advances
  in Artificial Life. Darwin Meets von Neumann}, ser. Lecture Notes in Computer
  Science, G.~Kampis, I.~Karsai, and E.~Szathmáry, Eds.\hskip 1em plus 0.5em
  minus 0.4em\relax Springer Berlin Heidelberg, 2011, vol. 5778, pp. 310--317.

\bibitem{Bersini94}
H.~Bersini and V.~Detours, ``Asynchrony induces stability in cellular automata
  based models,'' in \emph{In Proceedings of Artificial Life IV}.\hskip 1em
  plus 0.5em minus 0.4em\relax MIT Press, 1994, pp. 382--387.

\bibitem{Sughimura14}
N.~Sughimura, R.~Suzuki, and T.~Arita, ``Non-uniform cellular automata based on
  open-ended rule evolution,'' \emph{Artificial Life and Robotics}, vol.~19,
  pp. 120--126, (09) 2014.

\bibitem{LangtonII}
W.~Li, N.~H. Packard, and C.~G. Langton, ``Transition {P}henomena in {C}ellular
  {A}utomata rule space,'' \emph{Physica D: Nonlinear Phenomena}, vol.~45, pp.
  77--94, 1990.

\bibitem{Fates17}
N.~Fat{\`e}s, ``Diploid cellular automata: First experiments on the random
  mixtures of two elementary rules,'' in \emph{Cellular Automata and Discrete
  Complex Systems}, A.~Dennunzio, E.~Formenti, L.~Manzoni, and A.~E. Porreca,
  Eds.\hskip 1em plus 0.5em minus 0.4em\relax Cham: Springer International
  Publishing, 2017, pp. 97--108.

\bibitem{arrighi2013stochastic}
P.~Arrighi, N.~Schabanel, and G.~Theyssier, ``Stochastic cellular automata:
  Correlations, decidability and simulations,'' \emph{Fundamenta Informaticae},
  vol. 126, no. 2-3, pp. 121--156, 2013.

\bibitem{Wolfram86}
S.~Wolfram, "Cryptography with Cellular Automata", 1986.

\bibitem{Wolfram94}
S.~Wolfram, ``Cellular automata and complexity: Collected papers (1st ed.),'' \emph{CRC Press}, 1994.

\bibitem{Genaro13}
G.~J. Martinez, ``A note on elementary cellular automata classification,'' \emph{Journal of cellular automata}, vol.~8, pp. 233--259, 06 2013.

\bibitem{ROY2019600}
S.~Roy, ``A study on delay-sensitive cellular automata,'' \emph{Physica A: Statistical Mechanics and its Applications}, vol. 515, pp. 600--616, 2019.

\bibitem{Martinez12}
G.~J. Martinez, A.~Adamatzky, and R.~Alonso-Sanz, ``Complex dynamics of elementary cellular automata emerging from chaotic rules,''  \emph{International Journal of Bifurcation and Chaos}, vol.~22, no.~02, p. 1250023, 2012.

\bibitem{BoureFC12}
O.~Bour{\'e}, N.~Fat{\`e}s, and V.~Chevrier, ``Probing robustness of cellular automata through variations of asynchronous updating,'' \emph{Natural Computing}, vol.~11, no.~4, pp. 553--564, 2012.

\bibitem{jca/Fates14}
N.~Fat{\`{e}}s, ``Guided tour of asynchronous cellular automata,'' \emph{J.
  Cellular Automata}, vol.~9, no. 5-6, pp. 387--416, 2014.

\bibitem{ALONSOSANZ2005383}
R.~Alonso-Sanz, ``Phase transitions in an elementary probabilistic cellular
  automaton with memory,'' \emph{Physica A: Statistical Mechanics and its
  Applications}, vol. 347, no. Supplement C, pp. 383 -- 401, 2005.

\bibitem{jca/Fates09}
N.~Fat{\`{e}}s, ``Asynchronism induces second-order phase transitions in
  elementary cellular automata,'' \emph{J. Cellular Automata}, vol.~4, no.~1,
  pp. 21--38, 2009.

\bibitem{RAGHAVAN1993145}
R.~Raghavan, ``Cellular automata in pattern recognition,'' \emph{Information
  Sciences}, vol.~70, no.~1, pp. 145 -- 177, 1993.

\bibitem{Cochran1977}
W.~Cochran, ``Sampling techniques. 3rd ed,'' 01 1977.

\bibitem{Salzberg1994}
S.~Salzberg, ``C4.5: Programs for machine learning by j. ross quinlan. morgan  kaufmann publishers, inc., 1993,'' \emph{Machine Learning}, vol.~16, pp.  235--240, 09 1994.

\bibitem{Hopfield}
J.~J.~Hopfield, "Neurons with Graded Response Have Collective Computational Properties Like Those of Two-State Neurons," 1984.

\bibitem{NakataY}
Y.~Nakata, and H.~Yamaguchi, "Asynchronous cellular automata and their applications to epidemiological modeling," 2005.

\bibitem{Das95}
S.~R.~Das, A.~Grama, and V.~Kumar, "Parallel Processing of Discrete Event Simulation Programs," 1995.

\bibitem{PhysRevLett.74.5148}
M.~Land and R.~K. Belew, ``No perfect two-state cellular automata for density  classification exists,'' \emph{Physical Review Letters}, vol.~74, pp.  5148--5150, 1995.

\bibitem{Nakamura} K.~Nakamura, "Asynchronous cellular automata and their computational ability." \emph{Syst.-Comput.-Controls}. 5.5 (1974): 58-66.

\bibitem{NazmaTh}
N.~Naskar, ``{C}haracterization and {S}ynthesis of {N}on-{U}niform {C}ellular  {A}utomata with {P}oint {S}tate {A}ttractors,'' Ph.D. dissertation, {I}ndian  {I}nstitute of {E}ngineering {S}cience and {T}echnology, {S}hibpur, {I}ndia,  2015.

\bibitem{parallelsorting}
M.~Shakeri, H.~Deldari, H.~Foroughi, and A.~Rezvanian, ``Parallel sorting on linear cellular automata.'' 01 2008, pp. 377--383.

\bibitem{Sethi13}
B.~Sethi, R.~Souvik, and S.~Das, "Experimental study on convergence time of elementary cellular automata under asynchronous update." \emph{AUTOMATA 2013}, (2013): 87.

\bibitem{Boccara94}N.~Boccara, M.~Roger, "Some properties of local and nonlocal site exchange deterministic cellular automata" \emph{Intl J Modern Physics C}  5(03):581–588, 1994.

\bibitem{Newman99}M. E. J. Newman, and D. J. Watts, "Scaling and percolation in the small-world network model", \emph{Phys Rev E} 60(6):7332–7342, 1999.

\bibitem{Domosi}
P. Domosi, C. L. Nehaniv, "Algebraic theory of automata networks$:$ An Introduction," \emph{SIAM monographs on discrete mathematics and applications, Society for Industrial and Applied Mathematics}, ISBN: 978-0-89871-569-9, 2005.

\bibitem{Tomasssini06} M. Tomassini, "Generalized automata networks," \emph{In$:$ Proc. of Intl. Conf. on Cellular Automata, Research
and Industry}, ACRI 2006, France, Springer Berlin Heidelberg, pp 14–28, 2006.

\bibitem{Yang07}X. S. Yang, and Y. Z. Yang, "Cellular automata networks", \emph{In$:$ Proc. of Unconventional Computing, Luniver Press}, pp 280–302, 2007.

\bibitem{Jump74}J. R. Jump, J. S. Kirtane, "On the interconnection structure of cellular automata networks," \emph{Information Control}, 24:74–91, 1974.

\bibitem{Adami95}C. Adami, "Self-organized criticality in living systems", \emph{Physics Letters A}, 203(1):29 – 32, 1995.

\bibitem{Watts98}D. J. Watts, S. H. Strogatz, "Collective dynamics of ‘small-world’ networks", \emph{Nature} 393(6684):440–442, 1998.

\bibitem{Kayama}Y. Kayama, Y. Imamura, "Network representation of the game of life," \emph{J Artificial Intelligence and Soft Computing Research}, Vol. 1, No. 3:233–240, 2011.

\bibitem{Golze82}U. Golze, L. Priese, "Petri net implementations by a universal cell space", \emph{Information and Control}, 53(1 -
2):121 – 138, 1982.

\bibitem{Golze78}U. Golze, "(a-)synchronous (non-)deterministic cell spaces simulating each other," \emph{J Computer and System
Sciences} 17(2):176 – 193, 1978.

\bibitem{Hemmerling82}A. Hemmerling, "On the computational equivalence of synchronous and asynchronous cellular spaces," \emph{Elektronische Informationsverarbeitung und Kybernetik}, 18(7/8):423–434, 1982.

\bibitem{Ingerson84}T. E. Ingerson, R. L. Buvel, "Structure in asynchronous cellular automata," \emph{Physica D$:$ Nonlinear Phenomena}
 10(1 - 2):59 – 68, 1984.
 
\bibitem{Dennunzio13a}
A. Dennunzio, E. Formenti, L. Manzoni, and G. Mauri G, "m-asynchronous cellular automata: from fairness to quasi-fairness," \emph{Natural Computing}, 12(4):561–572, 2013a.

\bibitem{Fate2014}N.Fatès, "Guided tour of asynchronous cellular automata," \emph{J Cellular Automata}, 9(5-6):387–416, 2014.

\bibitem{Dennunzio12a}
A. Dennunzio, E. Formenti, and L. Manzoni, "Computing issues of asynchronous ca", \emph{Fundam Inf}, 120(2):165–180, 2012a.

\bibitem{Pighizzini94}G. Pighizzini, "Asynchronous automata versus asynchronous cellular automata," \emph{Theor Comput Sci} 132(1–2):179 – 207, 1994.

\bibitem{Droste2000}M. Droste, P. Gastin, and D. Kuske, "Asynchronous cellular automata for pomsets," \emph{Theor Comput Sci}, 247(1-
2):1–38, 2000.

\bibitem{Suzudo04}T. Suzudo, "Spatial pattern formation in asynchronous cellular automata with mass conservation", \emph{Physica
A: Statistical Mechanics and its Applications}, 343:185–200, 2004.

\bibitem{Pries86}W. Pries, A. Thanailakis, H. C. Card, "Group properties of cellular automata and VLSI applications," \emph{IEEE Trans on Computers} C-35(12):1013–1024, 1986.

\bibitem{Hortensius89a}
P. D. Hortensius, R. D. McLeod, and H. C. Card, "Parallel random number generation for VLSI systems using cellular automata," \emph{IEEE Trans on Computers}, C-38(10):1466–1473, 1989.


\bibitem{Das92}
A. K. Das, A. Sanyal, and P. P. Chaudhuri, "On characterization of cellular automata with matrix algebra", \emph{Information
Sciences}, 61(3):251 – 277, 1992.

\bibitem{Cattaneo09}G. Cattaneo, A. Dennunzio, E. Formenti E, and J. Provillard, "Non-uniform cellular automata", \emph{In$:$ Proc. of 3rd
Intl. Conf}. Language and Automata Theory and Applications, LATA, pp 302–313, 2009.

\bibitem{Dennunzio12b}
A. Dennunzio, E. Formenti, J. Provillard, "Non-uniform cellular automata: Classes, dynamics, and decidability", \emph{Inf Comput}, 215:32–46, 2012.

\bibitem{Dennunzio14a}
A. Dennunzio, E. Formenti, J. Provillard, "Three research directions in non-uniform cellular automata", \emph{Theor Comput Sci}, 559:73–90, 2014.

\bibitem{salo2014}V. Salo, "Realization problems for nonuniform cellular automata" \emph{Theor Comput Sci}, 559:91–107, 2014.

\bibitem{Atrubin65}
A. J. Atrubin, "A one-dimensional real-time iterative multiplier", \emph{IEEE Trans on Electronic Computers}, EC-14(3):394–399, 1965.

\bibitem{Nandi94}
S. Nandi, B. K. Kar, and P. P. Chaudhuri, "Theory and Application of Cellular Automata in Cryptography",\emph{IEEE
Trans on Computers}, 43(12):346–357, 1994.

\bibitem{Seredynski}
F. Seredynski, P. Bouvry, A. Y. Zomaya, "Cellular automata computations and secret key cryptography" \emph{Parallel Computing}, 30(5 - 6):753 – 766, 2004.

\bibitem{Rosin2006}P. L. Rosin, "Training cellular automata for image processing", \emph{Trans Img Proc}, 15(7):2076–2087, 2006.

\bibitem{Kazar11}
O. Kazar, S. Slatnia, "Evolutionary cellular automata for image segmentation and noise filtering using genetic algorithms" \emph{J Applied Computer Science and Mathematics}, 5(10):33–40, 2011.

\bibitem{Paul1999}K. Paul, D. R. Chowdhury, P. P. Chaudhuri, "Cellular Automata Based Transform Coding for Image Compression", |emph{In$:$ Proc. of Intl}. Conference on High Performance Computing (HiPC), pp 269–273, 1999.

\bibitem{Burks94}
C. Burks, D. Farmer, "Towards modeling dna sequences as automata",  \emph{Physica D$:$ Nonlinear Phenomena}, 10(1-2):157–167, 1994.

\bibitem{Ghosh07}S. Ghosh, N. Laskar, S. Mahapatra S, and P. P. Chaudhuri, "Probabilistic cellular automata model for identification
of cpg island in DNA string", \emph{In$:$ Proc. of Indian Intl. Conf. on Artificial Intelligence (IICAI)}, pp 1490–1509, 2007.
\end{thebibliography}
\end{document}